\newif\ifdraft
\newtheorem*{rep@theorem}{\rep@title}
\newcommand{\newreptheorem}[2]{%
  \newenvironment{rep#1}[1]{%
    \def\rep@title{\textbf{#2 \ref{##1}}}%
    \begin{rep@theorem}}%
    {\end{rep@theorem}}}
\definecolor{darkred}{rgb}{0.5,0,0}
\definecolor{lightblue}{rgb}{0,0.4,0.8}
\definecolor{darkgreen}{rgb}{0,0.5,0}
\definecolor{grey}{rgb}{0.5, 0.5, 0.5}
\newcommand{\poly}{\text{poly}}
\newcommand{\eps}{\varepsilon}
\newcommand{\calE}{\mathcal{E}}
\newcommand{\keq}{\kappa\xspace} 
\newcommand{\ksg}{\lambda}
\newcommand{\Topk}{\operatorname{\bf Top}}
\newcommand{\Min}{\operatorname{\bf ParallelMin}}
\newcommand{\Max}{\operatorname{\bf ParallelMax}}
\newcommand{\approxtopk}{\operatorname{\bf Approx--}\Topk}
\newcommand{\Threshold}{\operatorname{\bf Threshold-v}}
\newcommand{\OneRoundMax}{\operatorname{\bf OneRoundMax}}
\def\NewTheorem#1#2{%
  \newaliascnt{#1}{theorem}
  \newtheorem{#1}[#1]{#2}
  \aliascntresetthe{#1}
  \expandafter\def\csname #1autorefname\endcsname{#2}
}
 \newtheorem{theorem}{Theorem}[section]
\newtheorem{notation}{Notation}
\theoremstyle{remark}
\renewcommand{\Pr}[1]{\mathbb{P}\left(\,#1\,\right)}
\newcommand\E[1]{\mathbb{E}\left[\,#1\,\right]}
\newcommand{\makenote}[2]{{{\color{#1} #2}}}
\newcommand{\fnote}[1]{
	\ifdraft 
\makenote{blue}{FM: #1}\fi}
\newcommand{\silent}[1]{}
\newcommand{\claire}[1]{	\ifdraft \makenote{darkgreen}{Claire: #1} \fi}
\newcommand{\vincent}[1]{	\ifdraft \makenote{magenta}{Vincent: #1} \fi}
\newcommand{\inparallel}{\emph{in parallel}\xspace}
\newcommand{\Output}{\textbf{output}\xspace}
\newcommand{\ie}{{\it i.e.,}\xspace}
\newcommand{\eg}{{\it e.g.}\xspace}
\newcommand{\etc}{{\it etc.}\xspace}
\title{Instance-Optimality in the Noisy Value-and Comparison-Model}
\begin{document}

\author{Vincent Cohen-Addad \\ CNRS \& Sorbonne Universit\'e\\
\texttt{vcohenad@gmail.com}
\and Frederik Mallmann-Trenn\thanks{
This work was supported in part by NSF Award Numbers  CF-1461559, CCF-0939370, and CCF-1810758.}\\
MIT\\
\texttt{mallmann@mit.edu}
\and Claire Mathieu \\ CNRS\\
\texttt{claire.m.mathieu@gmail.com}}

\date{}

\maketitle
\thispagestyle{empty}

\begin{abstract}
Motivated by crowdsourced computation, peer-grading, and recommendation systems, Braverman, Mao and Weinberg [STOC'16] 
studied  
the \emph{query} and \emph{round} complexity of fundamental problems such as 
finding the maximum (\textsc{max}), finding all elements above a certain value (\textsc{threshold-$v$})
or computing the top$-k$ elements  (\textsc{Top}-$k$) in a noisy environment.

For example, consider the task of selecting papers for a conference.
This task is challenging 
due the crowdsourcing nature of peer reviews:
 the results of reviews are noisy and it is necessary
to parallelize the review process as much as possible.
We study the noisy value model and the noisy comparison model:
In the \emph{noisy value model}, a reviewer is asked to evaluate a single element:  ``What is the value of paper $i$?''  (\eg accept).
In the \emph{noisy comparison model} (introduced in the seminal work of Feige, Peleg, Raghavan and Upfal [SICOMP'94]) 
a reviewer is asked to do a pairwise comparison: ``Is paper $i$ better than paper $j$?''

In this paper, we show optimal worst-case query complexity  for the \textsc{max},\textsc{threshold-$v$} and \textsc{Top}-$k$ problems.
 For \textsc{max} and \textsc{Top}-$k$, we obtain optimal worst-case upper and lower bounds on the round vs query complexity
 in both models. For \textsc{threshold}-$v$, we obtain
optimal query complexity and nearly-optimal round complexity (\ie optimal up to a factor $O(\log \log k)$, 
where $k$ is the size of the output)  for both models.

We then go beyond the worst-case and address the question of the importance of knowledge of the instance by providing,
for a large range of parameters, instance-optimal algorithms with respect to the query complexity.
 We complement these results by showing that for some family of instances, no instance-optimal algorithm can exist. 
Furthermore, we show that the value-and comparison-model 
are for most practical settings asymptotically equivalent (for all the above mentioned problems); 
on the other hand, in the special case where the papers are totally ordered, we show that the value model 
 is strictly easier than the comparison model.

\end{abstract}

\newpage
\thispagestyle{empty}

 \tableofcontents 

%
%
%

\newpage
\section{Introduction}
%
%
\setcounter{page}{1}

Computing with noisy information is a fundamental problem in
computer science. Since the seminal work of
Feige, Peleg, Raghavan and Upfal~\cite{FRPU94},
there have been a variety of algorithmic results  in the context of noisy operations, arising from both practitioners and theoreticians, on
classic problems such as finding the maximum (\textsc{max}) or
the top-$k$ elements (\textsc{top-$k$})~\cite{FRPU94,BMW16,chen2015spectral,busa2013top,eriksson2013learning}.

There has recently been renewed interest for those
problems, motivated by problems arising in rank aggregation,
crowdsourcing,
peer-grading and recommendation
systems. In such systems, a user may be asked to give
a grade (for example using the popular ``five-star'' rating), a \emph{value query};
or they may be asked to perform a
comparison between two data elements, a \emph{comparison query} .
In addition to  finding the maximum or the top-$k$ elements, 
these systems may also aim to find all elements with values greater than some threshold value $v$
(\textsc{threshold}-$v$). The problem of finding the top-$k$ elements
of a list of distinct elements, \textsc{rank-$k$} (a special case of \textsc{top-$k$}),
has also been widely studied (\eg~\cite{GS10,newman2004computing,fagin}). 
The outcome of any such query is very noisy. The most basic noise model 
\citet{FRPU94} assumes that queries fail 
independently with some constant probability.

In addition, rounds of interactions between
users making queries and the system that selects the queries---that should be
performed next---are costly in practice. For instance, Braverman
et al.~\cite{BMW16}
pointed out that communicating to the users the new tasks they must
perform creates a computational bottleneck.
It is thus crucial for the
application to deal with noisy queries and to
minimize
both the total number of queries and the total number
of rounds of interactions.

Hence, Braverman et al.~\cite{BMW16} considered
the \emph{round complexity}
(also, for example~\cite{GS10} and~\cite{newman2004computing} 
before them)
of the maximum and the \textsc{top-$k$} and \textsc{rank-$k$} problems
in different models of comparisons 
and in particular in the noisy comparison model. An algorithm has round complexity $r$ if its queries can be partitioned into a sequence of $r$ batches, where the queries of a batch only depend on the answers received for the previous batches.
Braverman et al.~\cite{BMW16} provided lower bounds on the number of rounds
of interactions that are needed in order to find
the largest element of a set when comparisons 
can fail with constant probability. See \autoref{S:related} for 
more details.
\smallskip

\paragraph{Our results.} In this paper, we provide a careful analysis of the round and query
complexity of \textsc{max}, \textsc{top-$k$}, \textsc{threshold}-$v$, and \textsc{rank-$k$} 
in both the worst-case and instance-optimal scenarios for both the noisy comparison and noisy value models.
We provide lower and upper bounds for the trade-off between round and query complexity. Our algorithms 
for \textsc{max}, \textsc{top-$k$}, and \textsc{bounded-threshold-$v$}\footnote{Here the number of distinct values is bounded by $n^{1-\varepsilon}, \varepsilon>0$ which is the case in many practical settings such as rating movies, papers, \etc.} are optimal w.r.t. the query complexity 
and optimal up to a factor $\log \log k$ w.r.t. the round complexity, where $k$ is the size of the output. 
This is a significant improvement
over previous work.
As a byproduct, we show that  the noisy value model and the noisy comparison model are in 
many settings of interest  essentially equivalent w.r.t. the query complexity. 
On the other side, we show a separation between the two models 
for the \textsc{rank-$k$} problem.


We go one step further and give 
fine-grained upper and lower bounds on the query complexity
through the classic notion
of \emph{instance-optimality} (see \autoref{sec:prelims}
for a formal definition). The algorithm frequently has additional information about some features of the input; for example, it might have, from prior experience, an estimate of the average value of the items, or some information about the distribution of those values. However, the address at which the elements are stored in memory bears no connection to their values, so, even if the algorithm has full knowledge of the set (or multiset) of values given as input, it still has the task of finding where the values of interest are stored. An algorithm is instance-optimal if it has no prior knowledge of the distribution, yet has complexity equal, up to a constant factor, to that of the best algorithm with full prior knowledge of the set (or multiset) of values.  For several problems, we provide algorithms that are instance-optimal with respect to query complexity---\ie showing that knowing the instance in advance, up to a permutation, yields no benefit.
We complement these results by showing that for some family of instances, no instance-optimal algorithm exists. 
%
In more detail, we show the following:

\begin{itemize}
\item \textbf{Worst-Case Bounds for \textsc{max},  \textsc{threshold}-$v$, \textsc{top-$k$}, and  \textsc{rank-$k$}:} 
As a first step, we give optimal
  bounds for the round vs query complexity on
  \textsc{max} (\autoref{lem:topksidekicks}
  and~\autoref{theorem:maxLB}),  
  in both noisy-value and noisy-comparison models.
  
  For the \textsc{threshold}-$v$ problem, we provide optimal
  bounds on the query complexity and nearly-optimal 
  bounds on the round complexity
  (optimal up to a factor $O(\log \log k)$ where
  $k$ is the size of the output) in the value-queries
  model\footnote{Note that the \textsc{threshold}-$v$
    problem is not well-defined
    in the comparison model} (\autoref{pro:mightyUB}
  and~\autoref{pro:mightyLB}). 

  For \textsc{top-$k$},
  we show that any algorithm
  with success probability $2/3$ that finds the \textsc{top-$k$}
  elements of an $n$ elements set in $r$ rounds
  requires $\Omega(n \log (kb))$ queries, where
  $b$ satisfies $r > \log^*_{bk}(n) - 4$
  (\autoref{thm:kmaxlower}).
  We provide an optimal algorithm with query 
  complexity $O(n \log(kb))$ 
  and round complexity $r+O(1)$
  (\autoref{thm:topksidekicks}).
  
  Similar bounds hold for 
  \textsc{rank-$k$} in the comparison model (\autoref{thm:topksidekicks}).
  This generalizes the 4-round $O(n \log n)$ algorithm
  of \citet{BMW16}.
  On the other hand, the query complexity is $O(n+k \log n)$ in the value model   (\autoref{thm:rankkupper}).
  Our  lower bounds give the first 
  trade-off between round and query complexity for the problems. 
  See \autoref{tbl:querycomplexity} for a summary.

    \FloatBarrier

    \hspace{2cm}
  \begin{table}[!htbp]
   \centering 
\begin{tabular}{|l|l|l|}
\hline
                     & \textbf{Query Complexity} & \textbf{References} \\ \hline
\textsc{Max}         & $\Theta(n)$   &       folklore,     \autoref{lem:topksidekicks}, \autoref {theorem:maxLB}       \\ \hline
\textsc{threshold}-$v$  &    $\Theta(n \log k_v)$                 &        \autoref{pro:mightyUB}, \autoref{pro:mightyLB}                 \\ \hline
\textsc{top-$k$}     &        $\Theta(n\log k)$                   &       \autoref{thm:topksidekicks}, \autoref {thm:kmaxlower}              \\ \hline
\textsc{rank-$k$}  (comparison model)  &           $\Theta(n\log k)$                       &       \autoref{thm:topksidekicks},        \autoref{thm:uniquelower}      \\ \hline
\textsc{rank-$k$}  (value model)  &           $O(n+k\log n)$                       &       \autoref{thm:rankkupper}        \\ \hline 
\end{tabular}
\caption{Query complexity (upper and lower bounds on maximum number of queries) to ensure correctness probability at least $2/3$.  For \textsc{Max} and \textsc{top-$k$}  the bounds hold in both the value model and the comparison model. For \textsc{threshold}-$v$,  $k_v$ denotes the number of elements with value at least $v$, and the bounds are for the value model (the problem is not well-defined in the comparison model). We assume $1\leq k,k_v\leq n/2$ for better readability. }
\label{tbl:querycomplexity}
\end{table}
  \FloatBarrier

\item \textbf{Approximation Algorithms:} 
  To bypass our lower bounds for the  worst-case scenario and remove the dependency on $k$, we initiate the study of approximation algorithms
  for these problems.
  For the \textsc{top-$k$} problem, we provide an algorithm
  that returns $k$ elements
  among the top-$(1+\epsilon)k$ elements using
  $O(\log_{{1}/{\varepsilon}}^* n)$ 
  rounds and $O(n \log(\frac{1}{\eps}))$ queries
  with probability $2/3$.
  Our lower bounds extend to this setting and show that
  this bound is tight, for a large range of parameters.

  
  
\item \textbf{Instance-Optimal Bounds:}
    For the \textsc{Max} problem,
  we show that any algorithm that has prior knowledge of the
  instance except for the actual permutation of the input
  elements (we call such an algorithm a non-oblivious algorithm)
  and that finds the unique maximum with success probability
  at least $2/3$ requires $\Omega(n)$
  queries. We provide an oblivious algorithm (\ie with no
  prior knowledge of the instance) whose query complexity, $O(n)$, 
  matches the same bound (\autoref{lem:topksidekicks}
  and~\autoref{theorem:maxLB}).

   For the \textsc{threshold}-$v$ problem we provide 
  an oblivious algorithm,
  that makes $O(n \log k)$ queries (where here again $k$ is the
  size of the output).
  The round complexity is $O(\log \log k \cdot \log^*n)$
  (\autoref{cor:threshold-v:instanceopt}).
  We show that in this general setting, no oblivious instance-optimal
  algorithm exists (\autoref{thm:nofreelunch}).
  %
  We observe that in several applications---such as peer-review processes,
  grading students or 
  evaluating the  quality of a service---the grades can take a constant 
  number of values. Thus, we consider the problem of
  identifying all the elements with value greater than $v$ in an
  $n$ elements set whose values are taken in $[\ell]$,
  $\ell = O({n}^{1-\varepsilon})$, which we call
  the \textsc{bounded-value threshold-$v$} problem.
  We  show that in this setting
  any non-oblivious algorithm with success probability at least $2/3$
  requires                   
  $\Omega(n \log k)$
  queries, where $k$ is the output size (\autoref{pro:mightyLB}).
  This shows that our oblivious algorithm is instance-optimal
  for the \textsc{bounded-value threshold-$v$} problem.

  We then provide a careful analysis of the instance-optimal complexity
  of the classic \textsc{top-$k$} problem.
  We show for a large range of parameters for $k,\kappa$ and $\ell$ the following results:
  \begin{itemize}
  \item Any non-oblivious algorithm that solves the \textsc{top-$k$} problem with probability at least
  2/3, and with an unbounded number of rounds, requires at least $\Omega(n \log (\lambda + \kappa/(s+1)))$
  queries, where $\lambda$ is the number of elements whose values are greater than the $k$th value, $\kappa$
  is the number of elements whose values are equal to the $k$th value and $s = \lambda + \kappa - k$ (\autoref{theorem:LB}).
  \item We provide a nearly-instance-optimal oblivious algorithm; We give an algorithm with query
  complexity $O(n \log (\lambda + \kappa/(s+1)) + k \log k)$ and round complexity
             $O(\log^*n \cdot \log \log (\lambda + \kappa/(s+1)))$ (\autoref{theorem:obliviousalgorithm}).
  \end{itemize}


  


\item \textbf{Separation Between the Value- and the Comparison Model:}
  We consider the\textsc{rank}-$k$ problem in the value and comparison
  models. Interestingly, we show a separation between the two 
  models; the value model is strictly easier than the comparison
  model.
  For the value model, we give a  algorithm making
  $O(n+k\log n)$ queries (\autoref{thm:rankkupper}).
  For the comparison model, we give an instance-optimal
  algorithm making $\Theta(n \log k)$ queries
  and show that this is tight
  (\autoref{thm:uniquelower}), no matter the number of rounds. 

\end{itemize}


For example, the practical implications of our work for the review process, 
in our model, are as follows:
Finding the \textsc{top-$20$} papers can be done in
$r$ rounds, and the number $q$ of reviews is at most 
$O( n\underbrace{\log\log\cdots \log}_{r} n)$
and this is optimal.
On the other hand, if one seeks to find the top-$k$ papers among the 
top-$k(1+\varepsilon)$ papers, for constant $\varepsilon>0$, then we show how this can be done in
average constant number of queries (reviews) per paper using 
$\log^* n+O(1)$ review rounds.

\paragraph{Technical Contributions.}
Our algorithms are  simple  and easy to implement, even in a distributed
environment. They rely on carefully designed divide-and-conquer procedures and build upon each other. One particularly
interesting algorithm is $\approxtopk$. It has an optimal query complexity despite having to estimate
two crucial parameters simultaneously. 

Nonetheless, our main contributions are our lower bounds. To understand
where
our contributions lie, consider the following classical approaches
for the design of lower bounds in the comparison model (e.g., \cite{FRPU94}).
The authors design two worst-case instances such that
any algorithm has to distinguish between them in order to find a correct output.
By design of the instances, this requires many queries.

This approach does not extend to  instance-optimal lower bounds since (1) any algorithm that has prior knowledge of the
instance does not have to pay the price to distinguish the instance at
hand from any other instance\footnote{Compare to our definition of instance optimality \autoref{def:ionew}.}\fnote{can someone add a meaningful sentence?},
and (2) the lower bound should hold for any instance, not only
a particularly hard instance. Designing
instance-optimal lower bound thus becomes a harder challenge: the lower bound on the
number of queries should now come from the problem of identifying the correct output among the
input set of elements $X$. Moreover, for this lower bound to apply
to \emph{any} instance, the hardness
has to hold no matter what the structure of
X is (there could for example be multiple elements sharing the same
value, etc.).

To bypass this barrier, we make use of several ingredients.
The first step,  is is to move to a
more general setting which forces the algorithm to work in
many \emph{phases} (generalizing the two-phase approach of inspired by~\cite{FRPU94}).
After each phase we characterize (probabilistically) the knowledge of the algorithm;
to characterize the knowledge in a compact way, to do so, we make use of
various tools such as for example the 
``little-birdie'' principle. 
\fnote{remove?} \vincent{changed, what do you think?}

Our goal is to show that if the algorithm does not
make enough queries, it does not have enough information to identify
the elements of the output among the entire input set: with decent probability there are multiple elements for which the algorithm has received exactly the same information and is hence forced to guess the correct output.

 

Another major challenge is that the lower bound has to hold for
any instance. For non-instance-optimal lower bounds, it is enough
to define a family of instances and give the adversary's strategy (``lying scheme'')
on these instances. For instance-optimal lower bounds one has to
design an efficient strategy for the adversary for each instance.
We achieve this thanks to the precise analysis of the information the
algorithm has during each phase. 
\medskip
In addition to developing new lower bounds techniques in the value model,
we also generalize
lower bounds in the comparison models.
Our approach is to generalize the technique of \cite{BMW16}, which was developed to show
a lower bound on the query complexity of finding the maximum element, to hiding $k$ elements. 
The core-idea of the approach in \cite{BMW16} is to consider the comparison tree $L$ and to consider the joint distribution
of given input permutation and $L$: For every correct output, there are many likely input permutations that would have yielded an identical output,
which would be incorrect. We show how their approach can be generalized to more complex permutations allowing is to use it to bound \textsc{rank-$k$}.
%
%

\subsection{Previous Work}
\label{S:related}
There is a wide literature on computing with noisy information.
Early works include results on networks with noisy gates (see \eg \cite{pippenger,BollobasB90}).
There is also a large body of work on the complexity of noisy decision trees (see \eg~\cite{pipp2,Kenyon1992}).
Reviewing this literature is beyond the scope of this paper, we thus detail the literature 
that is the closest to the topic at hand.

The seminal paper of \citet{FRPU94} initiated a long line of research on algorithms
that make noisy queries.
\citet{FRPU94} provided an algorithm to \textsc{top-$k$} with $O(n\log k)$-query complexity and $\Theta(\log n + k)$-round complexity. 
To prove the lower bound they give a specific instance and showed its hardness.
Furthermore, they showed that in the context of instances where values are binary, for any $k$ there exists such an instance, and so
 any algorithm requires at least $\Omega(n\log k)$ queries to output the correct top $k$ elements w.p. at least $2/3$.

The round complexity for these problems has also been studied. 
The earliest work on algorithms for finding the $i$'th element of
size-$n$ array
while minimizing the number of rounds and using noiseless comparisons are
due to~\cite{valiant1975parallelism,BollobasB90,4568241}. 
This problem together with the \textsc{max} and \textsc{top-$k$} has been successively studied (see~\cite{Leighton,}).
Some of these works (\eg~\cite{Gallager:2006,Feige2000}) are about computing some Boolean functions. 
In these scenarios, comparison and value queries are very similar.
Our problems share some similarity with the problem of sorting in a noisy parallel environment with concern for resampling
(see \eg~\cite{BM08,NIPS2011_4428,Makarychev:2013}).
Closely related to our problem, \citet{newman2004computing} asked whether there is a noisy decision tree for computing
the Boolean function OR using $O(n)$ queries and $O(1)$ rounds. \citet{GS10} showed that any noisy Boolean decision tree
for \textsc{max} (and so solving OR as well) using $ r \le \log^* n - O(1)$ rounds requires
$\Omega(n \underbrace{\log\log\cdots \log}_{r} n)$ queries in the worst-case scenario. They also provided an algorithm 
making $O(\log^* n)$ rounds and $O(n)$ queries. Note that our results for the \textsc{max} immediately improves
upon their bounds: our upper bound gives a trade-off between round and query complexity for any number of round and our lower
bound also applies for $r \le \log^* n$.


Recently, \citet{BMW16} 
showed that for $n$ strictly ordered (\textsc{$k$-rank}) elements the required query complexity is $\Omega(n\log n)$ queries to find the maximum w.p. at least $1-1/\poly(n)$.
Furthermore, they give an algorithm finding the $k$-Partition using $O(n \log n)$ queries and constant round complexity which is quintessentially a corollary of the noiseless case. Their work also extends to other models such as the ``erasure'' model and the noiseless model.

\medskip

There is also a variety of work on problems in either incomparable or
more general models
\cite{Rajkumar,Davidson,chen2015spectral,busa2013top,eriksson2013learning,stz2017}.
In more general models, the lower bounds do not apply to our model and the upper bounds
obtained are not competitive with ours.
We review the work on models that are the closest to ours.
Recently \citet{agarwal17c} study the Partition problem with noisy comparisons using a reduction to the coin tossing problem that replaces "Toss coin $i$"  by "Compare element $i$ to a random other element, call the result {\em Heads}  if $i$ wins". In our noisy comparison model that reduction is singularly inefficient, leading to an algorithm for Partition with query complexity larger than $n^3$ in the worst-case and lower bounds that do not
apply to our problem.
There is also a large body of related work (\eg \citet{ABM10,GGLB11,KCG16}) on best arm identification in multi-armed bandit settings, which is related yet very different: 
Rewards of multi-armed bandits are drawn according to arbitrary probability distributions and one seeks the find the machines
with the highest expected reward.
Our noisy setting is a more worst-case assumption, where w.p. $1/3$ the adversary gets to lie arbitrarily and adaptively; even if the 
adversary would always lie using the same value, the optimal solutions in both settings (value model and multi-arm best arm identification) are in general different. 
Another example is the skyline problem 
in a noisy context that has been studied recently \cite{GM15,MMV17}.
Very recently, \citet{CLM17} consider the setting of distinct values and instead 
of performing pair-wise queries one can query the maximum of a set of $\ell$ elements. 
They provide instance-optimal bounds, which in our special case of $\ell=2$ are not tight. They 
also require that the output solution is unique.

The notion of instance-optimality was introduced in the seminal work of \citet{fagin}.
Since then, it has been used to analyze popular heuristics or design
better algorithms, see for example~\cite{Afshani3046673,valiant2015instance,Baran}.
There has been recent work on nearly instance-optimal bound for best-$k$ arm identification
as well~\cite{chen17a,CLM17}.

 
%
%
%
\subsection{Preliminaries}\label{sec:prelims}
%
%

\subsection*{Notation and Definition of the Models}
The instances of size $n$ of our problems are sequences of $n$
elements that come from some multiset $V=\{v_1, v_2, \dots, v_\ell\}$ with $\ell\leq n$ values.
An  algorithm is {\em oblivious} if it does not know the multiset $V$ ahead of time.
All logarithms are to the base of $2$ unless stated otherwise.

\begin{notation}
  Let $ \log_b^*(n)$ be the least integer $i$ such that $a_i \le 0$
  in the sequence defined by $a_0=n$ and $a_{i+1}=\log_b a_i$.
\end{notation}

\begin{notation}[Tower functions] Let $b>1$.
Let $b\uparrow\uparrow i$ be the number $b_i$ in the sequence defined by $b_1=b$ and $b_{j+1}=b^{b_j}$.
 Let $\zeta^{b,\delta}_i$ be $(b/\delta) \uparrow \uparrow i$.
It's worth pointing out that $b\uparrow\uparrow i$ is the inverse function of $\log^*_b(i)$. 
\end{notation}

\begin{definition}
  We define the \emph{noisy value model} as follows.
  Given a set $S$ of $n$ elements, the algorithm has access to elements of $S$ via a query oracle. 
  To answer a query about element $i$,  the 
  oracle, with probability $2/3$, returns the true value of $i$,
  and with the remaining probability, returns an arbitrary value.
\end{definition}

\begin{definition}
  We define the \emph{noisy comparison model} as follows.
  Given a set $S$ of $n$ elements, the algorithm 
  has access to elements of $S$ via a query oracle.
  The query oracle only answers
  queries of the following form
  ``is the value of element $x$ greater or equal than
  the value of element $y$?''.
  For a given query, the 
  oracle, with probability $2/3$, returns the correct answer
  to the query,
  and with the remaining probability, returns an arbitrary answer.  
\end{definition}

The following lemma states that any algorithm with constant success probability
for the noisy comparison model can be emulated by an algorithm in the noisy value model by only losing a constant factor in
the query complexity. Using this reduction, all our lower bounds for the noisy
value model apply to the noisy comparison model (up to a constant factor in the query complexity).
The proof can be found in \autoref{sec:relationship}. 
\begin{lemma}[From value to comparison queries]\label{lemma:reduction}
     If there exists an algorithm $A$ solving a problem $P$ in the noisy comparison model with
     query complexity $q$ and round complexity $r$ with correctness probability at least $1-\delta$,
     then there exists an algorithm $B$ solving $P$ in the noisy value model with query complexity 
     $9q$, round complexity $r$, and correctness probability at least $1-\delta$. 
\end{lemma}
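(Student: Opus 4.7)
The plan is to construct $B$ by replacing each noisy comparison query made by $A$ with a short sequence of noisy value queries whose outcomes, processed appropriately, mimic a valid noisy comparison answer. Concretely, for each comparison query ``is $v_x \geq v_y$?'' that $A$ would issue, algorithm $B$ queries $x$ and $y$ in parallel a constant number of times (at most $9$ queries in total), aggregates the responses for each element to obtain estimates $\widetilde v_x$ and $\widetilde v_y$ (e.g., by taking the mode), and returns the bit $[\widetilde v_x \geq \widetilde v_y]$ to $A$ in place of the oracle's answer.

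The heart of the argument is showing that each simulated comparison is correct with probability at least $2/3$, so that $B$ presents a legitimate noisy comparison oracle to $A$. If we query an element $m$ times, each response independently equals the true value with probability at least $2/3$, so by a direct binomial calculation the probability that the mode differs from the true value is at most $\mathbb{P}(\mathrm{Bin}(m,2/3)\le m/2)$, which decays exponentially in $m$. Splitting the budget of $9$ queries between $x$ and $y$ and applying this bound to both sides, one verifies that both estimates are simultaneously correct with probability at least $2/3$; since the worst-case adversarial noise can only mislead the simulation via an incorrect estimate, this yields the desired guarantee. Pinning down the specific constant $9$ via this binomial calculation is the main technical step.

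Conditional on the simulator being a valid noisy comparison oracle, $A$'s correctness guarantee applies unchanged and the output of $B$ is correct with probability at least $1-\delta$. The round complexity is preserved because the value queries simulating a single round of $A$ are non-adaptive (all comparisons in that round depend only on earlier rounds) and can therefore be performed in parallel within one round of $B$, yielding round complexity exactly $r$. The total query complexity is $9q$ since each of the $q$ comparisons of $A$ triggers at most $9$ value queries. The only real obstacle is the binomial analysis justifying the constant $9$; the rest is bookkeeping.
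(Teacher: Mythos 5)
Your simulation strategy is the same as the paper's: replace each comparison ``$x\geq y$?'' by repeated value queries to $x$ and $y$, take the majority/mode answer for each, and feed the resulting bit to $A$; correctness then follows because $A$ sees a legitimate noisy comparison oracle, and rounds are preserved because the value queries within one round of $A$ are non-adaptive. However, the one step you explicitly defer --- ``pinning down the specific constant $9$ via this binomial calculation'' --- is exactly where your version breaks. You propose to split a budget of $9$ value queries \emph{in total} between $x$ and $y$. With $m$ queries to a single element, the worst-case adversary always lies with the same wrong value, so the mode is correct only if strictly more than $m/2$ answers are truthful (ties broken adversarially), i.e.\ with probability $\Pr{\mathrm{Bin}(m,2/3)>m/2}$. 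For the best split $5+4$ this gives $192/243\approx 0.79$ for one element and $48/81\approx 0.59$ for the other, so both estimates are simultaneously correct with probability about $0.47$, well below $2/3$; and since the adversary chooses the lie values, a single wrong estimate suffices to flip the comparison. So the claim that $9$ total queries yield a $2/3$-correct simulated comparison is false.

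The paper instead queries each of $x$ and $y$ nine times (so $18$ value queries per comparison; the constant $9q$ in the lemma statement is the paper's own undercount), which makes each majority correct with probability at least $\Pr{\mathrm{Bin}(9,2/3)\geq 5}\geq 17/20$ and hence both correct with probability at least $(17/20)^2>2/3$. To repair your argument you need $\Pr{\mathrm{Bin}(m,2/3)>m/2}\geq\sqrt{2/3}\approx 0.817$ per element, which first holds at $m=7$ ($1808/2187\approx 0.827$), i.e.\ at least $14$ value queries per comparison. Everything else in your write-up (validity of the simulated oracle, preservation of correctness probability, parallelization within rounds) is fine and matches the paper.
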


Therefore, in the
rest of the paper we will focus on lower bounds for the noisy value model. In the same spirit, all our upper bounds are in the noisy comparison model\footnote{With exception of the \textsc{threshold}-$v$ problem  which is not defined in the comparison model.}. 
As mentioned earlier, our results imply the following.
Generally speaking there is no reduction in the other direction---from the noisy comparison model to the
noisy value model--- without losing a super-constant factor in the query complexity. Nonetheless, in many settings of interest we prove, by means of proving matching bound, that the query complexity is up to constants the same.


\subsection*{Instance-Optimality}
Instance-optimality was introduced by Fagin, Lotem and Naor in their 
seminal paper~\cite{fagin}. 
They originally developed this notion a remedy to worst-case analysis
based on the empirical observation that for several problems most
practical instances can be solved efficiently
and very few artificial instances that are computationally `hard'.
Therefore, we would wish that an algorithm is able to recognize whenever 
it is given a considerably easier input.
This is formalized in the concept of instance-optimality:
An algorithm is called instance-optimal if for all instances $I$, it 
is asymptotically as  `efficient' (\eg, query complexity, runtime, \etc) 
as the most efficient algorithm for instance $I$.
Of course, if the algorithm knows $I$ in advance, then it can output the 
correct solution immediately. Thus, we focus on algorithms that
know the instance up to a permutation of the input elements. 

Thus, an instance-optimal algorithm is asymptotically the best possible 
algorithm one could hope for (w.r.t. to the measure of interest). The 
notion has been widely-used to provide a fine-grained 
analysis of various algorithms (see, \eg \cite{Afshani3046673}).

We now sate the concept more formally.
 Let $\mathcal{I}$
be a class of inputs for a given problem.
 Let $\mathcal{A}$ denote the set of algorithms that are, on every input, correct with probability at least $2/3$. 
 \begin{definition}\label{def:ionew}
 An algorithm $B$ is \emph{instance-optimal} if for every input $J \in \mathcal{I}$, the output is correct with probability at least $2/3$, and 
the complexity is 
    \[ O\left(\inf_{A \in \mathcal{A}} \sup_{\text{$I$ permutation of $J$}} (\text{expected complexity  of $A$ on $I$})\right). \]
 \end{definition}
 
 Thus, for instance-optimality, we compare $B$ to an algorithm $A$ that is allowed to have arbitrarily high query complexity on inputs that are not permutations of $J$: that does not affect the above quantity. 
 
 For comparison with the usual notion of optimality, note that $B$ is worst-case optimal if for every input $J$, the output is correct with probability at least $2/3$ and the complexity is 
                      \[ O(\inf_{A \in \mathcal{A}} \sup_{I\in \mathcal{I}} (\text{expected complexity  of $A$ on $I$})). \]

 \begin{remark}
Our definition of instance-optimality is in line with the original definition from  \cite{fagin} who write: ``the cost of an instance-optimal algorithm is essentially the cost of the shortest proof''. In contrast, the recently more popular variant of \cite{Afshani3046673} allows $A$ to be incorrect on inputs that are not permutations of $J$: 
                     \[O\left(\inf_{\substack{\text{$A$ correct w.p. $2/3$} \\ \text{for permutations of $J$}}} \sup_{\text{$I$ permutation of $J$}} (\text{expected complexity of $A$ on $I$})\right). \]
Thus, for example for the problem of finding the maximum, in the case of an input $J$ whose elements are all equal,\cite{Afshani3046673}] compares $B$ to an algorithm with query complexity $0$ on $J$, whereas for us (and for \cite{fagin})  algorithm $A$ still needs to query for verification purposes, since it is required to be probably correct on all inputs.  

 \end{remark}

\begin{lemma}\label{lem:Omegan}
Any algorithm $A$ for any of the following problems \textsc{max}, \textsc{Threshold}, \textsc{Top-$k$}, or \textsc{Rank}-$k$,
that is correct with probability at least $2/3$ on all instances, is such that, $T(A,I)=\Omega(n)$ for any instance $I$. 
\end{lemma}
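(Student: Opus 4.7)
The plan is to show directly that every element of $I$ must be queried a constant number of times in expectation by any globally correct $A$. Fix a constant $c>6$ (say $c=100$) and call element $x$ \emph{rare} on $I$ if $q_x^I := \mathbb{E}\bigl[\text{\# queries $A$ makes to $x$ on $I$}\bigr] < 1/c$. It suffices to prove that no element of $I$ is rare, since then $T(A,I)=\sum_x q_x^I \geq n/c = \Omega(n)$.

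The first step is a coupling lemma. For any instance $I'$ that differs from $I$ only in the value at position $x$, couple the two executions of $A$ by using identical internal randomness and by coupling the noisy oracles so that every non-$x$ query receives the same response on both instances (this is possible because non-$x$ elements have the same true value in $I$ and $I'$, and the adversarial portion of the oracle may be taken as a function of the observed transcript, which is identical on $I$ and $I'$ until the first divergence). Under this coupling, the two transcripts coincide until the first query to $x$, so by Markov's inequality
\[
\bigl|\,\Pr{x \in A(I)} - \Pr{x \in A(I')}\,\bigr| \;\leq\; \Pr{x \text{ is ever queried on } I} \;\leq\; q_x^I.
\]

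For the second step, suppose $x$ is rare. Construct two perturbations of $I$: let $I^+_x$ be obtained by setting $v'_x$ strictly greater than every value in $I$ (or $v'_x \geq v$ for \textsc{threshold}-$v$), and let $I^-_x$ be obtained by setting $v'_x$ strictly smaller (resp.\ $v'_x<v$). For each of \textsc{max}, \textsc{threshold}-$v$, and \textsc{top}-$k$ or \textsc{rank}-$k$ with $1 \leq k \leq n-1$, $x$ lies in every correct output on $I^+_x$ and in no correct output on $I^-_x$. Hence, by correctness of $A$, $\Pr{x \in A(I^+_x)} \geq 2/3$ and $\Pr{x \in A(I^-_x)} \leq 1/3$ (for \textsc{max}, replace the event $\{x \in A(\cdot)\}$ by $\{A(\cdot)=x\}$). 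Applying the coupling lemma to both perturbations yields
\[
\tfrac{2}{3} - \tfrac{1}{c} \;\leq\; \Pr{x \in A(I)} \;\leq\; \tfrac{1}{3} + \tfrac{1}{c},
\]
a contradiction for $c>6$. The lemma follows.

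The main subtlety I anticipate is the careful coupling of the adversarial portions of the two noisy oracles: one must verify that the adversary for $I'$ can be made to mirror the adversary for $I$ exactly up to the first query to $x$, which relies on the fact that both adversarial strategies can be taken as functions of the observed transcript (identical on the two instances until divergence). Once this coupling is in place the rest of the argument is essentially immediate; the degenerate parameter ranges ($k\in\{0,n\}$, or $v$ outside the range of feasible values) cause no real difficulty and are handled with essentially the same perturbation.
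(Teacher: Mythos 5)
Your proof is correct and rests on the same single-element-perturbation idea as the paper's one-sentence proof; the paper even phrases the perturbation identically (``one element changed so that its modified value is larger than all other values''). The difference is only in level of rigor: you carry out the argument directly in the noisy setting via a coupling lemma plus Markov's inequality to get a per-element lower bound of $\Omega(1)$ expected queries, whereas the paper first strips away the noise and then invokes, without elaboration, the folklore fact that distinguishing $I$ from its $n$ single-element maximizing perturbations requires $\Omega(n)$ queries.
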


\begin{proof}
Even if the queries are error-free, and even if the adversary provides an input $I$ to $A$ with the promise that the correct input either is $I$ or is $I$ with one element changed so that its modified value is larger than all other values, in order to be correct with probability $2/3$ algorithm $A$ must still query at least $\Omega(n)$ elements.
\end{proof}

\subsection*{Expected vs Worst-Case Number of Queries}
In all our lower bounds, we consider algorithms that have constant success probability and make
a deterministic number of queries. We can use the following lemma
to move from a lower bound on the worst-case number of queries to a lower bound
on the expected number of queries an algorithm must perform.

\begin{lemma}
\label{lem:wcvsexpct}
  For any problem $P$, if any algorithm that solves $P$ with probability at least 2/3 and
  must make at least $q$ queries then any algorithm that solves $P$ with probability
  at least 5/6 must make at least $q/6$ queries in \emph{expectation}.
\end{lemma}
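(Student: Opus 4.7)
The plan is to prove the contrapositive via a truncation argument using Markov's inequality. Suppose, for contradiction, that there exists an algorithm $A$ that solves $P$ with success probability at least $5/6$ and whose expected number of queries on some input is $E < q/6$. From $A$ we will build an algorithm $A'$ whose worst-case query complexity is strictly less than $q$ but which still succeeds with probability at least $2/3$, contradicting the hypothesis.

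The construction of $A'$ is simply to run $A$ and cut it off after $6E$ queries, outputting an arbitrary answer if the budget is exhausted. By Markov's inequality applied to the (nonnegative) random variable counting the number of queries made by $A$, the probability that $A$ exceeds its budget of $6E$ queries is at most $E/(6E) = 1/6$. Therefore, the probability that $A$ both succeeds and uses at most $6E$ queries is at least $5/6 - 1/6 = 2/3$ by a union bound, and on that event $A'$ reproduces the correct output of $A$. Thus $A'$ is correct with probability at least $2/3$ and uses at most $6E < q$ queries in the worst case, contradicting the assumed lower bound $q$ on any algorithm with success probability $2/3$.

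There is no real obstacle here; the only subtlety to be careful about is that the lower bound hypothesis concerns the worst-case number of queries over inputs and internal randomness, so the truncation must yield a deterministic bound on the number of queries (which it does by construction). One minor technical point is that the lemma as stated considers algorithms making a deterministic number of queries in the lower bound, so $A'$ should be padded to always use exactly $\lfloor 6E \rfloor$ queries; this only helps since extra queries do not hurt correctness. Taking the contrapositive, the conclusion $E \geq q/6$ follows.
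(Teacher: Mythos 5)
Your proof is correct and follows essentially the same route as the paper: assume for contradiction an algorithm with success probability $5/6$ and expectation below $q/6$, apply Markov's inequality to bound the probability of exceeding the truncation threshold by $1/6$, and truncate to get a worst-case-bounded algorithm succeeding with probability $2/3$. The only cosmetic difference is that you truncate at $6E$ while the paper truncates at $q-1$; both yield the same contradiction.
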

\begin{proof}
  We simply use Markov's inequality. Suppose that there exists an algorithm that
  solves $P$ with probability at least 5/6 and that makes less than $q/6$ queries
  in expectation.
  Then, by Markov's inequality, the probability that the algorithm makes more than
  $q$ queries is at most $1/6$. Consider running this algorithm and in
  each execution that makes more than $q-1$ queries, stop at the $q-1$st query and output a random
  solution. The success probability of this algorithm is now at least 5/6 - 1/6 = 2/3, and the worst-case
  number of query less than $q$, a contradiction.
\end{proof}

In all our lower bounds, the success probability can be changed to an arbitrary constant.
of queries performed by the algorithm.

\subsection*{Organization of the Paper}

In \autoref{sec:query}, we give tight bounds for the worst-case complexity of
\textsc{max},\textsc{top}-$k$,\textsc{threshold}-$v$.
We start with \textsc{max}, the problem of finding the maximum, which is a building brick in many of our algorithms
and a warm-up that illustrates some of the divide and conquer ideas we use
in the paper (\autoref{sec:max}). In addition, we present tight trade-offs for the query vs the round complexity.

In  \autoref{sec:instanceopt}, we study the instance-optimal query complexity of 
the  \textsc{threshold}-$v$ problem  and \textsc{top}-$k$.
%
The section also contains our approximation algorithm: an 
algorithm returning $k$ elements among the top-$(1+\eps)k$ 
elements.

\autoref{sec:instanceopt} presents our instance-optimal upper and lower bounds for the query complexity of
\textsc{max}, \textsc{threshold}-$v$, and \textsc{top}-$k$.

Finally, in \autoref{sec:rank}, we show a separation between the value- and the comparison model. The section dissects the query complexity of \textsc{rank$-k$} in the value model and in the comparison model. 




%


\section{Tight Query Complexity}\label{sec:query}

We start with the \textsc{max} problem (\autoref{sec:max}), followed by 
\textsc{threshold}-$v$ (\autoref{sec:threshold}) and 
\textsc{top}-$k$ (\autoref{sec:topk}).

%
%
%
\subsection{The \textsc{max} Problem (with tight round complexity) }\label{sec:max}

Consider the problem of computing the maximum with noisy comparison
or value queries, 
where we are interested in three parameters: the correctness
probability, the number of queries, 
and the number of rounds.
Note that
when several input elements share the same value, there may be several
correct outputs for $\Max$. 

Our upper bound generalizes \citet{GS10}, which show 
that given $n$ distinct elements, using $O(n)$ queries, $\log^*(n)$ 
rounds are sufficient and necessary to compute the max.
We generalize this to an arbitrary trade-off between
query and round complexity. In particular, 
for $r$ rounds, the query complexity is at  most
$q\leq n\underbrace{\log\log\cdots \log}_{r-O(1)} n$.
Our algorithm works for both comparison and value models.
As for the lower bound (\autoref{theorem:maxLB}), we generalize the
result of~\citet{GS10}, by giving showing that the trade-off obtained
by our algorithm is optimal, up to an additive constant term.
Our lower bound applies
to both value and comparison models. 

\begin{theorem}\label{lem:topksidekicks} {\bf (Algorithm for Max)}
Consider the $\textsc{Max}$ problem in the noisy comparison model.
  Fix a set of elements $X$, an integer $r$, and $\delta \in [1/n^8, 1]$.
  Algorithm $\Max(X,r,\delta)$  returns the maximum element of $X$  with
  correctness probability at least $1-\delta$ and
  has 
  \begin{enumerate}
    \item round complexity $r$ and 
  \item query complexity $q=O(n\log (b/\delta))$,  where $b$ is defined
  by $r=\log_b^*(n) +4$.
  \end{enumerate}
\end{theorem}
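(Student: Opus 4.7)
The plan is to design a multi-phase algorithm whose pool of surviving candidates shrinks by one level of the tower function of base $b$ in each phase, so that $r=\log_b^*(n)+4$ rounds suffice in total. Concretely, I set $n_0=|X|=n$ and define $n_{i+1}=\lceil\log_b n_i\rceil$ as long as $n_i>b$, after which a constant number of cleanup rounds bring the pool down to a single element. The invariant I maintain is that the true maximum $m$ of $X$ belongs to the current pool with high probability, and that the pool at the end of phase $i$ has size at most $n_i$.

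In phase $i$, I partition the current pool into $n_{i+1}$ groups of equal size $g_i = n_i/n_{i+1}\approx n_i/\log_b n_i$ and, in parallel across the groups, run a single-round subroutine $\OneRoundMax$ with boost parameter $t_i=\Theta(\log(b/\delta))$. The subroutine, on a group of size $g$, issues in a single non-adaptive batch all $O(g\,t_i)$ boosted comparisons of a fixed, position-based tournament schedule (pairings determined only by the positions of the elements in the group, not by any comparison outcome), and reads the winner off the responses after the round is over. The $n_{i+1}$ group winners form the input to phase $i{+}1$.

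The query count is immediate from the schedule: phase $i$ costs $(n_i/g_i)\cdot O(g_i\log(b/\delta))=O(n_i\log(b/\delta))$, and since $n_{i+1}\le\log_b n_i$ forces $n_i$ to shrink at a tower rate, $\sum_{i\ge 0} n_i=O(n)$. Summing over the $r$ phases therefore gives a total query cost of $O(n\log(b/\delta))$. For correctness, the algorithm returns the correct answer as soon as in every phase the group that contains $m$ outputs $m$, which in turn is guaranteed whenever $m$ wins each boosted match along its own path in its group's tournament. There are at most $\lceil\log_2 g_i\rceil$ such matches per phase, and a single boosted match fails with probability $2^{-\Omega(t_i)}=(\delta/b)^{\Omega(1)}$; by a union bound, the overall failure probability is at most
\[ \sum_i \lceil\log_2 g_i\rceil\cdot (\delta/b)^{\Omega(1)} = O(\log n)\cdot(\delta/b)^{\Omega(1)}, \]
where I used the tower estimate $\sum_i \log g_i\le \sum_i \log n_i = O(\log n)$. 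Choosing the hidden constant in $t_i=\Theta(\log(b/\delta))$ large enough forces this bound to be at most $\delta$; the assumption $\delta\ge 1/n^8$ is exactly what guarantees that a boost of size $\Theta(\log(b/\delta))$ suffices to overcome the $\log n$ union-bound factor.

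The main obstacle is engineering $\OneRoundMax$ so that it reliably identifies a group maximum in \emph{one} round with only $O(g\log(b/\delta))$ non-adaptive queries: a naive knockout bracket is adaptive between levels, so I will need to show that a fixed, position-based schedule (issued in full up front and resolved offline level by level) computes the correct winner as long as $m$ wins the matches along its own path, so that only the $O(\log g_i)$ matches along $m$'s path enter the error analysis rather than all $\Theta(g_i)$ matches in the bracket. Once that subroutine and its single-round guarantee are in place, the rest of the proof is the calibration described above: the tower-shrinking schedule makes $\sum_i n_i = O(n)$ control the total query count, and the tower-shrinking schedule also makes $\sum_i \log n_i = O(\log n)$ control the total failure probability, and these two estimates together with $t_i=\Theta(\log(b/\delta))$ are what let the bounds $O(n\log(b/\delta))$ queries and failure probability at most $\delta$ hold simultaneously.
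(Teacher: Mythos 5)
Your proposal is built around a recursive tournament whose pool shrinks at a tower rate, with each phase handled by a one-round subroutine $\OneRoundMax$ that finds the maximum of a group of size $g$ with $O(g\log(b/\delta))$ non-adaptive comparisons. That subroutine is the crux, and it does not exist — this is a fatal gap, not merely an engineering obstacle. Consider a fixed position-based knockout bracket issued in a single batch. At level $2$ of the bracket the pairings are between level-$1$ winners, and those winners are not determined by the positions — they depend on the comparison outcomes. To be able to ``resolve level by level'' after the round closes, you would need to have pre-issued a comparison between \emph{every} pair of elements that could meet at level $2$; carrying this through all $\log_2 g$ levels forces you to pre-issue $\Theta(g^2)$ comparisons, i.e.\ the all-pairs schedule the paper's own $\OneRoundMax$ uses. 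If instead you pre-issue only $O(g)$ comparisons along some fixed set of pairings, then whenever the level-$1$ winners are not the ones you guessed, the batch simply lacks the comparisons needed to resolve level $2$, no matter how cleverly you process the answers. This failure is not a consequence of noise: even in the noiseless comparison model, finding the maximum of $g$ elements in one round requires $\omega(g)$ comparisons (Valiant's lower bound on parallel maximum), so $O(g\log(b/\delta))$ queries in one round is impossible for constant $b,\delta$. The calibration that follows ($\sum_i n_i = O(n)$ for the queries and $\sum_i \log g_i = O(\log n)$ for the union bound) is correct arithmetic, but it rests entirely on the nonexistent subroutine.

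The paper takes a genuinely different route. It keeps $\OneRoundMax$ as a brute-force all-pairs procedure with cost $O(|X|^2\log(|X|/\delta))$, and only invokes it on sets of size roughly $n^{1/3}$, where this cost is $O(n)$. Two nested random samples ($Z \subseteq Y \subseteq X$, with $|Y|\approx n^{2/3}$, $|Z|\approx n^{1/3}$) produce, after a constant number of rounds, a pivot element $y_1$ whose rank in $X$ is $O(n^{1/3}\ln(1/\delta))$ with high probability. The algorithm then runs a filtering loop: in round $t$ it compares every surviving candidate to the \emph{fixed} pivot $y_1$ (a one-round operation), with a per-candidate query budget $n_t$ that grows according to a tower function while the number of survivors $|X_t|$ shrinks at the matching tower rate (this is the content of Lemma~\ref{lem:invarlemma}). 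Because the pivot is a single fixed element, the per-round comparison is fully non-adaptive within the round, which is what makes the $\log^*_b(n)$ round bound achievable; and the product (budget per candidate) $\times$ (number of candidates) stays $O(n\log(b/\delta))$ per round with a geometric decay over $t$. Your tournament idea would need $\Theta(\log g_i)$ rounds per phase to be realizable, blowing up to $\Theta(\log n)$ rounds overall; the paper's pivot-filtering design is precisely what avoids that.
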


Two special cases of \autoref{lem:topksidekicks} that are of interest:
\begin{itemize}
\item
  for $\delta=1/3$ and $r=\log^*(n) +O(1)$ rounds, the query
  complexity is $q=O(n)$, which is the minimum possible number
  of queries; and
\item
  for $\delta=1/3$ and $r=O(1)$ rounds, which is the minimum
  possible number of rounds, the query complexity is
  $q=n\underbrace{\log\log\cdots \log}_{O(1)} n$.
\end{itemize}

This result has to be contrasted with the following lower bound.


\begin{theorem}\label{theorem:maxLB} {\bf (Worst-case Lower bound for Max)}
Consider the $\textsc{Max}$ problem in the noisy value model.
There exists a set of elements $X$ such that for any 
algorithm $A$, for any $b\geq 4$.
the following holds. 
Suppose $A$  returns the maximum of $X$ with correctness probability at least
$2/3$ in $r$ rounds 
using at most $q=n \log_3 (b)$ value queries, then it must be that $r\geq \log_b^*(n)-O(1)$.
  
\end{theorem}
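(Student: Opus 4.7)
The plan is to adapt the round-by-round adversarial argument of Goldreich and Schapira~\cite{GS10} to the noisy value setting. Take the hard multiset $X$ consisting of one element of value $h$ and $n-1$ elements of value $\ell$, and by Yao's principle reduce to deterministic algorithms run on a uniformly random permutation of $X$. Fix the adversary that returns $h$ whenever the oracle's noise coin is triggered. Under this adversary, any query to the true maximum $x^\star$ returns $h$ deterministically, while every query to a non-max element independently returns $h$ with probability exactly $1/3$ and $\ell$ with probability $2/3$.

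Define the candidate set $C_i$ after round $i$ to be the set of elements all of whose queries so far have returned $h$. The max $x^\star$ always lies in $C_i$, while a non-max $y$ is in $C_i$ with probability $(1/3)^{T_y(i)}$, where $T_y(i)$ is the (random) number of queries to $y$ in rounds $1,\ldots,i$. Since non-candidates are provably distinct from the max, the algorithm's output is effectively constrained to $C_r$, and its success probability is at most $\mathbb{E}[1/|C_r|]$; a short Markov-inequality argument then converts the requirement ``correct with probability $\geq 2/3$'' into the condition $N_r := \mathbb{E}[|C_r|] \leq 1 + O(1)$.

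The technical heart is the round-by-round recursion. Writing $N_{i-1} = \mathbb{E}[|C_{i-1}|]$ and letting $q_i$ denote the queries made in round $i$ (w.l.o.g.\ concentrated on $C_{i-1}$), one applies Jensen's inequality to the convex function $t \mapsto 3^{-t}$ (first on the conditional query distribution given the round-history filtration $\mathcal{F}_{i-1}$, then on the outer expectation) to obtain $N_i \geq N_{i-1}\,(1/3)^{q_i/N_{i-1}}$, which rearranges to $N_{i-1}\log_3(N_{i-1}/N_i) \leq q_i$. Summing over the $r$ rounds yields $q \geq \sum_{i=1}^{r} N_{i-1}\log_3(N_{i-1}/N_i)$, subject to $N_0 = n$ and $N_r = O(1)$. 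A short Lagrangian computation shows the minimum of this sum over decreasing $r$-step schedules is $\Theta(n \log^{(r)} n)$, asymptotically attained by the doubly-exponential schedule $N_i \asymp \log^{(i)} n$. Hence $q \leq n\log_3 b$ forces $\log^{(r)}(n) = O(\log_3 b)$, which by definition of $\log_b^\star$ translates into $r \geq \log_b^\star(n) - O(1)$.

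The main obstacle I expect is making the Jensen step rigorous under adaptivity: both $q_i$ and the allocation $(t_{y,i})_{y \in C_{i-1}}$ are random variables depending on past noise, so the inequality $N_i \geq N_{i-1}(1/3)^{q_i/N_{i-1}}$ must be derived by conditioning on $\mathcal{F}_{i-1}$ and combining convexity with the tower property of conditional expectation. A secondary but routine subtlety is justifying the bound $\Pr[\text{success}] \leq \mathbb{E}[1/|C_r|]$: under our adversary non-candidates are genuinely ruled out, so the algorithm's residual uncertainty reduces to a uniform choice within $C_r$, forcing $N_r$ close to $1$ for high success probability. All remaining steps, including the optimization of the schedule and the conversion into the $\log_b^\star$ form, are standard computations.
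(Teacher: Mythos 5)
Your high-level plan — track a shrinking candidate set round by round, apply convexity of $t\mapsto 3^{-t}$ to control the shrinkage, and optimize the schedule to obtain a $\log^*$ bound — is in the same spirit as the paper's proof of Lemma~\ref{lemma:maxLBrankmodel}, and you pick essentially the same hard instance (one maximum, many equal non-maxima) and the same adversary. But there is a genuine gap in the step you flagged as a ``routine subtlety,'' and it is in fact the crux that the paper's machinery exists to resolve.

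The inequality $\Pr[\text{success}] \leq \mathbb{E}[1/|C_r|]$ is false under your definition of $C_r$. You let $C_r$ be all elements whose queries have so far always returned $h$, with no constraint on \emph{how many} queries each received. Conditioned on the observed history, the posterior that a candidate $u\in C_r$ is the true maximum is proportional to $3^{t_u}$, where $t_u$ is the number of queries $u$ has received: for a non-max candidate the event ``$t_u$ queries all returned $h$'' has probability $(1/3)^{t_u}$, whereas for the max it has probability $1$, so Bayes gives an exponential bias toward heavily queried candidates. Hence the algorithm's residual uncertainty is \emph{not} uniform over $C_r$, and it can beat $1/|C_r|$ by a large factor simply by concentrating its budget on a few elements and outputting the most-queried surviving one. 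This is exactly why the paper restricts $S_{t-1}$ to elements that have received \emph{exactly as many queries as $x_1$} and then uses the little-birdie device — at the end of each round it reveals the true ranks of the over-queried elements (``Class~1'') for free and tops up the remaining ones to the common count — so that the elements of $S_t$ really are exchangeable and the uniform bound holds. Your argument has no substitute for this equalization step, and without it the Jensen recursion bounds only $\mathbb{E}[|C_r|]$, which does not constrain the success probability. A secondary issue in the same direction: the passage from $\Pr[\text{success}]\ge 2/3$ to $\mathbb{E}[|C_r|] = O(1)$ by ``a short Markov argument'' also does not go through (you would need a high-probability lower bound on $|C_r|$, not just its expectation, which is why the paper applies a Chernoff bound to $|S_t|$ after controlling $\mathbb{E}[|S_t|\mid\mathcal F_{t-1}]$).
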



Together, the above two theorems provide a complexity
characterization of the trade-off between 
query complexity and round complexity for the noisy maximum problem.



\subsubsection{Upper Bound (comparison model) - Proof of \autoref{lem:topksidekicks} }

\vincent{I am not sure what to do with the following paragraph. Is that really important? it seems
  a bit off as it is because we are saying that we assume an error probability of 1/8 but in the prelims
  we say that the error is 1/3 throughout the paper, maybe we can just get rid of this paragraph?}
To handle the possibility of ties between elements, to compare a pair $\{ x,y\}$
the algorithm systematically queries both ``$x\geq y$?" and ``$x\leq y$?", repeating those
two queries until it gets two answers that are consistent with each other; in this way, the comparison has three possible outputs, $x>y$, $x=y$ and $x<y$. If each query has error probability at most $1/8$, then the output of the comparison is correct with probability at least $2/3$.

First we analyze the problem when there is a single round. The one-round algorithm will be a subroutine of our main, multi-round algorithm.
\begin{algorithm*}\caption*{$\operatorname{\bf Algorithm}$\ $\OneRoundMax(X,\delta)$  \  \ \ \ \ (see \autoref{lemma:oneround})
\\
{\bf input:}  set $X$, error probability $\delta$ \\
{\bf output:} largest element in $X$\\
{\bf error probability:} $\delta$
}
\begin{algorithmic}
\STATE let $c=48$
\STATE  \inparallel, compare each pair $\{ x,y\}$ in $X$ to each other exactly $c\log(|X|/\delta)+1$ times
\IF {there exists an element $x$ such that for every $y$, at least half of the comparisons between $x$ and $y$ lead to $x>y$ or to $x=y$}
\STATE \Output $x$
\ELSE 
\STATE \Output FAIL
\ENDIF
\end{algorithmic}
\end{algorithm*}

\begin{lemma}\label{lemma:oneround}
Algorithm $\OneRoundMax$ has query complexity $O(|X|^2\log(|X|/\delta))$, round complexity $1$, and error probability at most $\delta$.
\end{lemma}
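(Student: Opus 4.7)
The plan is to verify the three claims (query complexity, round complexity, error probability) more or less directly from the structure of $\OneRoundMax$, with the main work being a Chernoff-plus-union-bound argument for correctness.

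First I would dispatch the easy parts. Round complexity is $1$ by construction: all $\binom{|X|}{2}$ pairs are compared in parallel, and the final decision step is a deterministic post-processing of the query answers. For query complexity, each pair is compared $c\log(|X|/\delta)+1$ times; since each single comparison between $x$ and $y$ uses $O(1)$ queries (either via the tie-resolution protocol that iterates ``$x\geq y?$'' and ``$x\leq y?$'' until consistent, which terminates in $O(1)$ queries in expectation because of the constant-error oracle, or via a fixed small number of repetitions that boosts a single $>/=/<$ comparison to any desired constant error), we get total query complexity $\binom{|X|}{2}\cdot O(\log(|X|/\delta)) = O(|X|^2\log(|X|/\delta))$.

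The core of the proof is the error analysis. For a pair $\{x,y\}$, let $E_{x,y}$ be the event that \emph{strictly more than half} of the $c\log(|X|/\delta)+1$ comparisons between $x$ and $y$ return the true relation. Each individual comparison is correct with probability at least $2/3$ (after the tie-handling boost described before the algorithm), so by a standard Chernoff bound applied to a majority of $c\log(|X|/\delta)+1$ independent $\{0,1\}$-variables with success probability $\geq 2/3$, one obtains
\[
\Pr[\neg E_{x,y}] \;\leq\; e^{-\Omega(c\log(|X|/\delta))} \;\leq\; \frac{\delta}{|X|^2},
\]
for $c=48$ (any sufficiently large absolute constant would do). A union bound over the fewer than $|X|^2$ pairs then gives $\Pr\bigl[\bigcap_{x,y} E_{x,y}\bigr]\geq 1-\delta$.

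It remains to show that on the event $\bigcap_{x,y} E_{x,y}$, the algorithm outputs a true maximum. Let $x^\star$ be any maximum of $X$. For every $y\in X$, the true relation between $x^\star$ and $y$ is $x^\star>y$ or $x^\star=y$, so conditioned on $E_{x^\star,y}$ the majority of comparisons reports one of these two outcomes; hence $x^\star$ passes the test in the algorithm and is a valid output. Conversely, if $x'$ is not a maximum, pick any $y$ with $y>x'$; conditioned on $E_{x',y}$, the majority of comparisons reports $y>x'$, so strictly fewer than half report $x'>y$ or $x'=y$, and $x'$ fails the test. Therefore, on the good event the algorithm outputs an element which is necessarily a maximum, completing the proof. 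The only slightly delicate point is making sure the boosted single-comparison error probability is a constant strictly below $1/2$ so that Chernoff applies, and this is exactly what the preceding ``consistency'' procedure (plus, if needed, a constant number of repetitions of that procedure) was designed to guarantee.
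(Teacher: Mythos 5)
Your proposal is correct and takes essentially the same approach as the paper: a Chernoff bound to ensure each relevant majority-vote is accurate, a union bound, and then the observation that on the resulting good event exactly the true maxima pass the algorithm's test. The only cosmetic difference is that you union-bound over all $\binom{|X|}{2}$ pairs to form a single global good event (per-pair error $\delta/|X|^2$), whereas the paper union-bounds over the $O(|X|)$ comparisons involving a fixed maximum and then over the $O(|X|)$ non-maxima compared to that maximum (per-event error $\delta/(2|X|)$); both are absorbed by the constant $c$ in the repetition count.
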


\begin{proof}
The query complexity and round complexity statements are obvious. 

To analyze the probability of error, we will prove that some maximum element (probably) passes the algorithm's test, and no non-maximum element does.  Let $x$ be a maximum (since ties are allowed, there could be several maximum elements that are all equal). Fix some element $y$ and let $Y_x(y)$ denote the random variable equal to the number of comparisons between $x$ and $y$ leading to either $x>y$ or $x=y$. We use multiplicative Chernoff bounds (\autoref{lemma:Chernoff}) to bound the probability that $Y_x(y)$ is at least $3/4$ of its expectation, and taking Union bound, we conclude that with probability at least $1-\delta/2$ there is at least one maximum element $x$ such that  for every $y$, at least half of the comparisons between $x$ and $y$ lead to $x>y$ or to $x=y$. 
Indeed, since each comparison is correct with probability at least $2/3$, we have $\E{Y}\geq (2/3)c\log (|X|/\delta)$. So
Chernoff bound bounds bound the probability of failure  $e^{-(1/16)(2/3)c\log(|X|/\delta)/2}={\delta}/(2|X|)$ since $c=48$.
Thus, among all maximum elements, at least one (probably) passes the algorithm's test.

Now, consider an element $x'$ that is not a maximum. Then the probability that at least half of the comparisons between $x'$ and the true maximum $x$ lead to $x'>x$ or to $x'=x$  is at most $\delta/(2|X|)$. By the Union bound, with probability at least $1-\delta/2$ there exists no non-maximum element $x'$ such that at least half of the comparisons between $x'$ and the true maximum $x$ lead to $x'>x$ or to $x'=x$.

By the Union bound, with probability at least $1-\delta$ both events hold, and then the output of the algorithm is correct.
\end{proof}

\FloatBarrier
We now state the multi-round algorithm $\Max$ that builds on $\OneRoundMax$. 
The main idea is to first sample a set $Y$ of expected size $n^{2/3}$, where $n$ is the size of the input.
In $Y$ we sample a subset $Z$ of size $m^{1/3}$. The small size allows us to do all pairwise comparisons in $Z$ to find its maximum element $z_1$ whose approximate rank is $n^{2/3}$. 
We then compare every element in $Y$ to $z_1$. Note that the maximum element in $Y$ has approximate rank $n^{1/3}$. 
We can thus compute in linear time the maximum element $y_1$ of $Y$.
The crucial property is that its approximate rank is $n^{1/3}$. This allows us to find all elements in the input $X$ that  
are larger. 
The difficulty is that we seek accomplish this in query complexity $O(n \log(b/\delta)$.
To do so we proceed in rounds in which the number of queries per remaining candidate scales as $\zeta_t$ (tower function).
We rely on three properties 1) are only approximately $n^{1/3}$ many in elements $X$ that are larger than $y_1$ and 
2) every time we eliminate an element we have additional query complexity in the next round.
3) The largest element of $X$ will never get eliminated.

Since there are only approximately $n^{1/3}$ many in elements $X$ that are larger than $y_1$, after all these rounds there will
be only approximately $n^{1/3}$ elements left. We can simply compare them pairwise to find the maximum value of $X$.

\begin{algorithm}\caption*{$\operatorname{\bf Algorithm}$\ $\Max(X,r,\delta)$ \  \ \ \ \ (see \autoref{lem:topksidekicks})
\\
{\bf input:}  set $X$ of size $n$, number of rounds $r$, error probability $\delta\in[1/n^{8},1]$\\
{\bf output:} largest element in $X$\\
{\bf error probability:} $\delta$
}
\begin{algorithmic}
  \STATE let $b$ be such that $r = \log^*_b(n) + 4$
  \STATE let $\zeta_i$ be $(b/\delta) \uparrow \uparrow i$
      \STATE  $Y \gets $ random sample of $X$, each element of $X$ being taken with probability $n^{-1/3}$
  \STATE $Z \gets $ random sample of $Y$, each element of $Y$ being taken with probability $n^{-1/3}$
  \STATE  $z_1\gets \OneRoundMax(Z,\delta/5)$   
  \STATE $c\gets 24$
  \STATE \inparallel, compare each element in $Y$ to $z_1$ exactly $c\ln(n/\delta)+1$ times  
  \STATE  $Y^* \gets \{ z_1\} \cup \{$  elements of $Y$ that are assessed to be strictly greater than $z_1$ for at least half of their comparisons to  $z_1\}$
  \STATE $y_1\gets\OneRoundMax(Y^*,\delta/5)$
  \STATE $X_0 \gets X$
  \FOR {$t \in [1, r-4]$} 
  \STATE $n_t\gets t \cdot  144 \frac{n}{2^t|X_{t-1}|}\ln(16b/\delta)$
    \STATE \inparallel, compare each element in $X_{t-1}$ to $y_1$  exactly $n_t$ times  
  \STATE  $X_t  \gets\{$elements of $X_{t-1}$ that are assessed to be strictly greater than $y_1$ for at least half of their comparisons to  $y_1\}$
  \STATE  If $|X_t| > \max\left\{ \frac{n}{ 2^t \zeta_t  }, 2n^{1/3}\ln(16/\delta)^2\right\}$ then \Output FAIL
  \ENDFOR

\STATE  \Output $\OneRoundMax(X_{r-4} \cup \{y_1\}, \delta/5)$
\end{algorithmic}
\end{algorithm}

\FloatBarrier
The proof of \autoref{lem:topksidekicks} relies on    \autoref{obs:somethingabouty1}, \autoref{lem:invarlemma} and \autoref{lemma:correctness}.
\begin{lemma}\label{obs:somethingabouty1}
Consider the element $y_1$  computed in $\Max$.
Let $\mathcal{E}$ be the event that the rank of $y_1$  in $X$  is at most $n^{1/3} \ln(5/\delta)$ (the randomness is over the computation of $y_1$).
Then,
\[ \Pr{\mathcal{E}}\geq 1- (2/5+1/n)\delta. \]
%
\end{lemma}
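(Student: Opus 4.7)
The plan is to decompose the failure event $\mathcal{E}^c$ into two independent conceptual pieces: a sampling event about where the maximum of $Y$ lies in $X$, and an algorithmic event about how well the subroutine computes the maximum of $Y$.

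For the sampling piece, I would show that the rank of $\max Y$ in $X$ is at most $n^{1/3}\ln(5/\delta)$ with probability at least $1-\delta/5$. Order the elements of $X$ by value in decreasing order with ties broken arbitrarily, and let $v_k$ denote the $k$-th value. The rank of $\max Y$ exceeds $k$ iff $\max Y < v_k$, which requires that the sample $Y$ contains none of the (at least $k$) elements whose value is $\geq v_k$. Since each element is included in $Y$ independently with probability $n^{-1/3}$, this probability is at most $(1-n^{-1/3})^k \le e^{-k/n^{1/3}}$, and setting $k=n^{1/3}\ln(5/\delta)$ yields $\delta/5$.

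For the algorithmic piece, I would prove that $y_1=\max Y$ with probability at least $1-\delta/5-\delta/n$. The key observation is that $z_1$ is necessarily an element of $Z\subseteq Y$, so $Y^*\subseteq Y$ and $\max Y^* \leq \max Y$, with equality precisely when $\max Y\in Y^*$. Under that condition, $\OneRoundMax(Y^*,\delta/5)$ returns $\max Y$ with error probability at most $\delta/5$ by \autoref{lemma:oneround}. It remains to bound $\Pr[\max Y\notin Y^*]$. Since $z_1\in Y$ we have $z_1\leq \max Y$, so either $z_1=\max Y$ (and $\max Y\in Y^*$ automatically) or $z_1<\max Y$, in which case $\max Y\in Y^*$ iff a majority of the $c\ln(n/\delta)+1$ noisy comparisons between $\max Y$ and $z_1$ correctly report $\max Y>z_1$. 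Crucially, this is a statement about a \emph{single} element $\max Y$, so no union bound over $Y$ is needed. Applying Hoeffding's inequality to $N\geq 24\ln(n/\delta)$ independent trials with success probability at least $2/3$, the probability that fewer than $N/2$ succeed is at most $e^{-2(1/6)^2 N}=e^{-N/18}\leq (\delta/n)^{4/3}\leq \delta/n$.

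Combining the two pieces by the union bound gives
\[
\Pr[\mathcal{E}^c] \;\leq\; \underbrace{\delta/5}_{\text{rank of }\max Y} \;+\; \underbrace{\delta/5}_{\OneRoundMax(Y^*)} \;+\; \underbrace{\delta/n}_{\max Y \notin Y^*} \;=\; (2/5 + 1/n)\delta,
\]
as required. The main subtlety, which I would emphasize, is the use of a single-element Hoeffding bound rather than a union bound over all of $Y$; the latter would only yield a $\sqrt{\delta/n}$-type failure term and would not suffice for the claimed constant $1/n$ slack. Aside from that, the argument is a careful bookkeeping of failure modes, leveraging the containment $Y^*\subseteq Y$ to avoid having to reason about $z_1$ being the true maximum of $Z$.
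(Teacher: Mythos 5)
Your proof is correct and follows essentially the same decomposition as the paper's: (i) the rank of $\max Y$ in $X$ is small with probability $\geq 1-\delta/5$ by the $(1-n^{-1/3})^{k}$ computation, (ii) $\max Y\in Y^*$ with probability $\geq 1-\delta/n$ by a single concentration bound on the comparisons against $z_1$, and (iii) $\OneRoundMax$ then returns $\max Y^*=\max Y$ up to error $\delta/5$; the union bound gives $(2/5+1/n)\delta$. The only cosmetic difference is that you invoke additive Hoeffding where the paper uses multiplicative Chernoff, and your explicit remark that only a single-element bound (for $\max Y$, not a union over all of $Y$) is needed is indeed the right reading of both arguments, though the ``$\sqrt{\delta/n}$'' figure you quote for the hypothetical union-bound alternative is not quite right (a union bound over $|Y|\approx n^{2/3}$ elements would give roughly $n^{2/3}(\delta/n)^{4/3}$); this side remark does not affect the proof.
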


\begin{proof}
We first prove that the maximum element of $Y$ is (probably) in $Y^*$. If $z_1$ is maximum, that is obvious. Else, let $y^*$ be a maximum element  of $Y$, strictly greater than  $z_1$; we proceed similarly to the proof of \autoref{lemma:oneround}. Let $W$ denote the random variable equal to the number of wins of the $y^*$ when compared to $z_1$. Since each comparison is won by $y^*$ with probability at least $2/3$, the expectation is $\E{W}\geq (2/3)c\log (n/\delta)$. Consider the probability that $Y$ is less than $(c/2)\log(n/\delta)$, that is, is at most $2/3$rds of its expectation. By multiplicative Chernoff bounds (\autoref{lemma:Chernoff}) that is at most 
\[e^{-(1/9)(3/4)c\log(n/\delta)/2}=\frac{\delta}{n}.\]
 Thus $Y^*$ contains the maximum element of $Y$ with probability at least $1-\delta/n$.

Conditioning on this, the probability that $y_1\neq y^*$  is then at least  $1-\delta/5$ (due to \autoref{lemma:oneround}). 
If we consider the elements of $X$ by decreasing order,  since each is placed in $Y$ with probability $n^{-1/3}$,  the probability that more than $n^{1/3}\ln(5/\delta)$ elements are considered before one of them ($y^*$) is placed into $Y$ is 
\[(1-n^{-1/3})^{n^{1/3}\ln(5/\delta)}\leq (1/e)^{\ln(5/\delta)}=\delta/5.\]
\end{proof}

The loop of the algorithm rapidly weeds out of $X$ the elements that are less than or equal to $y_1$, until there are few enough candidate maxima that algorithm $\OneRoundMax$ can be run efficiently, taking care at the same time to not accidentally eliminate the maximum element of $X$. 
%

\begin{lemma}\label{lem:invarlemma}
Condition on Event $\mathcal{E}$.
Consider a large enough $n$.  Let $S_t=X_t\cap \{  s\colon s\leq y_1  \}$.
Let Event $\mathcal{E}'$ be that for any $t \leq \log_b^*(n)$ we have:
$ |S_{t}| \leq \max\left\{ \frac{n}{ 2^t \zeta_t  }, 2n^{1/3}\ln(16/\delta)^2\right\}$.
Then,
\[ \Pr{\mathcal{E}'~|~\mathcal{E}}\geq 1- \frac{\delta t}{n^9} - \frac{\delta}{16}\sum_{j=1}^{t} \frac{1}{2^{j}}  . \]
\end{lemma}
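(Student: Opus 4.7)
The plan is to prove the lemma by induction on $t$, showing that conditional on $\mathcal{E}$ and on the history up to round $t-1$ (in particular on the event $|S_{t-1}|\le U_{t-1}$, where $U_j:=\max\{n/(2^j\zeta_j),\ 2n^{1/3}\ln(16/\delta)^2\}$), the probability that $|S_t|>U_t$ is at most $\delta/n^9+\delta/(16\cdot 2^t)$. A union bound over $t\le \log_b^*(n)$ then gives the stated failure probability. The base case $t=0$ is trivial since $S_0\subseteq X$, and for $t=1$ one checks the first iteration directly with the same argument as the inductive step.

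For the inductive step, fix $t\ge 1$ and condition on $X_{t-1}$. Every element $s\in S_{t-1}$ satisfies $s\le y_1$, so each of the $n_t$ noisy comparisons of $s$ to $y_1$ returns ``$s>y_1$'' with probability at most $1/3$. The element $s$ enters $S_t$ only if at least $n_t/2$ of these comparisons are wrong; by a multiplicative Chernoff bound (\autoref{lemma:Chernoff}) this happens with probability at most $p:=e^{-n_t/24}$. Event $\mathcal{E}$ implies that at most $n^{1/3}\ln(5/\delta)$ elements of $X$ exceed $y_1$, so $|X_{t-1}|\le |S_{t-1}|+n^{1/3}\ln(5/\delta)\le 2U_{t-1}$ for $n$ large. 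Plugging into the definition of $n_t$ gives
\[ n_t\ \ge\ \frac{72\,t\,n\,\ln(16b/\delta)}{2^t U_{t-1}}, \qquad p\ \le\ (16b/\delta)^{-3tn/(2^tU_{t-1})}. \]
Because $|S_t|$ is stochastically dominated by $\mathrm{Binomial}(|S_{t-1}|,p)$, we have $\E[|S_t|\mid X_{t-1}]\le U_{t-1}\cdot p$.

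The heart of the argument is a direct computation using $\zeta_t=(b/\delta)^{\zeta_{t-1}}$: when $U_{t-1}=n/(2^{t-1}\zeta_{t-1})$ we get $n/(2^tU_{t-1})=\zeta_{t-1}/2$ and the product $U_{t-1}\cdot p$ becomes $(n/(2^{t-1}\zeta_{t-1}))\cdot(16b/\delta)^{-3t\zeta_{t-1}/2}$, which is easily at most $U_t/2$; when $U_{t-1}=2n^{1/3}\ln(16/\delta)^2$ (the floor), $p$ is doubly exponentially small, again giving $U_{t-1}\cdot p\le U_t/2$. Since $U_t\ge 2n^{1/3}\ln(16/\delta)^2$, a final Chernoff bound on the sum of Bernoullis gives
\[ \Pr\!\left[|S_t|>U_t\mid X_{t-1}\right]\ \le\ \exp\!\left(-\Omega(U_t)\right)\ \le\ \exp\!\left(-\Omega(n^{1/3}\ln(16/\delta)^2)\right), \]
which, for $n$ large enough and $\delta\ge 1/n^8$, is bounded by $\delta/n^9+\delta/(16\cdot 2^t)$.

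I expect the main obstacle to be handling the two regimes in the definition of $U_t$ simultaneously: the tower-function regime $n/(2^t\zeta_t)$ dominates for very small $t$ (and must be checked against the rapidly shrinking $\zeta_t$), while the floor $2n^{1/3}\ln(16/\delta)^2$ takes over as $t$ grows. One has to verify that $n_t$ is large enough in both regimes so that the expected survival count is at most $U_t/2$ and the tail bound is strong enough to absorb both error terms. A minor but important point is that $n_t$ depends on the random $|X_{t-1}|$, but conditioning on $X_{t-1}$ and using the upper bound $|X_{t-1}|\le 2U_{t-1}$ from event $\mathcal{E}$ makes this manageable.
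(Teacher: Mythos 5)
Your proposal is correct and follows essentially the same approach as the paper's proof: induction on $t$, a per-element Chernoff bound giving a survival probability that decays doubly exponentially in $\zeta_{t-1}$ once $|X_{t-1}|$ is controlled via event $\mathcal{E}$, followed by a concentration argument on the binomial $|S_t|$. The only cosmetic difference is that you phrase both regimes of the $\max$ uniformly through $U_t$ and use the floor $2n^{1/3}\ln(16/\delta)^2$ to get an $e^{-\Omega(n^{1/3}\ln^2(16/\delta))}$ tail in one shot, whereas the paper splits into the two cases $2\E[|S_t|]\gtrless n^{1/3}\log(16/\delta)^2$ before applying Chernoff; both yield the stated $\delta t/n^9 + (\delta/16)\sum 2^{-j}$ bound with ample room to spare.
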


\begin{proof}
  The claim holds trivially for $t=0$, due to or conditioning on $\mathcal{E}$.
  Assume the claim holds for $t-1$ and that $|S_{t-1}| \geq 2n^{1/3}\ln(16/\delta)^2$. Note that by the inductive hypothesis,
  this happens with probability at least
  $1-\delta\frac{t-1}{n^9}- \frac{\delta}{16}\sum_{j=1}^{t-1} \frac{1}{2^{j}}   $.
  Note that  $|X_{t-1}| \leq |S_{t-1}| + n^{1/3} \ln(16b/\delta)^2 \leq 2|S_{t-1}| $.
  
  Let $s\in S_{t-1}$ and $T$ be the random variable denoting the number of outcomes $s \leq y_1$ when comparing $s$ to $y_1$ in iteration $t$. 
  
  If $c_t$ denotes the number of comparisons of $s$ with $y_1$, then $\E{T}$ is at least $(2/3)c_t$, and by multiplicative Chernoff bounds (\autoref{lemma:Chernoff}) the probability that $T < (1/2)c_t$ is upper bounded by $e^{- (3/4)^2c_t/3}$. 
  The number of comparisons of $s$ to $y_1$ is at least:
  \[\frac{144n \ln(16b/\delta)}{2^t |X_{t-1}|} \geq  \frac{144  n\ln(16b/\delta)}{2^t 2|S_{t-1}|} \geq   
    \frac{144n  \ln(16b/\delta)}{2^t 2\frac{n}{ 2^{t-1} \zeta_{t-1} }} 
\geq 
36   \zeta_{t-1}\ln(16b/\delta).\]
Thus, we have 
$\E{T}\geq (2/3)36 \zeta_{t-1}  \ln(16b/\delta)$. The probability that  $s\in S_t$, that is, that
at least half of the comparisons to $y_1$ (erroneously) returned $s$ to be the winner, is  the probability that $T$ exceeds its expectation by a factor of at least $3/2$. By multiplicative Chernoff bounds,
\begin{align}\label{lampe}
  \Pr{ s\in S_t \mid s\in S_{t-1}} &\leq \Pr{T\geq (3/4)\E{T}} \nonumber 
                                   \leq \exp\left(- (3/4)^2  \frac{(2/3)36 \zeta_{t-1}  \ln(16b/\delta)}{3} \right) \nonumber \\
                                   &= \frac{1}{\left(\frac{16b}{\delta}\right)^{t \cdot  \zeta_{t-1}}}
                                   \leq \frac{1}{2\cdot 2^t  \zeta_t},
\end{align} 
where we used that $\zeta_t = (b/\zeta)^{2\zeta_{t-1}}$. 
Now, $|S_t|$ is a sum of $|S_{t-1}|$ independent random 0/1 variables, and $\E{|S_t|}\leq |S_{t-1}|/ ( 2\cdot 2^t   \zeta_t)$.
We use multiplicative Chernoff bounds again. There are two cases. First, if $2\E{|S_t|}\geq n^{1/3} \log(16/\delta)^2$, then
\begin{align*}
  \Pr{|S_t|>\frac{n}{2^t   \zeta_t}} \leq   \Pr{|S_t|> \frac{ |S_{t-1}|}{2^t   \zeta_t}   }&\leq \Pr{|S_t|>2\E{|S_t|}}\\
                                                                                       &\leq e^{-  \E{|S_{t}| }/3}\leq
                                                                                         e^{-  n^{1/3}\log(16/\delta)^2/3} < \delta/(2 n^{9}).
\end{align*}

Second, if $n^{1/3}\log(16/\delta)^2> 2\E{|S_t|}$, then let $\alpha = \frac{n^{1/3}\log(16/\delta)^2}{2\E{|S_t|}}$, we have
\begin{align*}
  \Pr{|S_t| > n^{1/3}} = \Pr{|S_t| > \alpha \cdot 2\E{|S_t|}} < e^{-(\alpha-1)^2 2\E{|S_t|}/3} \le e^{-\Theta(n^{1/3}\log(16/\delta)^2)}< \delta/(2n^9),
\end{align*}
where the penultimate inequality is implied by $n^{1/3} \log(16/\delta)^2> 2 \E{|S_t|}$.
By Union bound, the lemma follows.
\end{proof}

The following lemma helps to bound the error probability.

\begin{lemma}\label{lemma:correctness}
  Condition on Event $\mathcal{E}'$.
  Consider a large enough $n$.  Let $x_1$ denote the true maximum.
  For any $1 \le t \leq \log_b^*(n)$, with probability at least $1-\delta/2$ we have $x_1 \in X_t \cup \{y_1\}.$
\end{lemma}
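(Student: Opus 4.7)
The plan is to argue that, conditioned on $\mathcal{E}'$, the true maximum $x_1$ survives every round of the elimination loop with high probability, so that $x_1 \in X_t \cup \{y_1\}$ holds for every $1 \le t \le \log_b^*(n)$.

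First I would handle the easy case $y_1 = x_1$ (as values) separately: there $x_1 \in \{y_1\} \subseteq X_t \cup \{y_1\}$ holds trivially. So I may assume $y_1 < x_1$, in which case each comparison of $x_1$ against $y_1$ is correctly resolved (``$x_1 > y_1$'') with probability at least $2/3$, independently across queries.

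Next, for each round $t' \in [1, t]$, I would bound $\Pr{x_1 \notin X_{t'} \mid x_1 \in X_{t'-1}}$. By definition of $X_{t'}$, this event occurs exactly when fewer than $n_{t'}/2$ of the $n_{t'}$ comparisons of $x_1$ to $y_1$ are answered correctly. Let $T_{t'}$ denote the number of correct answers; then $\E{T_{t'}} \ge (2/3)\,n_{t'}$, and the bad event is $T_{t'} < n_{t'}/2 = (3/4)\cdot (2/3)\,n_{t'} \le (3/4)\E{T_{t'}}$. Applying the multiplicative Chernoff lower-tail bound (\autoref{lemma:Chernoff}) with deviation $1/4$ yields $\Pr{T_{t'} < (3/4)\E{T_{t'}}} \le \exp\!\bigl(-\E{T_{t'}}/48\bigr) \le \exp(-n_{t'}/72)$.

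I would then lower bound $n_{t'}$ using the conditioning on $\mathcal{E}'$, which guarantees $|X_{t'-1}| \le \max\!\bigl\{ n/(2^{t'-1}\zeta_{t'-1}),\; 2n^{1/3}\ln(16/\delta)^2 \bigr\}$. Substituting into $n_{t'} = t' \cdot 144\,\bigl(n/(2^{t'}|X_{t'-1}|)\bigr)\ln(16b/\delta)$ gives $n_{t'} \ge 72\, t'\, \zeta_{t'-1}\, \ln(16b/\delta)$ in the first regime (and an even larger quantity in the second, for $t' \le \log_b^*(n) \le \log n$); this is precisely the same bookkeeping already carried out inside \autoref{lem:invarlemma}. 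Combining, I obtain $\Pr{x_1 \notin X_{t'} \mid x_1 \in X_{t'-1}} \le (16b/\delta)^{-\Omega(t'\zeta_{t'-1})}$, and a union bound over $t' = 1, \ldots, t$ gives the total elimination probability as a rapidly converging tail $\sum_{t' \ge 1} (16b/\delta)^{-\Omega(t'\zeta_{t'-1})}$, which is bounded by $\delta/2$ because $\zeta_0 = b/\delta \ge 4$ and the exponent grows at least geometrically in $t'$.

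The main obstacle I anticipate is purely arithmetic rather than conceptual: reconciling the Chernoff constant ($1/48$ or $1/72$ depending on the exact form used) with the coefficient $144\,\ln(16b/\delta)$ in the definition of $n_{t'}$, so that the per-round failure probability is dominated by $(16b/\delta)^{-ct'\zeta_{t'-1}}$ for a constant $c$ large enough that the resulting geometric sum is at most $\delta/2$. The tower-function growth of $\zeta_{t'-1}$ makes the series collapse essentially to its first term, so the whole analysis ultimately rests on choosing the leading constant $144$ generously enough, which the algorithm already does.
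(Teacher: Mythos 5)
Your proposal is correct and follows essentially the same approach as the paper's own proof: both condition on $\mathcal{E}'$ to obtain a lower bound on the per-round number of comparisons $n_{t}$ (the paper reuses the computation showing $n_t \ge 36\,\zeta_{t-1}\ln(16b/\delta)$ from \autoref{lem:invarlemma}), both apply a multiplicative Chernoff bound to the comparisons between $x_1$ and $y_1$ in a single round, and both conclude with a union bound over rounds whose sum is dominated by the first term because of the tower-function growth of $\zeta_{t-1}$. The only cosmetic difference is that the paper tracks the number $U$ of comparisons won by $y_1$ and bounds its upper tail (appealing to Eq.~\eqref{lampe}), while you track the number of correct answers and bound its lower tail — these are the same calculation — and you explicitly separate out the degenerate case $y_1 = x_1$, which the paper leaves implicit in the phrase ``$x_1 \notin X_t \cup \{y_1\}$''.
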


\begin{proof}
Assume $x_1\in X_{t-1} \cup \{y_1\}$. In order to have $x_1 \not\in X_t \cup \{y_1\}$ we have that more than half of the comparisons of $x_1$ with $ y_1$ must (erroneously) return that $y_1$ wins. Let $U$ be the random variable denoting the number of comparisons between $x_1$ and $y_1$ in iteration $t$ that are won by $y_1$. 

Since there are $36    \zeta_{t-1}\ln(16b/\delta)$ comparisons of $x_1$ with $y_1$ in that iteration, we have 
$\E{U}=(1/3)36    \zeta_{t-1}\ln(16b/\delta)$. The probability that  $x_1\notin X_t$, that is, that
 at least half of the comparisons to $y_1$ (erroneously) returned $y_1$ as the winner, is  the probability that $U$ exceeds its expectation by a factor of at least $3/2$. 
$U$ has the same distribution as $T$ studied in the proof of \autoref{lem:invarlemma}, so by~(\eqref{lampe}).

We conclude $\Pr{ x_1\notin X_t \mid x_1\in X_{t-1}} \leq \frac{\delta}{2^t\cdot   \zeta_t}$.
Summing over $i=1,2,\ldots, \log_b^*(n)$, $\Pr{x_1\notin X_t}\leq \delta/2$, as desired.
\end{proof}

\begin{proof}[Proof of \autoref{lem:topksidekicks}]
 
 Consider $\Max$. First we study the probability that the output is correct. 
  We  condition on Event $\mathcal{E}'$. By \autoref{lem:invarlemma}, the algorithm does not output FAIL.
  We have by \autoref{lemma:correctness} that $x_1 \in X_{r-4} \cup \{y_1\}$ with probability
  at least $1-\delta/2$. Now, we have that by \autoref{lemma:oneround} that $\OneRoundMax(X_{r-4} \cup \{y_1\}, \delta/n^9)$
  yields the maximum of $X_{r-4}\cup\{y_1\}$ with probability at least $1-\delta/n^9$. Hence taking a Union bound
  over the failure probability of Events $\mathcal{E}$ (\autoref{obs:somethingabouty1}) and $\mathcal{E}'$  and of $\OneRoundMax(X_{r-4} \cup \{y_1\}, \delta/n^9)$, we have that the
  probability of success is at least $1-\delta$, as desired.
  	
  The round complexity can be bounded as follows. With exception of the for-loop, which takes $r -  4 = \log_b^*(n)$ rounds,
  there are three calls to $\OneRoundMax$, which takes a total of three rounds, and the computation of $Y^*$ which takes one round.

  The query complexity can be bounded as follows.
  First, observe that the total number of queries before the for-loop is at most $O(n\log(1/\delta))$.
  Now, consider the outer for-loop. We have a total of at most $144 n \log(16b/\delta) \sum_{t=1}^{\log_b^*(n)}  2^{-t} = O(n \log (b/\delta))$
  queries. Finally, since we condition on Event $\mathcal{E}'$, we have by \autoref{lem:invarlemma} that
  $|X_{r-4}| \le \max{2n^{1/3}\ln(16b/\delta)^2, n/(2^t  \zeta_t)}$.
  Now, by definition of $  \zeta_t$, we have that $n/(2^t  \zeta_t) \le 2n^{1/3}\ln(16b/\delta)^2$ 
  and therefore, 
  $|X_{r-4}| \le 2n^{1/3}\ln(16b/\delta)^2$. Hence, the last call to $\OneRoundMax$ generates
  at most $O(n)$ queries (by definition of $\delta$).
\end{proof}

\subsubsection{Lower bound (value model) - Proof of \autoref{theorem:maxLB}} 

In order to prove {\autoref{theorem:maxLB}} we will use the following technical lemma.
\begin{lemma}\label{lem:helper} Consider the sequence $(x_t)_{t\geq 0}$ defined by: 
\[x_t =
\begin{cases}
1 & t=0 \\
4x_{t-1}b^{ 5^t x_{t-1}} & t\geq 1
\end{cases} .
\]
Then $x_t \leq
 \frac{ (10b)\uparrow\uparrow (t+1) }{10^{t+1}}$.
\end{lemma}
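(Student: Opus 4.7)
\textbf{Proof proposal for \autoref{lem:helper}.} I would prove the inequality by induction on $t$, setting $y_t := \frac{(10b)\uparrow\uparrow(t+1)}{10^{t+1}}$ so that the goal is $x_t \leq y_t$. The base case $t=0$ is immediate: $y_0 = (10b)/10 = b \geq 1 = x_0$ (assuming $b\geq 1$, which follows from the tower function being defined for $b>1$).

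For the inductive step, the key identity is $(10b)\uparrow\uparrow(t+1) = (10b)^{(10b)\uparrow\uparrow t} = (10b)^{10^{t} y_{t-1}}$, so
\[
y_t \;=\; \frac{(10b)^{10^{t} y_{t-1}}}{10^{t+1}} \;=\; \frac{10^{10^{t} y_{t-1}} \cdot b^{10^{t} y_{t-1}}}{10^{t+1}}.
\]
Using the inductive hypothesis $x_{t-1}\leq y_{t-1}$ (and monotonicity of $u \mapsto 4u b^{5^t u}$ on $u\geq 0$), it suffices to show
\[
4\,y_{t-1}\,b^{5^{t} y_{t-1}} \;\leq\; \frac{10^{10^{t} y_{t-1}}\,b^{10^{t} y_{t-1}}}{10^{t+1}},
\]
or equivalently, after rearranging,
\[
4\cdot 10^{t+1}\cdot y_{t-1} \;\leq\; 10^{10^{t} y_{t-1}}\,\cdot\, b^{(10^{t}-5^{t})\,y_{t-1}}.
\]

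Since $b\geq 1$ and $10^t>5^t$, the second factor on the right is at least $1$, so it is enough to verify $4\cdot 10^{t+1}\cdot y_{t-1}\leq 10^{10^{t} y_{t-1}}$. Taking $\log_{10}$, this reduces to
\[
\log_{10}(4) + (t+1) + \log_{10}(y_{t-1}) \;\leq\; 10^{t}\,y_{t-1},
\]
which holds for all $t\geq 1$ because $y_{t-1}\geq 10b \geq 10$ (already at $t=1$, and growing tower-fast thereafter), so the right-hand side is at least $10\cdot y_{t-1}$, which easily dominates a term of order $t + \log y_{t-1}$.

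The only real obstacle is the bookkeeping of the gap between $10^t$ and $5^t$: one needs to set up the right ``denominator'' $10^{t+1}$ in $y_t$ so that both (i) the $5^t$ in the exponent from the recurrence can be absorbed into $10^t$ with room to spare, and (ii) the polynomial slack $4 \cdot 10^{t+1} \cdot y_{t-1}$ is dominated by $10^{10^t y_{t-1}}$. Once the algebraic identity $(10b)\uparrow\uparrow(t+1) = (10b)^{10^t y_{t-1}}$ is observed, the remainder is a direct calculation.
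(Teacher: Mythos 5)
Your proof is correct and mirrors the paper's induction on $t$: the paper streamlines the inductive step by first absorbing $4x_{t-1}$ into one extra factor of $b^{5^t x_{t-1}}$ (so $x_t \le b^{2\cdot 5^t x_{t-1}}$, which is then compared to $y_t$), whereas you expand $y_t$ via the tower identity $(10b)\uparrow\uparrow(t+1)=(10b)^{10^t y_{t-1}}$ and balance the $10^{t+1}$ slack directly, but the two calculations are essentially equivalent. One small bookkeeping slip: at $t=1$ you have $y_{t-1}=y_0=b$, not $10b$; the quantity that is actually large is $10^t y_{t-1}=(10b)\uparrow\uparrow t\ge 10b$, and with this reading the final comparison $\log_{10}(4)+(t+1)+\log_{10}(y_{t-1})\le 10^t y_{t-1}$ still goes through unchanged.
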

\begin{proof}
%
The proof is by induction on $t$. 
%
%
For $t=0$ the claim holds. Now consider general $t\geq 1$.  We have for $t\geq 1$ that $4x_{t-1}\leq b^{ 5^t x_{t-1}} $. Hence, 
\begin{align*} x_t &\leq  b^{ 2\cdot 5^t x_{t-1}} \leq
b^{ 2\cdot 5^t   \frac{(10b)\uparrow\uparrow t}{10^t}  }
\leq b^{ (10b)\uparrow\uparrow (t+1) }
\leq \frac{ 10^{ (10b)\uparrow\uparrow (t+1) }  } {10^{t+1}} b^{ (10b)\uparrow\uparrow (t+1) }
= \frac{ (10b)\uparrow\uparrow (t+1)} {10^{t+1}}
.
\end{align*}
\end{proof}

The following lemma implies \autoref{theorem:maxLB}.
\begin{lemma}\label{lemma:maxLBrankmodel}
    Let $b\geq 4$. Consider an algorithm $A$ for finding the unique maximum in the noisy value model with using  at most $q=n \log_{3} (b)$ queries and  at most $\log_b^*(n)-c$ rounds, for some large enough constant $c$. Then $A$ returns the wrong maximum w.p. at least $1/3$.
   \end{lemma}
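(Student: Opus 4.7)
The plan is to apply Yao's minimax principle and design a hard input distribution together with an adversary's noise strategy. Take the input to be a uniformly random permutation of $(1, 0, 0, \ldots, 0)$ (one unique maximum of value $1$, all other elements value $0$), and let the adversary use the symmetric channel that reports the true binary value with probability $2/3$ and the flipped value with probability $1/3$. Under this channel, a query to the maximum returns $1$ with probability $2/3$, while a query to any non-maximum returns $1$ with probability $1/3$, so distinguishing ``max'' from ``non-max'' reduces to a Bernoulli parameter test in which each additional $k$ queries reduces the error only by $b^{-\Theta(k)}$ for the optimal test (for $k$ smaller than the relevant tower level).

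I will track, round by round, a random candidate set $C_t \subseteq [n]$ containing all elements whose observed transcript is at least as likely under the hypothesis ``this element is the max'' as under ``this element is not the max''. The key inductive claim is that, with probability at least $1 - t/(3r)$, one has $|C_t| \geq n / x_t$, where $x_t$ is the sequence from \autoref{lem:helper}. The base case is trivial since $C_0 = [n]$ and $x_0 = 1$. For the inductive step, I will classify the elements of $C_t$ into \emph{light} (those receiving at most $5^{t+1} x_t$ cumulative queries by the end of round $t+1$) and \emph{heavy}. A Bernoulli anti-concentration estimate shows that every light element of $C_t$ remains in $C_{t+1}$ with probability at least $\tfrac{1}{4} b^{-5^{t+1} x_t}$, independently across elements. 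The cumulative query budget $q \leq n \log_3 b$ forces the number of heavy elements to be at most $n \log_3 b /(5^{t+1} x_t) \leq |C_t|/2$ (for $b \geq 4$ and $t$ larger than a small constant, which we absorb into the constant $c$). Hence at least half of $|C_t|$ is light, and a Chernoff step yields $|C_{t+1}| \geq |C_t|/(4 b^{5^{t+1} x_t}) \geq n/x_{t+1}$ with the required high probability, matching the recursion $x_{t+1} = 4 x_t b^{5^{t+1} x_t}$ of \autoref{lem:helper}.

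Applying the inductive claim at $t = r := \log_b^*(n) - c$ and plugging into \autoref{lem:helper} gives $x_r \leq (10b)\uparrow\uparrow(r+1)/10^{r+1}$, which is at most $n/2$ for a sufficiently large constant $c$, because the tower height $r+1$ is smaller than $\log_{10b}^*(n)$ by a constant. Hence with probability at least $2/3$ we have $|C_r| \geq 2$. By the symmetry of the adversary's noise channel, the posterior distribution over the identity of the true max, conditioned on the entire transcript, is uniform on $C_r$; consequently, any algorithm output is correct with probability at most $1/|C_r| \leq 1/2$, which together with the induction's failure slack gives an overall error probability of at least $1/3$, contradicting the assumed success probability.

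The main obstacle will be handling the adaptivity of the algorithm's query allocation: the number of queries sent to element $i$ in round $t+1$ depends on the observed transcript, which in turn depends on the identity of the true maximum, creating a measurability circularity in the Chernoff arguments. This is precisely the setting where the ``little-birdie'' principle mentioned in the introduction pays off: by revealing to the algorithm, at the start of each round, extra side information (for instance, the identity of the max whenever revealing it can only help) we can treat the partition of $C_t$ into light and heavy as determined before round $t+1$'s answers are generated, so that the survival events of distinct light elements are conditionally independent and the Bernoulli anti-concentration step goes through cleanly. Verifying that this free information does not itself shrink $C_{t+1}$ beyond the $n/x_{t+1}$ bound will be the delicate bookkeeping at the heart of the argument.
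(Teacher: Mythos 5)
The central step of your argument is wrong. You claim that ``by the symmetry of the adversary's noise channel, the posterior distribution over the identity of the true max, conditioned on the entire transcript, is uniform on $C_r$.'' This does not follow. With your symmetric $1/3$-flip channel, the posterior that element $i$ is the max is proportional to the likelihood ratio
\[
L_i \;=\; \frac{(2/3)^{G_i}(1/3)^{B_i}}{(1/3)^{G_i}(2/3)^{B_i}} \;=\; 2^{\,G_i - B_i},
\]
where $G_i$ and $B_i$ are the numbers of observed $1$'s and $0$'s for element $i$. Your set $C_r$ only enforces $G_i - B_i \geq 0$; it does not fix the value of $G_i - B_i$, so the $L_i$'s in $C_r$ are generally far from equal. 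The true max, when queried $k$ times, has $G_i-B_i$ concentrated around $+k/3$; a surviving non-max has $G_i-B_i$ concentrated near $0$ (it is conditioned on a rare tail event). Hence the Bayes-optimal guess --- output $\arg\max_i (G_i-B_i)$ --- picks out the true max with probability close to $1$, not $\leq 1/|C_r|$. So your final bound, and with it the whole lower bound, collapses.

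The paper avoids exactly this by using an \emph{asymmetric} adversary: for the unique max, the adversary \emph{always} answers rank~$1$ (the truth), and for a non-max, with probability $1/3$ it lies by also answering rank~$1$. The set $S_{t}$ is then defined as those elements whose queries \emph{always} returned rank~$1$ and which received \emph{exactly} the same number of queries as the max (a property arranged via a little-birdie step that tops up Class-2 elements to exactly $q'$ queries). All of $S_t$ therefore share an \emph{identical} transcript --- not merely a transcript with likelihood ratio $\geq 1$ --- so the posterior really is uniform on $S_t$ and guessing gives success probability exactly $1/|S_t|$. That identity-of-transcripts property is what your construction is missing, and it is essential, not a technicality.

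Two secondary points. First, your claim that the Bernoulli test has error decaying as $b^{-\Theta(k)}$ is a misstatement: for a fixed symmetric channel the per-query decay rate is an absolute constant (roughly $\sqrt{8/9}$ per query), independent of $b$. The $b$-dependence enters only through the budget $q=n\log_3 b$ and the $3^{-q'}$ survival probability of the paper's asymmetric adversary. Your stated lower bound $\tfrac14 b^{-5^{t+1}x_t}$ is numerically weak enough to be true, but the reasoning behind it is wrong, which suggests a misunderstanding of where the $\log b$ per element comes from. Second, ``absorb into the constant $c$'' to handle small $t$ (where $n\log_3 b/(5^{t+1}x_t)>|C_t|/2$) is not enough on its own: you still need to argue that whatever the algorithm does with its entire budget in the first few rounds cannot shrink the surviving candidate set below $n/x_{t_0}$ at the point where your induction begins. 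The paper's choice of a \emph{self-scaling} threshold $q'=5^t q/|S_{t-1}|$ sidesteps this because the Class-1 fraction is always exactly $5^{-t}$ regardless of $b$.
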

\begin{proof} 
The adversary strategy is as follows: when queried for the value of some element $x$,  it answers with the true value of $x$ with probability $2/3$, and with the complementary probability, $1/3$, the adversary answers with the value of rank $1$. In particular, the adversary always responds with rank $1$ when when the rank of the maximum element $x_1$ is queried.

In the following we assume that the algorithm knows the strategy of the adversary. 
For all rounds $t \geq 1$, we define $S_{t-1}$ to be the set of elements whose queries have always
returned rank $1$ and that have received exactly as many queries as $x_1$ during the first  $t-1$ rounds.

Partition the elements of $S_{t-1}$ into two classes: 
\begin{itemize}
\item Class 1 consists of those elements on which the algorithm  spends strictly more than $q'=5^tq/|S_{t-1}|= \log_3(b^{5^t n/|S_{t-1}| } )$ queries
during round $t$, and 
\item Class 2 consists of those elements on which the algorithm spends at most $q'$ queries.
Since the algorithm is limited to $q$ queries, at most $|S_{t-1}|/5^t$ elements of $S_{t-1}$ are in Class 1. 
\end{itemize}	
For the sake of proving our lower bound, we use the ``little birdie principle'' and provide the algorithm with additional information:
at the end of round $t$, the adversary reveals the true rank of all elements in Class 1, and answers for free additional (noisy)
queries about elements in Class 2, so that every element in Class 2 has exactly $q'$ queries during round $t$. This can only help the algorithm. 
	
		For the algorithm, at the beginning of round $t$ the elements of $S_{t-1}$ are indistinguishable from one another and from the unique maximum $x_1$.
		Hence, assuming a random permutation of the input, 
		we have that the probability that $x_1$ is in Class 1 is at most $5^{-t}$. Should $x_1$ be in Class 1, then without loss of generality the algorithm `wins' immediately and no further queries are required. 
		 At the end of round $t$, $S_t$ consists of those elements of $S_{t-1}$ that were in Class 2 and whose $q'$ queries were all answered with rank $1$. 	
		
		Let  $t'= \arg\max_t \{ b\uparrow\uparrow (t+2) \leq \sqrt{n}\}$. Note that $t'\geq \log_b^*(n)-O(1)$
In the following we show, by induction, that for all $t \leq t' $, w.p. at least $1-\sum_{1\leq \tau \leq t} (5^{-\tau}+\frac{1}{n^9})$ it holds that
\begin{align}\label{eq:loco} 
|S_t| \geq  n/x_t, 
\end{align}
where $x_t$ is defined in \autoref{lem:helper}. For $t=0$ the claim holds trivially since $|S_0|=n$ and $x_0=1$. 
Consider $t\geq 1$ and condition on $|S_{t-1}|\geq n/x_{t-1}$. We have 
\begin{align*}		
		\E{|S_{t}|~\mid~\mathcal{F}_{t-1}} &=  
\frac{|\{ x \colon x\in S_{t-1}\cap  \text{Class 2} \} |}{3^{q'}} \geq
\frac{|S_{t-1}| -  |\{ u \in \text{Class 2} \} |}{3^{q'}} \geq 
\frac{|S_{t-1}| (1-5^{-t})}{3^{  5^tq/|S_{t-1}|  }} \\
&\geq 
\frac{|S_{t-1}| }{2\cdot b^{  5^t  n    /|S_{t-1}|  }} \geq   \frac{n  }{x_{t-1} 2\cdot b^{  5^t  x_{t-1}  }} = 2\frac{n}{x_t} ,
 \end{align*}
 where $\mathcal{F}_t$ denotes the filtration up to time $t$.
 \autoref{lem:helper} implies that 
\[ x_t\leq \frac{ (10b)\uparrow\uparrow (t+1) }{10^{t+1}} \leq (10b)\uparrow\uparrow (t+1)  \leq b\uparrow\uparrow (t+2) 
\leq b\uparrow\uparrow (t'+2) \leq \sqrt{n}  ,\]  for all $b\geq 4,t\geq 1$.  
Thus,   $\E{|S_{t}|~\mid~\mathcal{F}_{t-1}}  \geq \frac{n}{x_t} \geq  \sqrt{n}$.

Therefore,  multiplicative Chernoff bounds imply that with high probability $|S_t|\geq (1/2)\E{|S_{t-1}|}$:
$$  \Pr{ |S_t| \geq \frac{\E{|S_{t}|}}{2}} = 1- \Pr{ |S_t| < \frac{\E{|S_{t}|}}{2}}  \geq  1- \exp\left(- \left(\frac12\right)^2 \frac{\E{|S_{t-1}|}}{2} \right) \geq 1-1/n^9.  $$
Thus,
\[ |S_{t}|  \geq  \frac{\E{|S_{t}|}}{2} \geq  \frac{n}{x_t},  \]
w.p. at least $ 1-\sum_{1\leq \tau \leq t-1} (5^{-\tau}+\frac{1}{n^9})  - 5^t - \frac{1}{n^9} =  1-\sum_{1\leq \tau \leq t} (5^{-\tau}+\frac{1}{n^9}) $, where we used Union bound.
This concludes the inductive step and proves \eqref{eq:loco}.

Suppose the algorithm has not found $x_1$ after $t'$ rounds and outputs an element of $S_{t'}$ chosen u.a.r., then its success probability (given that $x_1\in S_{t'}$) is at most 
\[ \frac{1}{|S_{t'}|} \leq \frac{x_{t'}}{n} = \frac{1}{\sqrt{n}}. \]
Thus, with query complexity bounded by $q=n\log_3(b)$ and round complexity bounded by $r=\log^*_b(n)-\Theta(1)$, the output is correct with probability at most 
$\sum_{1\leq t\leq r} (5^{-t}+\frac{1}{n^9})
+\frac{1}{\sqrt{n}}<2/3$.
\end{proof}

%
%
%
\subsection{The \textsc{Threshold}-$v$ Problem}\label{sec:threshold}
%
%

In this section, we focus on the Threshold-$v$ problem: given a
multiset $V$ of elements, find all the elements whose values are
at least $v$. Note that this problem is only defined in the
noisy-value-model.
We start by giving an upper bound for the threshold-$v$ problem.
 
\begin{theorem}{\bf(Oblivious Algorithm
    for Threshold-$v$)} \label{pro:mightyUB}
Consider the $\textsc{Threshold-v}$ problem in the noisy value model. 
Fix a set of elements $X$ with values $V(X)=(v_1,v_2, \dots, v_n)$, and a value $v$.
The oblivious algorithm $\Threshold$
returns the elements of $X$ with value at least $v$ with correctness probability at least $2/3$
and has
\begin{enumerate}
\item expected query complexity $\E{q}=O\left( n + n \log (\min \{k_v,n-k_v\} ) \right),$ 
where $k_v$ is the number of such elements and 
\item expected
round complexity $\E{r} = O(\log_2^* n\cdot \log \log k_v).$
\end{enumerate}
\end{theorem}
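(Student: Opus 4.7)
The algorithm will be a multi-phase doubling scheme. It maintains a shrinking set of ``candidates'' (initially all $n$ elements) that it believes may have value $\geq v$. Over $T = O(\log \log k_v)$ phases, it performs exponentially more queries per surviving candidate, dropping those whose cumulative majority vote falls below $v$. Without loss of generality I would assume $k_v \leq n/2$, running a symmetric variant on the complement set otherwise; this is what produces the $\min\{k_v, n-k_v\}$ appearing in the bound.

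\textbf{Algorithm sketch.} In phase $t = 1, 2, \ldots, T$, I would query each current candidate $\Theta(2^t)$ additional times, and retain only those whose cumulative majority vote over all queries received so far is $\geq v$. The algorithm halts once $2^T \gtrsim \log |C_T|$, where $|C_T|$ denotes the current candidate count; at this threshold the per-element misclassification probability has decayed to $1/|C_T|^{\Omega(1)}$.

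\textbf{Analysis.} For correctness, a standard Chernoff bound shows that with $q = \Theta(\log k_v)$ total queries per surviving element, a truly-$\geq v$ element is retained with probability $1 - 1/k_v^{\Omega(1)}$ and a truly-$<v$ element is retained only with probability $1/k_v^{\Omega(1)}$; a union bound over the $k_v$ relevant elements (or $n - k_v$ after complementing) then delivers overall success probability at least $2/3$. For the query complexity, I would use the bound $|C_t| \leq k_v + (n-k_v) \cdot 2^{-\Omega(2^{t-1})}$ and sum $2^t |C_{t-1}|$ across phases: the $n$ term from phase $1$ yields the additive $O(n)$; the $k_v$-contribution telescopes to $O(k_v \log k_v)$ because $2^T = O(\log k_v)$; and the $(n-k_v)$-contribution converges to $O(n)$ since $2^t \cdot 2^{-\Omega(2^{t-1})}$ decays super-exponentially. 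The total is $O(n + k_v \log k_v)$, which is within the claimed $O(n + n \log \min\{k_v,n-k_v\})$. The $T = O(\log \log k_v)$ phases each take one round of parallel queries; the extra $O(\log^* n)$ factor in the round complexity will come from invoking a $\Max$-style subroutine (\autoref{lem:topksidekicks}) to obliviously verify the stopping criterion at each phase.

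\textbf{Main obstacle.} The principal difficulty will be justifying the oblivious stopping rule: lacking knowledge of $k_v$, the algorithm must decide from $|C_t|$ alone when it has converged close enough to $k_v$ to halt. I plan to stop as soon as $2^t \gtrsim C \log |C_t|$ and then argue that at this point the remaining expected false positives are $o(1)$, so that the answer is correct. A secondary subtlety is that a truly-$\geq v$ element might be erroneously dropped in some intermediate phase even though it would survive the final one; ruling this out requires the cumulative query count to grow fast enough that the cumulative-majority random walk stays safely positive throughout all phases, which I would handle by a maximal-inequality-and-union-bound argument across phases.
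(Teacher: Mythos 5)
Your per-element doubling-and-majority scheme has two concrete failures, and the strongest evidence that something is wrong is that your claimed complexity $O(n+k_v\log k_v)$ beats the paper's matching lower bound (\autoref{pro:mightyLB}): for $k_v=\sqrt n$ the lower bound is $\Omega(n\log n)$ while your bound is $O(n)$. The first and fatal gap is the false-negative rate in early phases. In phase $1$ each element has received only $O(1)$ queries, so each truly-$\geq v$ element is eliminated with some constant probability $p_0>0$ (the adversary lies low on all $O(1)$ queries with constant probability); since correctness requires retaining \emph{all} $k_v$ such elements, the algorithm already fails with probability $1-(1-p_0)^{k_v}\to 1$ once $k_v=\omega(1)$. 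Your proposed maximal-inequality fix cannot repair this: the damage occurs at the very first elimination decision, where the sample size is constant, not from the random walk later dipping negative. Protecting the true positives forces $\Omega(\log k_v)$ queries per element \emph{before} any elimination, i.e.\ $\Omega(n\log k_v)$ queries up front --- which is exactly the bound to be proved, and which you cannot budget for obliviously without a mechanism for guessing and verifying $k_v$. The second gap is at termination: stopping once the cumulative count is $\Theta(\log|C_T|)=\Theta(\log k_v)$ leaves $(n-k_v)\cdot k_v^{-\Omega(1)}$ expected surviving false positives, which is $\omega(1)$ whenever $k_v=n^{o(1)}$ (e.g.\ $k_v=\log n$ leaves $n/\mathrm{polylog}(n)$ of them). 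Driving every below-threshold element's survival probability down to $O(1/n)$ by individual threshold-testing costs $\Omega(n\log n)$ queries, which exceeds the target when $k_v$ is small.

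The paper's algorithm avoids both traps by never threshold-testing generic elements individually. It guesses $k'$ (squaring the guess each level, giving the $\log\log k_v$ factor), partitions $X$ into $k'^2$ random parts, runs $\Max$ on each part to extract its top two elements, and threshold-tests \emph{only those two maxima} with $\Theta(\log k')$ repetitions; a part recurses only if both maxima exceed $v$. Below-threshold elements are thus eliminated by losing comparisons inside their part --- $\Max$ on a part of size $n/k'^2$ with failure probability $1/\mathrm{poly}(k')$ costs $O((n/k'^2)\log k')$, so no per-element certification against $v$ is ever needed --- and above-threshold elements are protected by a union bound over only $\mathrm{poly}(k')$ parts rather than over $n$ elements. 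The guess is implicitly verified because a too-small $k'$ forces two above-threshold elements into one part, which triggers recursion. If you want to salvage a direct testing approach, you would need to graft on both a guess-and-verify loop for $k_v$ and a separate cheap mechanism for eliminating the below-threshold mass; at that point you are essentially rebuilding the paper's construction.
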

Note that a worse-case run time of $O\left( n \log (\min \{k_v,n-k_v\} + 1) \right)$ cannot be achieved since the algorithm does not know $k_v$ and has to estimate it.

We then show that the worst-case
query complexity cannot be improved, no matter
the number of rounds allowed. Motivated by
rating and grading systems where the number of distinct values is
bounded, we turn to instances where the number of distinct values
is bounded by $n^{1-\epsilon}$ for some constant $\epsilon$.

\begin{theorem}{\bf (Lower Bound
    for \textsc{bounded threshold$-v$})}\label{pro:mightyLB}
    Consider the $\textsc{Threshold-v}$ problem in the noisy value model.
  Fix an arbitrary set of elements $X$ with values $V(X)=(v_1,v_2,\dots, v_n)$ such that the number of distinct elements $\ell$ is bounded by
  $\ell=O(n^{1-\varepsilon})$, where
  $\varepsilon >0$ is a constant.
  Let $k_v$ be the number of elements with value at least $v$.
  Any instance-optimal algorithm, even with prior knowledge of
   $V(X)$, and even with no constraints on the number of rounds,
  has an  query complexity 
  \[ q= \Omega\left( n+n \log (\min \{k_v,n-k_v\} ) \right). \]
  \end{theorem}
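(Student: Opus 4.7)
The $\Omega(n)$ part of the bound follows directly from \autoref{lem:Omegan}, so the task is to establish the $\Omega(n \log k)$ part, where $k := \min(k_v, n - k_v)$. By the symmetry between the problems of finding elements above and below the threshold $v$, I may assume $k = k_v \leq n/2$; the degenerate cases $k_v \in \{0, n\}$ are already covered by the trivial bound. In the remaining case, there exist values $v^+, v^- \in V(X)$ with $v^+ \geq v$ and $v^- < v$, and these will play a central role in the adversary strategy. Since instance-optimality is measured against the best algorithm on permutations of the instance (see \autoref{def:ionew}), the goal is to prove a worst-case lower bound over permutations of the given instance.

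My adversary's lying scheme is the following: on a query about an element $x$, return its true value with probability $2/3$; with the remaining probability $1/3$, return $v^-$ if the true value is $\geq v$, and return $v^+$ otherwise. Under this scheme the distribution of responses on $x$ depends only on whether $x$ is above or below threshold, and above/below elements produce mirror-image response distributions. Following the phase-based framework of \autoref{lemma:maxLBrankmodel}, I track a ``confusion set'' $S_t$ consisting of those elements whose response histories remain statistically consistent with being on either side of the threshold. At each phase, the elements of $S_{t-1}$ are split into a Class 1 of those receiving many queries in that phase (whose true class is then revealed for free by the little-birdie adversary) and a Class 2 of those receiving few queries (which remain in $S_t$ when their responses stay consistent with both classes). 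A Chernoff-style recursion analogous to \autoref{lem:helper} then shows that, with fewer than $c n \log k$ total queries for a small enough constant $c$, $|S_t|$ remains $\Omega(k)$ throughout.

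The final step is to argue that when $|S_t|$ is large the algorithm must effectively guess how the above- and below-threshold elements are distributed within $S_t$. Because the symmetric lying scheme maintains a comparable number of above and below elements in the confusion set, the number of partitions of $S_t$ consistent with the multiset $V(X)$ is large, and a uniform random guess succeeds with probability less than $2/3$. The main technical obstacle is the instance-optimal character of the bound: the adversary must succeed against an algorithm that fully knows $V(X)$ and is tailored to the specific instance, not just against an algorithm attacking a single worst-case input. The bounded-value hypothesis $\ell = O(n^{1-\varepsilon})$ is what makes this possible---by pigeonhole some value class has $\Omega(n^\varepsilon)$ elements, so $v^+$ and $v^-$ can always be chosen with enough multiplicity that responses drawn from the lying distribution reveal no information about the value beyond the class, generating the $\Omega(k)$ indistinguishable elements needed to drive the confusion argument to a query cost of $\Omega(n \log k)$.
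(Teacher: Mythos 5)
Your high-level plan---make above-threshold elements impersonate below-threshold ones, cap the per-element query count at $O(\log k)$ by a counting argument, then argue a surviving liar hides among too many candidates---has the right flavor and matches the skeleton of the paper's proof, which reduces to a ``3-phase model'' where every element receives $q'=\log_3(k)/1000$ noisy queries, the adversary then reveals some values for free, and the algorithm gets $n/1000$ further adaptive, error-free queries. But there are two genuine gaps.

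First, the hiding space your adversary creates is too small. With a single pair $(v^+,v^-)$, any element whose true value lies outside $\{v^+,v^-\}$ is unmasked the moment it answers truthfully once, so after $q'$ queries the confusion set essentially consists of the elements valued exactly $v^-$ (plus an exponentially small fraction of complete liars). Under $\ell=O(n^{1-\varepsilon})$, pigeonhole only guarantees a value class of size $\Omega(n^{\varepsilon})$, and your stated invariant $|S_t|=\Omega(k)$ is weaker still. Neither survives the residual budget: after the reduction the algorithm retains $\Theta(n)$ adaptive queries whose answers are effectively exact, and $\Theta(n)\gg n^{\varepsilon}$ (and $\gg k$ when $k$ is polynomially smaller than $n$), so it can simply probe every remaining candidate and win. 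The paper's construction is built precisely to avoid this: it forms $\Theta(\sqrt n)$ buckets of equal-valued below-threshold elements, each of size $\Theta(\sqrt n)$ (total mass $\Theta(n)$, which is where $\ell=O(n^{1-\varepsilon})$ is really used), has each element of $S_0\cup R$ impersonate a uniformly random bucket, and in Phase 2 reveals all but one received element per bucket, so the unique surviving $S_0$-liar is uniform over $\Theta(\sqrt n)$ buckets, each costing $\Theta(\sqrt n)$ queries to open. That two-level structure is what defeats $n/1000$ adaptive perfect queries; a single confusable value class does not.

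Second, the machinery you import---the phase recursion of \autoref{lemma:maxLBrankmodel} and \autoref{lem:helper}---is a round-complexity tool: it exploits that queries within a round are chosen non-adaptively. \autoref{pro:mightyLB} places no constraint on rounds, so that recursion yields nothing against a fully adaptive algorithm. The correct reduction is \autoref{claim:reductionthreshold}: since the total budget is $n\log_3(k)/10^6$, at most $n/1000$ elements can ever receive more than $q'$ queries, and those may as well be answered truthfully; the lower bound must then come entirely from the hiding argument above, not from per-round shrinkage of a confusion set.
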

  \fnote{Is is weird to first quantify all instances and then algorithms? Is it okay not to talk about probabilities since it's in the definition of instance optimal algorithm?}

\subsubsection{Upper Bound (comparison model) - Proof of \autoref{pro:mightyUB}}

We start by providing an Algorithm  $\Threshold(X,v)$, which,  finds with probability at least $2/3$ all elements with value at least $v$. The query  complexity is $O(n\log k_v)$ and the round complexity is $O(\log^*n \cdot \log\log k_v)$, where we recall that $k_v$ is the number of elements with value larger than $v$.
For the ease of notation we will write $k$ instead of $k_v$.
Remember that the algorithm does not know $k$ in advance (otherwise the algorithm has asymptotically optimal  round complexity of $\log_k^*n$). To circumvent this, the algorithm guesses $k$ by starting with $2$ and from there on it quadratically increases its current estimate resulting in $\log \log k + O(1)$ rounds.

The algorithm is recursive and its idea is as follows.
At every recursive call with  parameter $k'$, which is the current estimate of $k$, the algorithm simply divides the input into $k'^2$ parts and finds the two maxima in each part and then verifies that these maxima are indeed larger than the threshold $v$. 
There are three possible outcomes for each part.
\begin{enumerate}[(i)]
\item The maxima were smaller than $v$. In this case no further queries in this part are made.
\item  Exactly one maxima is larger than $v$. This element will be part of the final output, and similarly as before, no further queries in this part are made.
\item The two maxima have values above $v$.
In this case there are potentially even more values above $v$ and thee algorithm calls itself recursively (in hope to find even more such values). At the same time the estimate of $k$ is increased from $k'$ to $k'^2$; 
the idea being that the deeper the level of the recursion, the small the parts become and larger estimates $k'$ of $k$ can be tolerated (without exceeding the query complexity).
\end{enumerate}
As we will see in the analysis, the depth of all leaves in the recursive tree is likely the same---up to an additive constant.

In order to achieve bounds of \autoref{pro:mightyUB} we need to handle the case that the number of elements with value larger than $v$ exceeds $n/2$ separately: We use a meta-algorithm that simply runs a symmetric algorithm in parallel; the symmetric version stops whenever it is sure with probability at least $5/6$ that there is a set $\bar X$ of  $n-k$ elements strictly smaller than $v$. 
   Should this algorithm stop before  $\Threshold(v, X )$, then we simply output $X\setminus \bar X$ otherwise we output whatever  $\Threshold(v, X )$  computes.
   Additionally, to ensure that the meta-algorithm is a Monte Carlo algorithm, it simply terminates whenever the desired query or round complexity of \autoref{pro:mightyUB} is exceeded.

\begin{algorithm}[H]
\caption*{$\operatorname{\bf Algorithm}$\ $\Threshold(X,v, (optional)~ k' )$     \  \ \ \ \ (see \autoref{pro:mightyUB})
\\
{\bf input:} set $X$, threshold $v$, a positive integer $k'$ (estimate of $k_v$, \ie the size of the output)\\
{\bf output:} all elements in $X$ with value larger than $v$\\
{\bf error probability:} $1/6$
}
\begin{algorithmic}[1]
\label{alg1}

\STATE $k' \gets \max\{ k', 2\}$
	\STATE  partition $X$ into $k'^2$ randomly chosen sets  $Y_1, Y_2, \dots, Y_{k'^2}$  of equal size

        \FORALL{$i \in [k'^2]$}
	\STATE $y^1_i \gets \Max(Y_i,O(\log_{2}^*(n)), \frac{1}{64k'^4})$
	\STATE $y^2_i \gets \Max(Y_i \setminus \{ y^1_i \} , O(\log_{2}^*(n)), \frac{1}{64k'^4})$
        \ENDFOR
\STATE $\tau \gets 100\log(64k'^{4})$
        \FORALL{$i \in [k'^2]$}
	\STATE test $\tau$ times whether $y^1_i$ and $y^2_i$ are at greater or equal than $v$
        \STATE $p^1_i,p^2_i \gets$ number of positive answers to the tests
        \ENDFOR

        \FORALL{$i\in [k'^2]$  }
	\STATE $X'_i =  \begin{cases}
		\emptyset & p^1_i < \tau/2 \text{ (no large element) }\\
		\{ y^1_i \} & p^1_i \geq \tau/2 \text{ and } p^2_i < \tau/2 \text{ (one large element)}\\
			\Threshold(v, X,  k'^2 )& p^1_i \geq \tau/2 \text{ and } p^2_i \geq \tau/2 \text{ (at least two large elements) }\\
	\end{cases}$       
        \ENDFOR
	\STATE Return $\bigcup^{k'^2}_{i=1}X'_i$ 
\end{algorithmic}
\end{algorithm}

The following lemma  proves   \autoref{pro:mightyUB}  for $k\leq n/2$. 
As noted above, by running a symmetric version of the algorithm in parallel (that finds elements smaller than $v$), one can  obtain matching bounds for $k>n/2$ yielding  \autoref{pro:mightyUB}. 

\begin{lemma}\label{lem:frank}
%
%
Consider the the noisy-value-model and the Threshold-$v$ problem.
Consider an arbitrary instance $(X,v)$.
Let $k$ be the number of elements above with value greater than $v$.
Algorithm $\Threshold$, without prior knowledge of the multiset $X$
nor $k$ finds the correct output w.p. at least $5/6$ and makes
$O\left( n \log k\right)$ value queries and uses at most $O(\log_2^*n \cdot \log \log k)$ rounds. 

\end{lemma}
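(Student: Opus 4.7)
The plan is to establish three claims: the error probability is at most $1/6$, the expected round complexity is $O(\log_2^* n \cdot \log \log k)$, and the total query complexity is $O(n \log k)$. I condition throughout on the ``good event'' $\mathcal{G}$ that every invocation of $\Max$ returns a true maximum of its input and every batch of $\tau = 100\log(64(k')^4)$ threshold tests correctly classifies the element tested. By the choice of error parameter $\delta = 1/(64(k')^4)$ in each $\Max$ call together with a multiplicative Chernoff bound on the $\tau$ tests, each individual invocation at estimate $k'$ fails with probability at most $1/(64(k')^4)$. Union-bounding over all invocations across the recursion tree (polynomially many in $k$) gives $\Pr[\neg \mathcal{G}] \leq 1/6$. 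On $\mathcal{G}$, the algorithm recurses exactly on the parts that contain $\geq 2$ elements of value $\geq v$ and emits exactly the singleton large elements in the remaining parts, so the output is correct.

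For the round complexity, I argue that the recursion depth is $O(\log\log k)$. Since $k'_{i+1}=(k'_i)^2$ starting from $k'_0=2$, we have $k'_i = 2^{2^i}$; once $(k'_i)^2$ dominates $L^2$ for $L$ the number of large elements in a given call, a birthday-paradox argument on the random partition shows that the expected number of parts containing $\geq 2$ large elements is $O(L^2/(k'_i)^2)$, which becomes $o(1)$ after $O(\log \log k)$ levels, so the recursion (probably) terminates. Each level uses $O(\log_2^* n)$ rounds for the parallel $\Max$ calls (by \autoref{lem:topksidekicks}) plus $O(1)$ rounds for the threshold tests, yielding $O(\log_2^* n \cdot \log \log k)$ rounds overall.

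For the query complexity, I decompose queries into max-queries and test-queries. At level $i$ with estimate $k'_i$, a single call with input of size $s$ partitions into $(k'_i)^2$ parts of size $s/(k'_i)^2$ and invokes $\Max$ twice on each non-empty part with error parameter $1/(64(k'_i)^4)$; by \autoref{lem:topksidekicks} this costs $O(s\log k'_i)$ max-queries per call. Each non-empty part additionally uses $O(\log k'_i)$ test-queries. Summing over all calls at level $i$, let $n_i$ be the total input size and $P_i$ the number of non-empty parts; the level-$i$ inputs are disjoint subsets of $X$, so $n_i \leq n$ and $P_i \leq n_i \leq n$. Hence the total cost at level $i$ is $O(n \log k'_i)$. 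Finally, summing over all $d = O(\log \log k)$ levels, $\sum_{i=0}^{d} \log k'_i = \log \prod_i k'_i = O(\log k'_d) = O(\log k)$, so the total query complexity is $O(n \log k)$. The main obstacle I anticipate is handling the case where $(k'_i)^2$ exceeds a call's input size $s$: many of the $(k'_i)^2$ parts are then empty, but since empty parts generate no queries at all, the bound $P_i \leq n$ is preserved and the per-level counting above still goes through.
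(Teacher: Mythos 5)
Your proof follows the paper's overall strategy closely: condition on a ``good event'' under which every $\Max$ call and every batch of threshold tests succeeds, argue that the recursion terminates at depth $O(\log\log k)$ via a birthday-paradox argument on the random partition, and sum the query cost per level using the fact that $\sum_{i\le d}\log k'_i=\Theta(\log k'_d)=\Theta(\log k)$ when $k'_i=2^{2^i}$. The query-complexity bookkeeping (including the observation that empty parts generate no queries, so the number of non-empty parts is at most $n$ at every level) matches the paper's, and the round-complexity claim of $O(\log^*_2 n)$ rounds per level of the tree is the same.

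The one place your argument is too coarse is the correctness union bound. You write that there are ``polynomially many in $k$'' invocations and conclude $\Pr[\neg\mathcal{G}]\leq 1/6$, but this conceals the actual reason the bound is small. The per-part failure probability at level $i$ is $\Theta(1/(k'_i)^4)=\Theta(2^{-4\cdot2^i})$, while the number of parts at level $i$ grows as fast as $\prod_{j\leq i}(k'_j)^2=2^{\Theta(2^i)}$; both quantities change doubly exponentially in $i$, so the conclusion depends on the \emph{relative} rates, not on a ``polynomially many $\times$ small'' slogan. The paper makes this balance explicit by summing a doubly-exponentially decaying series (its bound is of the form $\sum_i \frac{1}{16\cdot 2^{2^i}}\leq 1/8$). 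You should make the level-by-level product explicit and verify it telescopes to a constant; a reader cannot check your version as written. A related imprecision is in the depth bound: the expected number of collision parts at level $i$ should be aggregated over \emph{all} nodes at that level using $\sum_{\text{nodes}} L_{\text{node}}^2 \leq k^2$ (which is exactly what the paper does by considering the assignment of all $k$ large elements to the buckets at the critical level), and then you need a Markov/union-bound step to convert ``expected $o(1)$ collisions'' into ``the recursion stops w.h.p.'' -- you gesture at this with ``(probably) terminates'' but it deserves to be spelled out.
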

\begin{proof}
 \noindent 
{\bf Correctness:} 
Consider the recursion tree, where the root (at depth $0$) is the initial call to $\Threshold(X,v)$.
Consider the recursions at a node $u$ of the recursion tree with parameters $v, X, k'$.
Let  $Y_1, Y_2, \dots, Y_{k'^2}$ be the partition of $X$ at node $u$.

We claim that for  every child of $u$ (in the recursion tree), corresponding to one of the part $Y_i$, all of the following holds  w.p. at least $1-1/(16k'^4)$. 
\begin{enumerate}
	\item If $Y_i$ contains no element larger than $v$, then $p_i^1<\tau/2$ (the found maximum was in at most $\tau/2$ tests larger than $v$).
	\item If $Y_i$ contains one element larger than $v$, then the value of $y^1_i$ is at least $v$,  $p_i^1 \geq \tau/2$ and $p_i^2 < \tau/2$.
	\item If $Y_i$ contains two elements larger than $v$, then 
	 the values of $y^1_i$ and $y^2_i$ are at least $v$, 
	$p_i^1 \geq \tau/2$ and $p_i^2 \geq \tau/2$.
\end{enumerate}

Assuming the above claim,
we can consider the error over all layers (depth of the recursive tree).
Note that at level $i$ of the tree, $k'$ is of size $2^{2^i}$ and the error probability equals
$1/(16k'^4)= \frac{1}{16\cdot \left(2^{2^i}\right)^2} $.
 Thus, by Union bound over all parts, the  error is at most
	  	   \[ \sum_{i=0}^{\text{number of layers}} \frac{2^{2^i}}{16\cdot \left(2^{2^i}\right)^2} 
		   = \sum_{i=0}^{\text{number of layers}} \frac{1}{16\cdot 2^{2^i}} \leq \frac{1}{16} \sum_{i=0}^{\infty} \frac{1}{2^{2^i}} \leq \frac{1}{8} .  \]
Note that the above claim yields the correctness since the claim ensures  that all elements above the threshold are returned. 
	  
We now prove the above claim.	  
	  Consider an arbitrary part $Y_i$.
	  We distinguish between three cases:
	  
	  \begin{itemize}
	  	  	\item $Y_i$ contains at least two elements larger than $v$:
	  W.p. at least $1-1/(64k'^4)$ the element $y^1_i$ returned by $\Max$ will exceed $v$ and w.p. at least $1-1/(64k'^4)$ half of the queries to it will be larger than $v$ (using Chernoff bounds \autoref{lemma:Chernoff}). 
	  Similarly, $y^2_i$ will be computed  correctly and at least half of the queries will be larger than $v$.
	  By Union bound w.p. $1-1/(16k'^4)$ node $u$ will correctly launch a recursive call to $Y_i$. 
	  	
	  	\item $Y_i$ contains exactly one element larger than $v$:
	  W.p. at least $1-1/(64k'^4)$ the element $y^1_i$ returned by $\Max$ will exceed $v$ and w.p. at least $1-1/(64k'^4)$ half of the queries to it will be larger than $v$. 
	  Furthermore, $p^2_i < \tau/2$ w.p. at least   $1-1/(64k'^4)$ since $y^2_i$ is smaller than $v$.  
	  By Union bound w.p. $1-1/(16k'^4)$ node $u$ will correctly return $y^1_i$ and not launch a recursive call to $Y_i$. 
	  \item  $Y_i$  does not  contain any element larger than $v$:
	  Similarly as before, by Union bound w.p. $1-1/(16k'^4)$ node $u$ will not launch a recursive call to $Y_i$.
	  \end{itemize}
This proves the claim and yields the correctness.	
  
	  \medskip
	   \noindent 
	  {\bf Round complexity:} 
Consider the recursion tree and assume that the computation at every node is correct.
At each level of the recursion tree $O(\log^*_{2} n)$ rounds are necessary (due to $\Max$).
Suppose the depth of the recursion tree was $\log \log k+2$. Then
The total number of rounds is at most $O(\sum_{i=0}^{\log \log k+2}\log^*_{2} n)$
 which is bounded by $O((\log \log k) \cdot \log^*_2 n)$.
It remains to show that depth of the recursions tree is $\log \log k+2$

 
Consider the level $\ell=\log \log k +2$. There are $2^{2^{\ell}}= 2^{4 \cdot 2^{\log \log k}}=k^4$ elements at that level.

 Instead of considering the process that divides elements into equal-sized parts, we consider the equivalent version that works as follows.
 The elements are assigned one after the other to the buckets:
  initially, each bucket has $n/k^4$ empty slots. After $t$ elements were assigned, there are $n-t$ slots left, each equiprobable. Thus, the probability for an element to be assigned to a bucket $j$ is given by $(n/k^4 - z_j)/(n-t)$, where $z_j$ denote the number of elements among the first $t$ rounds that were assigned to bucket $j$.

  Consider the set $V$ of elements with value at least $v$.
In order for the algorithm to initiate a recursive call at a node $u$ (of the tree), at least two elements of $V$ at the same node must be assigned to the same part (child of $u$).

Consider all parts $Y_1, Y_2, \dots, Y_{k^4}$  on level $\ell$ each containing $n/k^4$ elements.
 Note that every partition of the $n$ elements to the $k^4$ buckets is equally likely.
  At node $u$ there are at most $k$ elements of $V$. W.l.o.g. we assume there are exactly $k$.
  Let $Z_1, Z_2, \dots, Z_k \in [k^4]$ denote the buckets to which the $k$ elements are assigned to. 
 It is worth mentioning that the $Z_i$ are correlated. Despite this correlation, we can combinatorially bound the probability for two or more elements end up at the same node.
 Let $S$ be any subset of $[k]$. In the following we will condition  on
 an arbitrary assignment (event) $\bigcap_{j\in S} \{Z_j=z_j\}$ (possibly the worst-case assignment), meaning that some of the elements above the threshold are already assigned to buckets.
 In the worst-case all of them are in different buckets, decreasing the probability for the next element to be placed in a bucket without any elements.
 
 We have for all $i\in [k]$ with $i\not \in S$, we have that the probability of a `collision' is bounded by
 \begin{align*} \Pr{ Z_i \in \bigcup_{j \in S} \{  z_j  \} ~\middle|~ \bigcap_{j\in S} \{Z_j=z_j\} } \leq \frac{k\cdot (n/k^4-1) }{n-|S| }\leq\frac{n/k^3-k }{n-k }\leq \frac{n/k^3 }{n } = \frac1{k^3}.  
 \end{align*}
 
 Thus, by Union bound, w.p. at least $1-1/k^2$ all $k$ elements are in different buckets.
By Union bound over all errors, we get that w.p. at least $5/6$ the round complexity is bounded as desired.
  
	  \medskip
 \noindent 
  {\bf Query complexity:} 
  Note that the query complexity is only a function of the depth of the tree since the query complexity at given level of the recursive tree is fixed.
  Assuming that the computation at every node was correct, and that the depth of the tree is at most $\log\log k +2$ (which happens w.p. at least $5/6$, we can calculate the bound on the query complexity.
For some large enough constant $C$,
  we get that the number of queries is bounded by
  
  \begin{align*} \sum_{i=0}^{\log \log  k +2 } C \cdot n \log  \left( 2^{4\cdot 2^{i}} \right)  
   =\sum_{i=0}^{\log \log k +2 }4 C \cdot n  \cdot 2^{i} 
  \leq  8C \cdot n  \cdot 2^{\log\log k+2} 
  =    8C \cdot n  \cdot 2^{\log(4\log k)} 
= O(n \log k).
 \end{align*}
  \medskip
 \noindent 
  
%
%
%
Note that the error probability at level $i$ of the tree 
is at most $\frac{2^{2^i}}{16\left(2^{2^i}\right)^2}$. Hence the probability to continue to the level of the tree (after level $\log \log k +2$) is bounded by the same probability.
Thus 
\[ \E{q} \leq O(n \log k) + \sum_{i=\log\log k +2} C \cdot n \log  \left( 2^{4\cdot 2^{i+1}} \right)  \frac{2^{2^i}}{16\left(2^{2^i}\right)^2} = O(n \log k).
 \]

\end{proof}

\subsubsection{Lower Bound (value model) - Proof of \autoref{pro:mightyLB}}

Recall that $k=k_v$ denotes the number of elements whose value is greater than or equal to the threshold-$v$. 
 Without loss of generality we assume $k \leq n/2$. 
Assume, for a contradiction, that there exists an algorithm $A$ with success probability at least $2/3$ and worst case number of value queries\footnote{no effort was made to derive tight constants} at most $T= (n\log_3 k )/10^6$. For simplicity, we will assume that 
the number $\ell$ of distinct values  is at most $\sqrt{n}$. This is not crucial as the proof extends to any $ \ell \in [2, O(n^{1-\varepsilon})]$ for constant $\eps$---in which case $T$ becomes a function of $\varepsilon$.
Let $q' = (\log_3 k)/1000$. Our proof takes a detour through a different computation model, which we call the 3-phase model.

\paragraph{The 3-Phase Model($q'$).}

During the first phase, the adversary provides
the outcome of $q'$ noisy queries (correct with probability at least 2/3) for all the elements.

During the second phase, the adversary can provide for free the correct value of some of the elements, depending on 
the outcome of the first part.

During the third phase, the algorithm strategically and adaptively 
chooses $n/1000$ elements to query, and in response the adversary reveals their true value.  

Finally the algorithm chooses the output.

\begin{lemma}
  \label{claim:reductionthreshold}  \label{lem:reductopk}
  If there exists an algorithm in the noisy-value model 
  with query complexity $T=n q'$ and 
  with success probability at least 2/3, then there exists an algorithm 
  in the 3-Phase model($q'$) with success probability at least 2/3.
\end{lemma}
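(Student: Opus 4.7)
The plan is to construct a 3-Phase algorithm $B$ that runs the noisy-value algorithm $A$ as a black box, serving every one of $A$'s queries either from the free Phase~1 noisy answers or, as a fallback, from a single Phase~3 exact-value query. During Phase~1, $B$ passively collects, for every element $x_i$, an ordered list $R_{i,1},\dots,R_{i,q'}$ of $q'$ independent noisy answers. $B$ does not invoke Phase~2. In the simulation stage, $B$ launches $A$ and maintains a counter $c_i$ for each element: when $A$ issues its $c_i$-th value-query on $x_i$, $B$ returns $R_{i,c_i}$ if $c_i\leq q'$; on the very first ``overflow'' query ($c_i=q'+1$) $B$ spends a Phase~3 query to learn the true value $v_i$ and returns it; and on every later query to $x_i$ it simply returns $v_i$ again. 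At termination, $B$ outputs whatever $A$ outputs.

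The first key step is bounding the number of Phase~3 queries. Call $x_i$ \emph{heavy} if $A$ probes it more than $q'$ times; only heavy elements trigger a Phase~3 query, and each triggers exactly one. Since $A$'s worst-case budget is $T$ and every heavy element consumes more than $q'$ queries, the number of heavy elements is at most $T/q'$. The parameter choices made above, $T=(n\log_3 k)/10^6$ and $q'=(\log_3 k)/1000$, give $T/q'\leq n/1000$, so $B$ never exceeds its Phase~3 budget.

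The second step is faithfulness of the simulation. For the first $q'$ queries on each element, the Phase~1 responses are distributed exactly as $q'$ independent noisy queries of the original oracle. For any further queries on a heavy element, $B$ returns the true value, which is a permissible realization of the noisy oracle: the model only requires each query to be correct with probability at least $2/3$, and deterministically returning the truth trivially satisfies this. Hence the transcript seen by $A$ under $B$'s simulation is a valid transcript of an execution of $A$ in the worst-case noisy-value model, so $A$'s success guarantee of $2/3$ carries over verbatim to $B$.

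The main technical obstacle is the counting argument linking the worst-case noisy-value budget $T$, the Phase~1 depth $q'$, and the Phase~3 budget $n/1000$; everything else is bookkeeping. The remaining subtlety worth flagging is that heaviness is discovered \emph{online} during the simulation rather than known ahead of time, but this matches the model, which allows the $n/1000$ Phase~3 queries to be chosen adaptively.
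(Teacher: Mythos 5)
Your proof is correct and follows essentially the same route as the paper's: serve each element's first $q'$ queries from the stored Phase~1 noisy answers, spend one Phase~3 query on the first overflow and cache the true value thereafter, then bound the number of heavy (overflowing) elements by the ratio of the total query budget to $q'$. You are also right to ground the counting in the surrounding parameters $T=(n\log_3 k)/10^6$ and $q'=(\log_3 k)/1000$ (giving $T/q'\le n/1000$) rather than the lemma's stated $T=nq'$, which appears to be a typo --- the paper's own proof likewise uses the bound $nq'/1000$ in its final counting step.
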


\begin{proof}[Proof of \autoref{claim:reductionthreshold}]
Assume that there exists an algorithm $A^*$ in the noisy-value model. 
  Our algorithm $A$ starts querying the adversary for each element $q'$ times in Phase 1, and stores  the outcome of those queries.
  Algorithm $A$ then discards any information provided by the adversary during Phase 2. 
  In Phase 3, at each step algorithm $A$ must decide which element to query next. To that purpose, it starts executing $A^*$ step by step with the following strategy to simulate the noisy-value model to provide answers to the queries of $A^*$. If the next element queried by $A^*$ has had so far
\begin{itemize}
\item   strictly less than $q'$ queries: then algorithm $A$ uses the answers stored from Phase 1. 
\item   exactly $q'$ queries: then algorithm $A$ makes a Phase 3 query. 
\item  strictly more than  $q'$ queries:  then algorithm $A$ reuses the previous answer for that element
  \end{itemize}
  Finally, it uses the result of algorithm $A^*$ for its output.
  
 In the third case, since this is the correct answer,  this respects the error guarantee of the noisy-value model  
  (note that in our noisy-query model, the adversary may choose to answer with the correct 
  answer).    
  
  Since $A^*$ makes
  at most $n q' /1000$ queries,  at most $n/1000$ elements  in the execution of $A^*$ are queried $q'$ times or more, so algorithm $A$ makes at most $n/1000$ Phase 3 queries during the simulation. 
 \end{proof}

From \autoref{claim:reductionthreshold} we now focus on proving the lower bound in the 3-Phase model.  
Assume, for a contradiction, that there is an algorithm $A$ that succeeds with probability at least 2/3 in
   the 3-Phase model($q'$).
We define the adversary so that, among the $k$ elements which form the correct output, there is one that hides almost uniformly among a constant fraction of elements.

\begin{lemma}
There exists a partition of the $n$ elements of the following form:
\begin{itemize}
\item
The $k$ elements with value greater than or equal to $v$, forming a set $S_0$ that is the only correct output.
\item
$r=\Theta(\sqrt{n})$ \emph{buckets} $S_1,S_2,\ldots,S_r$ such that in each bucket $S_j$, all elements have the same value $z_j$; and each bucket has cardinality in $[\sqrt{n}/10, 2\sqrt{n}/10]$
\item
$\Theta(n) $ other elements, forming a residual set $R$.
\end{itemize}
\end{lemma}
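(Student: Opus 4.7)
The plan is to first set aside $S_0$ as the set of $k \leq n/2$ elements with value at least $v$; this accounts for between $0$ and $n/2$ elements. Then I would restrict attention to the $n - k \geq n/2$ remaining elements and group them according to value: since the set of distinct values in the instance has size $\ell \leq \sqrt{n}$, there are at most $\sqrt{n}$ such value classes.

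Next I would classify each value class as \emph{large} if it contains at least $\sqrt{n}/10$ elements and \emph{small} otherwise. Since the number of small classes is at most $\ell \leq \sqrt{n}$, the total number of elements in small classes is strictly less than $\sqrt{n} \cdot \sqrt{n}/10 = n/10$. Consequently, the total mass in large classes is at least $n/2 - n/10 = 2n/5$. From this pool I would greedily carve out exactly $r = \lfloor \sqrt{n}/2 \rfloor$ buckets, each of size $\lceil \sqrt{n}/10 \rceil$, always drawing the elements of a bucket from within a single large class so that all elements in the bucket share a common value $z_j$. Since the total mass taken by the $r$ buckets is at most $r \cdot 2\sqrt{n}/10 \leq n/10 \leq 2n/5$, the large classes contain enough elements to support this construction.

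Finally, I would define $R$ to be everything left over, that is, the union of all small classes together with the leftover elements of large classes not placed in a bucket. To verify the claimed sizes: by construction $r = \lfloor \sqrt{n}/2 \rfloor = \Theta(\sqrt{n})$ and each bucket has size in $[\sqrt{n}/10, 2\sqrt{n}/10]$. For the residual, $|R| = (n - k) - r\lceil \sqrt{n}/10\rceil \geq n/2 - n/10 = 2n/5$, and trivially $|R| \leq n$, so $|R| = \Theta(n)$.

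The argument is essentially pigeonhole, and the only delicate point is the quantitative balancing: choosing the bucket count $r$ small enough (relative to $\sqrt{n}$) and the per-bucket size small enough (relative to $\sqrt{n}$) that (i) enough mass remains in large classes to feed the buckets, and (ii) a constant fraction of $n$ is still left over to form $R$. The calculation above shows the constants $1/10$ and $1/2$ comfortably achieve both. If one needed to handle the more general bound $\ell = O(n^{1 - \varepsilon})$ from Theorem~\ref{pro:mightyLB}, the same argument would go through with constants depending on $\varepsilon$, since the small-class mass is then bounded by $\ell \cdot \sqrt{n}/10 = O(n^{3/2 - \varepsilon})$, which is still $o(n)$.
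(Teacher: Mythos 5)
Your construction is correct and follows essentially the same route as the paper's: bound the total mass of value classes of size below $\sqrt{n}/10$ by $\ell\cdot\sqrt{n}/10\leq n/10$, carve $\Theta(\sqrt{n})$ equal-value buckets of size $\Theta(\sqrt{n})$ out of the remaining large classes, and put everything else in $R$ (the paper fixes the total bucket mass at roughly $4n/10$ and deduces $r=\Theta(\sqrt{n})$, whereas you fix $r=\lfloor\sqrt{n}/2\rfloor$ and deduce the mass, which is an immaterial difference). The one easily patched imprecision is that your mass comparison ignores the up to $\lceil\sqrt{n}/10\rceil-1$ leftover elements per large class that cannot form a complete bucket; since there are at most $\ell\leq\sqrt{n}$ classes, this waste is below $n/10$ and your margins comfortably absorb it.
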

\begin{proof}
To create that partition, first set aside the elements greater than or equal to $v$ to define $S_0$; then, group the remaining elements by value, forming at most $\ell$ groups; sort those groups by non-increasing cardinality, and take the groups one by one in that order, repeatedly splitting any group of cardinality greater than $2\sqrt{n}/10$ as needed to create buckets $S_1,S_2,\ldots$, each of cardinality in $[\sqrt{n}/10, 2\sqrt{n}/10]$, until the total size of the buckets is between $4n/10-2\sqrt{n}/10$ and $4n/10$. Set $R$ consists of the rest of the elements.

Since the groups of cardinality less than $\sqrt{n}/10$ have total size at most $\ell\sqrt{n}/10=n/10$, the other groups have total size at least $n-k-n/10\geq 4n/10$, so this construction succeeds. The number $r$ of buckets is at least 
$(4n/10-2\sqrt{n}/10) / (2\sqrt{n}/10) \geq 2\sqrt{n}-1$ 
and at most 
$4n/10/(\sqrt{n}/10)=4\sqrt{n}$, so 
$r=\Theta(\sqrt{n})$, as claimed. The number of elements remaining in $R$ in the end is at least $n-k-4n/10\geq n/10=\Theta(n)$, as claimed.
\end{proof}

%
 %
\paragraph{Phase 1.} In Phase 1, the adversary uses the following strategy for its `lies'. 
For each element $i$ of $\cup_{j=1}^{r} S_j$, the adversary always answers the true value $v_i$.
For each element $i$ in $S_0\cup R$, pick uniformly at random a
  set $S_j$, $1\leq j\leq r$; when $i$ is queried, the adversary answers the correct value $v_i$ of $i$ 
  with probability $2/3$, and value $z_j$ with probability $1/3$.
This defines the adversary strategy in the first phase.

We say that a bucket $S_j$, $1\leq j\leq r$, \emph{received} an element  $i\in S_0\cup R$ if  $S_j$ was picked for $i$ and if all queries about $i$ in the first phase were answered with value $z_j$; then $i$ is called a \emph{liar}.
 The first phase is called {\it successful} (from the adversary's perspective) if at least one element of $S_0$ is a liar, and if every bucket $S_j$, 
$1\leq j\leq r$, received at least one element from $R$. 

\begin{lemma}\label{lemma:phase1} 
$\Pr{\hbox{Phase 1 is not successful}}\leq 2/100.$
\end{lemma}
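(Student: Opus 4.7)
The plan is to bound the two sub-events of Phase 1 failure separately and then combine them by a union bound. Let $p := (1/3)^{q'} = k^{-1/1000}$ denote the probability that, conditioned on bucket $S_j$ being the one picked for an element $i$ with $v_i \neq z_j$, all $q'$ noisy queries to $i$ return $z_j$. The crucial structural fact, which I will lean on throughout, is that the bucket picks and the query-noise coins are mutually independent across the elements of $S_0 \cup R$.

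For the first failure mode, I would bound the event that no element of $S_0$ is a liar. Any $i \in S_0$ satisfies $v_i \geq v > z_j$ for every bucket $j$, so the probability that $i$ is a liar is exactly $p$ independently of which bucket is picked for it. By independence across the $k$ elements of $S_0$, the probability that none of them is a liar is at most $(1-p)^k \leq e^{-pk} = e^{-k^{999/1000}}$, which is well below $1/100$ as soon as $k$ exceeds an absolute constant. The remaining regime, $k = O(1)$, is not interesting: there the target lower bound collapses to $\Omega(n)$, which is already furnished by \autoref{lem:Omegan}, so we may restrict attention to $k$ large.

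For the second failure mode, fix a bucket $S_j$ and let $N_j$ denote the number of elements of $R$ that it receives. A short case split according to whether $v_i = z_j$ or not shows that $\Pr{i \text{ is received by } S_j} \geq p/r$ for every $i \in R$: indeed, if $v_i = z_j$ then every query to $i$ automatically returns $z_j$, so only the bucket-pick matters and the probability is $1/r \geq p/r$. Since the bucket picks and noise coins of distinct $i \in R$ are independent, $N_j$ is a sum of $|R|$ independent Bernoullis with mean $\mu_j \geq |R|p/r$. Plugging in $|R| = \Theta(n)$, $r = \Theta(\sqrt{n})$, and $k \leq n/2$ gives $\mu_j = \Omega(\sqrt{n}/k^{1/1000}) = \Omega(n^{1/2 - 1/1000})$, which tends to infinity with $n$. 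A standard multiplicative Chernoff bound then yields $\Pr{N_j = 0} \leq e^{-\mu_j/2}$, exponentially small in $n^{1/2 - 1/1000}$.

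A union bound over the $r = O(\sqrt{n})$ buckets shows that the probability that \emph{some} bucket receives no element of $R$ is at most $O(\sqrt{n}) \cdot e^{-\Omega(n^{1/2 - 1/1000})}$, which is below $1/100$ for $n$ sufficiently large. Combining this with the $S_0$-liar bound by one last union bound yields $\Pr{\text{Phase 1 is not successful}} \leq 2/100$, as claimed. The only step I would take care to justify explicitly is the per-element independence invoked for the Chernoff application; it follows directly from the fact that each element's bucket pick and each noisy query are driven by fresh coins in the adversary's strategy.
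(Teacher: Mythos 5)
Your proof is correct and follows essentially the same route as the paper's: bound the two sub-events (no liar in $S_0$; some empty bucket) separately and combine by a union bound, using the per-element probability $p/r$ of being received by a fixed bucket, and independence across elements of $S_0 \cup R$. Two small remarks are worth recording. First, you replace the paper's direct product bound $\prod_i (1 - \Pr{A_{ij}}) = (1 - \Omega(n^{-1/2-1/1000}))^{\Theta(n)}$ by a multiplicative Chernoff bound on the count $N_j$; these are interchangeable here, since $\Pr{N_j = 0} = \prod_i (1 - \Pr{A_{ij}}) \leq e^{-\mu_j}$ already follows from $1-x \leq e^{-x}$, so the Chernoff invocation is harmless but not needed. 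Second, you explicitly restrict to $k$ above an absolute constant and discharge the small-$k$ regime via \autoref{lem:Omegan}. This is actually a clarification the paper needs but omits: the paper's assertion that $1-(1-k^{-1/1000})^k \geq 99/100$ is false for small constants (e.g.\ $k=2$ gives roughly $1.4\times10^{-3}$, not $0.99$), and the intended reading is that for $k=O(1)$ the target $\Omega(n\log k) = \Omega(n)$ bound is already trivial. Your short case split on whether $v_i = z_j$ when lower-bounding $\Pr{A_{ij}}$ is also a bit more careful than the paper (which just quotes $p/r$), and correctly shows the bound holds in both cases.
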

\begin{proof}
To analyze Phase 1, observe that the probability that at least one element of $S_0$ is a liar equals $1-(1- 3^{-(\log_3 k)/1000})^k\geq 99/100$.  In addition we need every bucket receives at least one element. Fix a bucket $j$ (among the $r=\Theta(\sqrt{n}$ buckets) and an element $i$ among  the $\Theta(n)$ elements of $R$. Let $A_{ij}$ denote the event that $i$ is received by bucket $j$. The probability of $A_{ij}$ is at least 
$$\Pr{A_{ij}})\geq   \frac{1}{3^{    \log_3 (k) /1000   }}  \cdot \frac{1}{r} =\frac{1}{k^{1/1000}r}= \omega( \log n / n) .$$
By the Union bound and independence of $(A_{ij})$ for a given $j$, 
$$\Pr{ \forall j \exists i A_{ij}}=
1-\Pr{\exists j\forall i \overline{A_{ij}}}
\geq 1-\sum_j\Pr{\forall i \overline{A_{ij}}}=
1-\sum_j\prod_i  \Pr{ \overline{A_{ij}}}=
1-\sum_j\prod_i  (1-\Pr{ A_{ij}} ).$$
Thus 
$$\Pr{\hbox{Phase 1 is successful}}\geq \frac{99}{100}-r  (1-  \frac{ \omega(\log n )}{n}    )^{\Omega(n)}= \frac{99}{100}-O(\sqrt{n}) n^{-\omega(1)}\geq 98/100.$$
\end{proof}

\paragraph{Phase 2.} Assume Phase 1 is successful. In Phase 2, the adversary reveals the true values of all received elements of $S_0\cup R$ except for one per bucket, chosen so that exactly one of those unrevealed elements belongs to $S_0$. This defines the  second phase. We say that an \emph{apparent bucket} $S'_j$ consists of a bucket $S_j$ plus the hidden element it received.

\paragraph{Phase 3.}  To analyze Phase 3, let $i^*$ be the hidden element of $S_0$. 
\begin{lemma}\label{lemma:phase3}
$\Pr{i^*\hbox{ is found}} < 0.26.$
\end{lemma}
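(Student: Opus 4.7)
The plan is to reduce the task to a needle-in-a-haystack search and then apply a counting bound on the $n/1000$ true-value queries of Phase 3. The key insight is that after Phases 1 and 2, from the algorithm's viewpoint $i^*$ is hidden (near-)uniformly in a candidate set of size $\Theta(n)$, so the query budget is far too small to locate it.

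First I would formalize the symmetry. Let $U_j\subseteq S'_j$ be the set of elements in apparent bucket $j$ whose $q'$ Phase 1 queries all returned $z_j$ and whose true values have \emph{not} been revealed in Phase 2. By construction $U_j$ consists of the $|S_j|\in[\sqrt{n}/10,\,2\sqrt{n}/10]$ original members of $S_j$ together with the one hidden liar. Because each element of $S_0\cup R$ is assigned to a bucket uniformly at random, because Phase 1 noise is independent across queries, and because in Phase 2 the adversary is free to randomize \emph{which} liar it hides in each bucket among all eligible choices, conditional on the algorithm's observations the apparent bucket $J$ containing $i^*$ is uniform over $\{1,\dots,r\}$, and within $J$ the element $i^*$ is uniform over $U_J$. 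Since $r=\Theta(\sqrt{n})$ and $|U_j|=\Theta(\sqrt{n})$, the total candidate set has cardinality $\sum_j |U_j|=\Theta(n)$.

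Next, I would split the event $\{i^* \text{ is found}\}$ into (i) Phase 3 directly queries $i^*$, and (ii) Phase 3 misses $i^*$ but the algorithm nevertheless includes the correct unqueried element in its output. Writing $Q_j$ for the number of Phase 3 queries spent on elements of $U_j$ (any query on a revealed element is wasted), we have $\sum_j Q_j\leq n/1000$, and by the symmetry established above the adaptive strategy produces $\E{Q_j\mid J=j}=\E{Q_j}$, so
\[\Pr{i^*\text{ queried}} = \sum_j \Pr{J=j}\cdot \frac{\E{Q_j}}{|U_j|} \leq \frac{10}{r\sqrt{n}}\cdot\frac{n}{1000} = O(1/100),\]
using $r\geq 2\sqrt{n}-1$. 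For (ii), conditional uniformity of $i^*$ on the unqueried candidates implies that the best single-element guess succeeds with probability at most $10/(r\sqrt{n})\cdot (1-o(1))^{-1}=O(1/n)$. Summing the two contributions yields $\Pr{i^*\text{ found}} \leq O(1/100)+o(1) < 0.26$.

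The main obstacle will be rigorously justifying the symmetry claim: when several $S_0$ elements happen to be liars across several distinct buckets, the adversary must randomize its Phase 2 selection of the single hidden $S_0$ liar so that no bucket becomes statistically distinguishable, and one has to argue that the adaptive algorithm cannot exploit the (random) number of liars per bucket to skew its posterior on $J$. The cleanest resolution is to condition on the full multiset of Phase 1 transcripts and Phase 2 reveals, and then apply exchangeability across bucket labels to reduce to the uniform-posterior situation used above.
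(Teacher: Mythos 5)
Your proposal follows the same high-level intuition as the paper (uniformity of $i^*$ across hidden elements, combined with the $n/1000$ Phase-3 budget versus $\Theta(\sqrt n)$-size buckets), but the bookkeeping is organized differently and there is a gap in the independence step. The paper does \emph{not} attempt to compute $\Pr{i^*\text{ found}}$ bucket-by-bucket. Instead it bounds the \emph{total number} of hidden elements (across \emph{all} buckets, $i^*$ or not) that Phase 3 uncovers, splitting this count into $R_2$ (hidden elements sitting in buckets where at least 10\% of the bucket was queried --- bounded deterministically by $\sqrt n/10$ using only $\sum_j Q_j \le n/1000$) and $R_1$ (hidden elements found in lightly-queried buckets --- bounded via $\E{R_1}\le 12\sqrt n/1000$ and Markov). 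Only then, as a final step, does it invoke the symmetry ``$i^*$ is uniform among the $r$ hidden elements,'' giving $\Pr{i^*\text{ found}\mid R_1+R_2}=(R_1+R_2)/r$. That reorganization is what lets the paper sidestep the per-bucket conditioning issue that your argument runs into.

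The gap in your version is the step $\E{Q_j\mid J=j}=\E{Q_j}$ and, more generally, the identity $\Pr{i^*\text{ queried}}=\sum_j\Pr{J=j}\,\E{Q_j}/|U_j|$. In Phase 3 the adversary answers truthfully, so when the algorithm queries a hidden element of bucket $j'$ and sees a value below $v_k$ it learns that $J\ne j'$, and when it queries $i^*$ it can halt. Hence the per-bucket query counts $Q_j$ are \emph{not} independent of $J$ --- they are coupled to $J$ through the very observations that drive the adaptive strategy. Your closing paragraph flags this as a difficulty and gestures at ``exchangeability across bucket labels,'' but exchangeability at the start of Phase 3 does not by itself control the within-Phase-3 information leak. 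To close the gap one would need something like the paper's device (bound the algorithm-observable count $R_1+R_2$ first, so that the uniformity of $i^*$ is applied exactly once, at the end), or a coupling to a ``virtual'' run in which every hidden element always presents as a non-$S_0$ element so that the query allocation is manifestly independent of $J$. Your approach, once made rigorous, would give a sharper constant than the paper's $0.26$ (your calculation yields roughly $1/200$), but the independence claim as written is not justified.
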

\begin{proof}
The total number of hidden elements found during Phase 3  is bounded by the sum $R_1+R_2$ of the following two terms. 

 $R_2$ is the number of hidden elements found in buckets in which at least $10\%$ of the elements have been queried in Phase 3: since the bucket size is at least $\sqrt{n}/10$, each such bucket must be queried at least $\sqrt{n}/100$ times. Since the total number of queries in Phase 3 is bounded by $n/1000$, we have $R_2\leq \sqrt{n}\cdot 1/10$.

$R_1$ is the number of hidden elements found in other buckets. Since the number of unrevealed elements in those buckets is at least $(9/10)\sqrt{n}/10$ at any time during Phase 3, and the hidden element is distributed uniformly among those, each query succeeds in finding the hidden element with probability at most $12/\sqrt{n}$, so $\E{R_1}\leq (12/\sqrt{n})\cdot (n/1000)$. By Markov's inequality, with probability at least $8/10$ we have $R_1\leq \sqrt{n}\cdot 6/100$. 

 The liar $i^*$ is distributed uniformly among the $r$ hidden elements, so the probability that $i^*$ is found given $R_1+R_2$ equals $(R_1+R_2)/r$. Thus
 $$\Pr{i^*\hbox{ is found}}\leq \frac{2}{10}+\frac{8}{10}\cdot  \frac{\sqrt{n}(1/10+6/100)}{2\sqrt{n}-1}  < 0.26.$$
 \end{proof}
 
 \paragraph{Wrapping-up.}
Taking into account  \autoref{lemma:phase1} and \autoref{lemma:phase3}, and the fact that if $i^*$ is not found then the correctness probability cannot exceed $O(1/n)$,  the overall success probability of the algorithm is at most $0.02+0.26+O(1/n)<2/3$, a contradiction.

\subsection{The \textsc{Top}-$k$ Problem}\label{sec:topk}
%
%
In this section, we present worst-case and approximation results for \textsc{Top}-$k$. When several input elements share the same value, there may be several correct outputs for $\Topk$. 
Throughout this section we assume $k\leq n/2$; otherwise one can simply look for the lowest-$k$ elements.

We start by giving tight bounds for the worst-case setting in terms of query complexity as well as round complexity.  

\begin{theorem}\label{thm:topksidekicks}{\bf (Algorithm for \textsc{Top}-$k$)}
Consider the \textsc{Top}-$k$ problem in the noisy comparison model.
  Fix a set of elements $X$ and  $\delta \in [1/n^7,1]$. \vincent{make sure the
    interval for $\delta$ is correct, and the dependency in the query
    complexity as well.}
  Algorithm $\Topk(k,X)$ computes the \textsc{Top}-$k$ elements of $X$ with correctness
  probability at least $1-\delta$ and 
  has 
  \begin{enumerate}
  \item  round complexity $r$ and 
  \item query
  complexity $q=O(n\log (kb/\delta))$,
  where $b$ is defined by $r=\log_{b}^*(n) +4$. 
  \end{enumerate} 
\end{theorem}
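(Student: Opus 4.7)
\medskip

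\noindent\textbf{Proof Proposal.}
The plan is to generalize the $\Max$ algorithm of \autoref{lem:topksidekicks} by using a pivot whose rank in $X$ is near $k$ (rather than near $1$) and by inflating each comparison count by a factor of $\Theta(\log k)$, so that a union bound can be taken over the $k$ true top-$k$ elements instead of a single element.

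First, I would mimic the sample/sub-sample phase of $\Max$ to produce a pivot $y_1$ whose rank in $X$ lies, with probability $1-O(\delta)$, in an interval of the form $[k+1,\, ck\ln(1/\delta)]$ for a universal constant $c$. Concretely, sample each element of $X$ with probability $p=\Theta((k/n)^{2/3})$ to obtain a set $Y$, find in it (by the sample/sub-sample/$\OneRoundMax$ step of $\Max$, applied to the sub-sample $Z\subset Y$) a pivot of rank near $1$ in $Y$, and then refine it by comparing its $|Y|$ successors to it, just as $y_1$ is computed from $z_1$ in $\Max$. A direct Chernoff calculation, analogous to \autoref{obs:somethingabouty1}, shows that $y_1$'s rank in $X$ lands in the desired window. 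This phase costs $O(1)$ rounds and $O(n\log(b/\delta))$ queries.

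Second, I would run the elimination loop exactly as in $\Max$: for $t=1,\dots,r-4$, give each element of $X_{t-1}$ a budget of $n_t=\Theta\!\bigl(\tfrac{n}{2^t|X_{t-1}|}\ln(kb/\delta)\bigr)$ noisy comparisons to $y_1$, and let $X_t$ be the set of elements that beat $y_1$ in the majority of them. The analysis proceeds in parallel to \autoref{lem:invarlemma} and \autoref{lemma:correctness}: by multiplicative Chernoff, (i) the number of \emph{spurious survivors} (elements of true rank $>ck\ln(1/\delta)$ still in $X_t$) shrinks by a factor $\gtrsim 2^t\zeta_t$ per round, so that after $t$ rounds this count is at most $\max\bigl(n/(2^t\zeta_t),\, 2n^{1/3}\ln(16b/\delta)^2\bigr)$; and (ii) each individual top-$k$ element survives each round with failure probability at most $\delta/(k\cdot 2^t\zeta_t)$. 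A union bound over the $k$ top-$k$ elements and over all $r-4$ rounds then preserves all of them in $X_{r-4}$ with probability $1-O(\delta)$. The extra $\log k$ inside $n_t$ is exactly what supplies the factor of $k$ needed in the union bound over top-$k$ elements.

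Finally, after $r-4$ rounds the survivor set has size $O(k+n^{1/3}\operatorname{polylog}(kb/\delta))$, with the true top-$k$ contained in it. A $k$-variant of $\OneRoundMax$---performing all pairwise comparisons on $X_{r-4}$ with $\Theta(\log(|X_{r-4}|/\delta))$ repetitions and outputting the $k$ elements that win a majority of comparisons against everybody else they face---extracts the top-$k$ with error $\delta/5$ in one round. The main obstacle is ensuring that this all-pairs cleanup fits within the $O(n\log(kb/\delta))$ budget when $k$ is moderately large: for this I would, when $k>\sqrt{n}$, use a two-stage pivoting scheme (an auxiliary pivot near rank $\sqrt n$ is used to thin $X_{r-4}$ once more before the $k$-cleanup), so that the final all-pairs stage operates on a set of size $O(\sqrt{n})$ in every regime. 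Aside from this scheduling, the correctness and the round/query complexity follow from the $\Max$ analysis essentially verbatim, with the $\log b$ factor there replaced by $\log(kb)$ throughout.
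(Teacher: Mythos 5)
Your proposal takes a genuinely different route from the paper. The paper's algorithm is a clean bucketing scheme: if $k/\delta > n^{1/12}$ it delegates wholesale to the $4$-round $O(n\log n)$-query algorithm of \cite{BMW16} (which is $O(n\log k)$ in that regime); otherwise it partitions $X$ into $k^*=4k^4/\delta$ random buckets, runs $\Max(Y_i,r,\delta^2/(4k^7))$ independently on each to produce candidates $y_1,\ldots,y_{k^*}$, and then extracts the top-$k$ of the candidates by $\Theta(\log(k/\delta))$ repetitions of each pairwise comparison. Correctness is a union bound over three events: the top-$k$ fall into distinct buckets (probability $\geq 1-\delta/(4k^2)$), every $\Max$ call succeeds, and the candidate-sorting succeeds. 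The query bound follows because $k^* \cdot (n/k^*)\log(kb/\delta)=O(n\log(kb/\delta))$ and the pairwise stage is sub-linear when $k/\delta \leq n^{1/12}$. Your scheme instead re-opens the $\Max$ elimination loop itself, replacing the rank-$1$ pivot by a rank-$\Theta(k\log(1/\delta))$ pivot and inflating the per-round comparison count by $\log k$ to absorb the union bound over the $k$ elements that must all survive. That core idea is plausible and would buy a more uniform, single-pass algorithm, but it requires re-deriving the entire invariant of \autoref{lem:invarlemma} and \autoref{lemma:correctness} rather than invoking $\Max$ as a black box, which the paper avoids entirely.

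There is a genuine gap in the large-$k$ regime. You acknowledge that the final all-pairs cleanup on $X_{r-4}$ (size $\Theta(k\log(1/\delta)+n^{1/3}\operatorname{polylog})$) can blow the query budget when $k$ is large, and you propose to fix it by ``an auxiliary pivot near rank $\sqrt n$ to thin $X_{r-4}$ so that the final all-pairs stage operates on a set of size $O(\sqrt n)$.'' But when $k>\sqrt n$ the output itself has $k>\sqrt n$ elements, so the set fed to the cleanup must have size at least $k$; thinning it to $O(\sqrt n)$ would discard part of the answer. The thinning idea cannot work as stated. The repair the paper uses is simply to observe that for $k$ polynomially large in $n$ one already has $\log k = \Theta(\log n)$, so the existing constant-round $O(n\log n)$-query \textsc{Top}-$k$ algorithm of \cite{BMW16} meets the bound; your proof needs an analogous case split, not a second pivot.
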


\begin{theorem}\label{thm:kmaxlower} {\bf (Worst-case Lower Bound for
    $\Topk$)} 
  Consider the \textsc{Top}-$k$ problem in the noisy value model. Let $k\leq n/2$.
  Any algorithm $A$ that is correct w.p. at least $2/3$
  has expected query complexity at least $\Omega(n \log k)$.
   Furthermore, suppose $A$  returns the \textsc{Top}-$k$ elements of $X$ with correctness probability at least
  $2/3$ in $r$ rounds 
  using at most $q=n \log_3 (k)$ value queries, then it must be that $r\geq \log_k^*(n)-O(1)$.

\end{theorem}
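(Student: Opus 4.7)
The plan is twofold. Part~(1), the $\Omega(n\log k)$ lower bound on expected query complexity (valid for any number of rounds), follows essentially as a corollary of the bounded-value threshold-$v$ lower bound \autoref{pro:mightyLB}: apply it to the binary instance with $k$ elements of value~$1$ and $n-k$ elements of value~$0$, so that the number of distinct values is $\ell = 2 \leq n^{1-\Omega(1)}$ and the output of top-$k$ coincides with that of threshold-$1$, with $k_v = k \leq n/2$. Combining the worst-case statement of \autoref{pro:mightyLB} with \autoref{lem:wcvsexpct} to move from worst-case to expected complexity immediately yields the $\Omega(n\log k)$ bound.

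Part~(2), the round-query tradeoff when $q = n\log_3(k)$, would be proved by adapting the adversarial framework of \autoref{lemma:maxLBrankmodel} from a unique maximum to $k$ hidden top elements. Consider the same binary instance, together with the adversary that on any query returns the true value with probability $2/3$ and the value~$1$ with probability $1/3$. Under this strategy, every top element is always reported as ``$1$'', while a non-top element is reported as ``$1$'' on all $m$ of its queries with probability exactly $3^{-m}$. Define $S_{t-1}$ as the set of non-top elements whose queries have always returned ``$1$'' and that have received the same number of queries as each top element through round $t-1$. Using the little-birdie principle, partition $S_{t-1}$ into Class~1 (strictly more than $q' = 5^t q / |S_{t-1}|$ queries in round $t$) and Class~2 (at most $q'$), and reveal the truth about Class~1 for free. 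Since the algorithm is limited to $q = n\log_3 k$ queries, at most $|S_{t-1}|/5^t$ elements lie in Class~1, and the Chernoff-plus-recursion calculation of \autoref{lem:invarlemma} together with \autoref{lem:helper} applies verbatim with $b$ replaced by $k$, yielding $|S_t| \geq n/x_t$ for the recurrence $x_t = 4x_{t-1} k^{5^t x_{t-1}}$.

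Telescoping gives $x_{t'} \leq \sqrt{n}$ for $t' = \log_k^*(n) - O(1)$, so $|S_{t'}| \geq \sqrt{n} \gg k$ holds with constant probability. By construction the $k$ true top elements and the elements of $S_{t'}$ have identical query transcripts and identical query counts, so under the uniform prior on random permutations of the input they are fully exchangeable; any fixed size-$k$ output of the algorithm therefore equals the true top-$k$ with probability at most $1/\binom{|S_{t'}| + k}{k} = o(1) < 1/3$, contradicting the assumed $2/3$ correctness.

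The main technical obstacle is preserving the symmetry between the top elements and the evolving set $S_t$ across rounds: the little-birdie revelations must never expose a top element, which is guaranteed because Class~1 is chosen inside $S_{t-1}$ and contains only non-top elements by definition, and the condition that elements of $S_{t-1}$ share the exact query count of the top elements must be maintained, which the adversary can do by supplying free ``$1$''-answers to underqueried $S_{t-1}$-elements as in the max proof. A secondary point is the boundary behavior in $k$: for $k=1$ the statement coincides (modulo the uniqueness caveat) with \autoref{lemma:maxLBrankmodel}, and for $k$ large enough that $\log_k^*(n) = O(1)$ the round bound degenerates, but in that regime part~(1) still delivers the tight $\Omega(n\log k)$ query bound.
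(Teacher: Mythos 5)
Your proposal is correct and follows essentially the same route as the paper for both parts. For part~(1), the paper reduces \textsc{Top-$k$} to the \textsc{bounded-value threshold-$v$} lower bound (\autoref{pro:mightyLB}) exactly as you do, the only cosmetic difference being that the paper phrases the reduction for an arbitrary instance with $\ell=O(n^{1-\eps})$ distinct values rather than the specific binary instance; your specialization to the two-valued instance is a valid (indeed cleaner) instantiation of the same argument. For part~(2), the paper also adapts \autoref{lemma:maxLBrankmodel}: the adversary always tells the truth for the top elements and, for the $n-k\geq n/2$ remaining elements, reports the rank-$1$ value with probability $1/3$; the set $S_{t-1}$ is defined the same way, and the paper likewise invokes the same inductive/Chernoff calculation with $b$ playing the role of $k$.

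Two small remarks. First, your citation of \autoref{lem:invarlemma} is a slip: that lemma is part of the \emph{upper-bound} analysis of $\Max$; the recursion you mean is the one carried out inside the proof of \autoref{lemma:maxLBrankmodel} (together with \autoref{lem:helper}). Second, in the wrap-up you invoke full exchangeability of the $k$ top elements with the fakers in $S_{t'}$ and bound the success probability by $1/\binom{|S_{t'}|+k}{k}$, which implicitly assumes that none of the top elements has been leaked via a Class-1 revelation. As in the max proof this leak has to be accounted for: a single top element lands in Class~1 in round $t$ with probability at most $5^{-t}$, so each top element survives unrevealed with probability at least $3/4$, and hence with probability at least $3/4$ at least one top element remains indistinguishable among the $\Omega(\sqrt n)$ surviving fakers; conditioned on that, the algorithm's probability of identifying it is $O(1/\sqrt n)$. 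That accounting (rather than the raw binomial) is what actually closes the argument, but it is the same kind of bookkeeping the paper itself leaves implicit (``using the same techniques as in \autoref{lemma:maxLBrankmodel}''), so your proposal sits at the same level of rigor as the paper's own proof.
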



We then move to approximation algorithms.
In many practical settings (i) the round and query complexity are particularly
important and (ii) it suffices to return elements that are among the top elements,
but are not necessarily the top elements. 
For such a setting we propose algorithm $\mathbf{Approx-}\Topk$ and prove the
following theorem.

\begin{theorem}\label{thm:approx}
Consider the \textsc{top}-$k$ problem in the noisy comparison model. 
Let $X$ be a set of elements, $k$ an integer, $\delta \in [1/n^{5};1]$, and $\gamma > 0$.
Algorithm $\mathbf{Approx-}\Topk(X, k, r,\gamma)$ outputs
a set of $k$ elements that
belong to the top-$(1+\gamma)k$ elements of $X$ with correctness probability at least $1-\delta$ and has
\begin{enumerate} 
\item round complexity $r$ and
\item query complexity  $q=O\left(n \log \left(b\frac{1+\gamma}{\delta \gamma}\right ) + \frac{k}{\gamma} \log(kb/\delta) \right)$,  
  where $b$ is defined by $r=\log_{b}^*(n) +4$. 
\end{enumerate}
\end{theorem}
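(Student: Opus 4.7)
The plan is to design \textbf{Approx-}$\Topk$ with a three-phase structure: a cheap \emph{pivot selection} via random sampling, a one-round coarse \emph{filter} of every input element against the pivot, and an \emph{exact refinement} via the $\Topk$ algorithm of \autoref{thm:topksidekicks} on the surviving candidates. The approximation slack $\gamma$ is exploited in the filter phase, where we can afford a lower per-element classification confidence than exact $\Topk$ would require. Throughout, we invoke \autoref{lem:topksidekicks} and \autoref{thm:topksidekicks} as black-box subroutines, spending their rounds inside the total budget of $r$.

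In Phase~1, sample $Y \subseteq X$ uniformly with expected size $s = \Theta((n/k)\log(b/\delta))$, and, using \autoref{thm:topksidekicks} with parameter $r' = r - O(1)$, compute the $j$-th largest element of $Y$ for $j = \Theta(\log(b/\delta))$. By standard order-statistic concentration (Chernoff on indicators of ``element falls in top-$(1+\gamma/2)k$ of $X$'') the chosen element $v^{*}$ has rank in $X$ lying in $[k,(1+\gamma/2)k]$ with probability at least $1-\delta/4$. The query cost is $O(s\log(sb/\delta)) = o\bigl(n\log(b(1+\gamma)/(\delta\gamma))\bigr)$. In Phase~2, in a single round, compare every $x \in X$ against $v^{*}$ exactly $c\,\log(b(1+\gamma)/(\delta\gamma))$ times for a large constant $c$, and let $C$ be the set of elements whose majority of comparisons placed them above $v^{*}$; Chernoff yields per-element misclassification probability at most $\gamma\delta/(8(1+\gamma))$, so one shows that with probability $\ge 1-\delta/4$ the set $C$ satisfies $|C| = O(k/\gamma)$, $|C| \ge k$, and $C \subseteq \text{top-}(1+\gamma)k$ of $X$. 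In Phase~3, invoke \autoref{thm:topksidekicks} on $C$ with parameter $r'' = r - O(1)$; since $|C| \le n$, this runs in at most $r''$ rounds and $O\bigl((k/\gamma)\log(kb/\delta)\bigr)$ queries, and returns $k$ elements that, by the containment $C \subseteq \text{top-}(1+\gamma)k$, lie in top-$(1+\gamma)k$ of $X$. Summing query costs across phases matches the stated bound, and a careful sequencing of the three phases keeps the total round count at $r$.

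The main obstacle is Phase~2's simultaneous control of two parameters: the (random) rank of the pivot $v^{*}$ produced by Phase~1, and the per-element filter error. Both are entangled because the size of $C$ depends on the rank of $v^{*}$ through the number of ``true positives'', while the number of false positives depends on $n - \operatorname{rank}(v^{*})$; yet neither $v^{*}$'s exact rank nor the realized size of $C$ is known to the algorithm in advance. The resolution is to tune the sample size $s$ and the targeted rank $j$ in Phase~1 so that $\operatorname{rank}(v^{*}) \in [k,(1+\gamma/2)k]$ with failure probability $\le \delta/4$ \emph{independently} of $\gamma$, and to set the filter repetition count so that the expected number of false positives is $O(\gamma k)$; a final Chernoff step bounds the deviation of $|C|$ from its expectation. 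Closing the argument then amounts to taking a union bound over the three phase-failure events (sample rank, filter count, exact $\Topk$ failure) and verifying that the composite round schedule fits inside $r$ by using the additive slack in $r = \log^{*}_{b}(n) + 4$ to absorb the constant number of extra rounds spent outside Phase~3.
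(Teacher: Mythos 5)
Your proposal takes a genuinely different route from the paper --- the paper partitions $X$ into $k^{*}=\Theta\!\left(\frac{k(1+\gamma)^{2}\log(1/\delta)}{\gamma}\right)$ random parts, finds the max of each with a modest per-part error $\Theta\!\left(\frac{\gamma\delta}{1+\gamma}\right)$, and then runs exact $\Topk$ on the $k^{*}$ maxima; the approximation slack is absorbed because the $(1+\gamma)k$ desirable elements spread across distinct parts and a Markov bound tolerates losing a $\Theta(\gamma)$ fraction of them. Your pivot-then-filter scheme is plausible at first glance, but it has two concrete gaps that I do not see how to close within the stated budget.

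\textbf{The filter phase does not control false positives.} With $c\log\!\left(\frac{b(1+\gamma)}{\delta\gamma}\right)$ comparisons of each element against $v^{*}$ and a majority vote, the per-element misclassification probability is, as you say, of order $\frac{\gamma\delta}{1+\gamma}$. But the adversary is oblivious to value gaps: every one of the $n-(1+\gamma)k$ elements of rank greater than $(1+\gamma)k$ independently crosses the filter with probability $\Theta\!\left(\frac{\gamma\delta}{1+\gamma}\right)$, regardless of how far below $v^{*}$ its value is. The expected number of false positives is therefore $\Theta\!\left(\frac{n\gamma\delta}{1+\gamma}\right)$, which is not $O(k/\gamma)$ once $n$ is a large multiple of $k/\gamma^{2}$ (e.g.\ $\gamma=\delta=\Theta(1)$ and $n\gg k$ gives $\Theta(n)$ false positives). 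Consequently the stated claims ``$|C|=O(k/\gamma)$'' and ``$C\subseteq$ top-$(1+\gamma)k$ with probability $\geq 1-\delta/4$'' both fail, and Phase~3 --- whose query cost is $O(|C|\log(|C|\,b/\delta))$ --- can cost $\Theta(n\log(kb))$, exceeding the budget $O\!\left(n\log\!\left(\frac{b(1+\gamma)}{\delta\gamma}\right)+\frac{k}{\gamma}\log(kb/\delta)\right)$. Pushing the per-element error low enough to suppress the $n$ false positives would require $\Theta(\log(n\gamma/k))$ comparisons per element, which erases the savings from approximation.

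\textbf{The pivot's rank does not concentrate tightly enough for small $\gamma$, and the sample can be degenerate.} You want the $j$-th largest of $Y$ to have $X$-rank in a window of width $\Theta(\gamma k)$. If $Y$ is a rate-$s/n$ sample, the relevant random variables are $\mathrm{Bin}(k,s/n)$ and $\mathrm{Bin}((1+\gamma/2)k,s/n)$, whose means differ by $\Theta(\gamma j)$ where $j\approx ks/n$; separating the order statistic from both endpoints by Chernoff requires $\gamma j=\Omega(\sqrt{j\log(1/\delta)})$, i.e.\ $j=\Omega(\gamma^{-2}\log(1/\delta))$, which forces a $1/\gamma^{2}$ factor in $s$ and $j$ that your choice $s=\Theta\!\left(\frac{n}{k}\log(b/\delta)\right)$, $j=\Theta(\log(b/\delta))$ does not provide. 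Additionally, for small $k$ your prescribed sample size already exceeds $n$ (e.g.\ $k=O(\log(b/\delta))$), so the ``pivot from a small random sample'' step breaks down entirely in precisely the regime where exact $\Topk$ is cheapest and approximation should be easiest. The paper's partition-based algorithm sidesteps both issues because it never needs a rank-controlled pivot nor a per-element classification against one.
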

For example, for $r=O(\log^* n)$ and $\gamma=\Theta(1)$ the algorithm has query complexity $O( n + k\log k)$ and success probability $2/3$.
The worst-case bounds described in the above theorem are tight; a slight
modification of our worst-case lower bound for the maximum, implies
a lower bound for $k=1$ and any $\gamma = O(1)$ that matches the above
bounds.

\subsubsection{Upper Bound (comparison model) - Proof of \autoref{thm:topksidekicks}}
Our algorithm (see $\Topk$) is surprisingly simple. We partition the input into $k^4$ sets of equal size.
The \textsc{Top}-$k$ elements will be in distinct sets with sufficiently large probability.
For each of these sets we employ $\Max$ to  find w.p. $1-1/k^7$ the largest element with query complexity $O(n/k^4\cdot k^4\log k)=O(n\log k)$ and round complexity $\log_k^*(n)$ (which can be decreased by increasing the query complexity).
 
\begin{algorithm*}\caption*{$\operatorname{\bf Algorithm}$\ $\Topk(X,k,r,\delta)$  \  \ \ \ \ (see \autoref{thm:topksidekicks})
\\
{\bf input:}  set $X$, partition parameter $k$, number of rounds $r$, $\delta$ error probability \\
{\bf output:} largest $k$ largest elements in $X$\\
{\bf error probability:}  $\delta$.
}
\begin{algorithmic}
\label{alg:Topk}
\IF{$k/\delta > |X|^{1/12} $}
\STATE \Output\ {\bf 4-Round-Algorithm($X,k$)}  (\cite[Corollary 1]{BMW16})
\ELSE
\STATE partition $X$  into $k^*= 4k^4/\delta$ randomly chosen sets $Y_1,Y_2,\dots, Y_{k^*}$ of equal size
\FOR {$i \in [1,k^*]$ in parallel}
\STATE  \inparallel, $y_i\gets \Max(Y_i,r, \delta^2/(4k^7))$
\ENDFOR
\STATE \inparallel, compare  $y_i$ to $y_j$, exactly $1000 \log(k/\delta)$ times for each $i,j\in [1,k^*]$ with $i\neq j$
\STATE \Output the $k$ largest elements among those elements
\ENDIF

\end{algorithmic}
\end{algorithm*}

\begin{proof}[Proof of \autoref{thm:topksidekicks}]
	Without lost of generality, we assume that $k/\delta \leq |X|^{1/12}$ since otherwise the correctness follows directly from
	\cite[Corollary 1]{BMW16}.
	
 We  define the desired output by breaking ties as follows: whenever two elements share the same value, the one with the larger id is assigned  the larger rank. 

The algorithm can be incorrect for three reasons: 
\begin{itemize}
\item
It may be that several of the $\Topk$ elements fall into the same $Y_i$. However, we have that w.p. at least $(1-k\delta/(4k^4))^k\geq  1-\delta/(4k^2))$ that the \textsc{Top}-$k$ elements are in distinct $Y_i$.
\item
It may be that for some $Y_i$,  $ \Max(Y_i,r, \delta/(4k^7))$ fails to find the maximum. Let $\mathcal{E}_i$ be the event that $\Topk$ returns an  element  of $Y_i$ with the largest value among $Y_i$.
 We have $\Pr{\mathcal{E}_i} \geq 1-\delta/(4k^7)$ for all $i\in[1,k^*]$, by correctness of $\Max$(\autoref{lem:topksidekicks}).
\item
It may be that for some pair $y_i,y_j$, the majority of the comparisons of the second-to-last line yield is incorrect. We observe that the sorting in the second-last line fails only w.p. at most $\delta^11/(4k^11)$, by Union bound over all ${k^4/\delta \choose 2}$  comparisons.
\end{itemize}
Thus the probability that the output is correct is, by Union bound, at least \[1-\delta/(4k^2)-k^*\cdot \delta/(4k^7)-  (k^*)^2\cdot \delta^{11}/(4k^{11})\geq 1-\delta.\]

The round complexity is dominated by the calls to $\Max$ and thus the round complexity follows from 
\autoref{lem:topksidekicks}.

\end{proof}


\subsubsection{Lower Bound (value model) - Proof of \autoref{thm:kmaxlower}}

\begin{proof}[Proof of \autoref{thm:kmaxlower}]
	The  first part of the claim is a consequence of \autoref{pro:mightyLB}. Indeed, consider an arbitrary instance (multiset $V$) with $\ell=O(n^{1-\epsilon})$ distinct values, where $\epsilon>0$ is a constant, and such that the number of elements with value at least $v$ is equal to $k$. If we give free additional knowledge to the algorithm for $\Topk$, namely, the value $v$, then the problem is equivalent to Threshold-$v$ with the knowledge of $k$, the number of elements with value at least $v$. Since the lower bound of \autoref{pro:mightyLB} applies even with knowledge of the multiset $V$, it applies to the case where we know $k$, hence the first part of the claim.
	
	In fact, we can create an arbitrary instance with at most $\ell=O(n^{1-\varepsilon}),\varepsilon >0$ distinct values. Note that we assume that $k\leq n/2$. 
	
	Consider  the second part of the claim.
		The proof is along the same lines as the proof of \autoref{lemma:maxLBrankmodel}. The adversary strategy is as follows: when queried for the value of any element $x$ which is part of the $\Topk$ it answers with the true value.
	For any element $x$ 	which is not in the $\Topk$, by assumption there are at least $n/2$ many, with probability $2/3$ it answers with the true value of $x$, and with the complementary probability $1/3$ it answers with the value of rank $1$.
	At the beginning of iteration $t$, 
	let $S_{t-1}$ denote the set of elements whose queries have always answered with rank $1$ and that have received exactly as many queries as $x_1$, the element of rank 1.
	
	For $k \leq n/2$, using the same techniques as in \autoref{lemma:maxLBrankmodel}, we can show that
	$|S_t| = \Omega(\log n)$ making it highly unlikely to find the $\Topk$ elements. with fewer than  $\log_k^*(n)-O(1)$  rounds.
	
\end{proof}

\subsubsection{Approximation Algorithm - Proof of \autoref{thm:approx}}
The idea of the algorithm is to split the input into parts of equal size and to compute the maximum in each part.
In order to guarantee a low query complexity, the algorithm cannot afford to calculate the correct maximum in each part.
However, the query complexity is large so that \emph{enough} parts calculate the correct maximum:
Assuming that the $k(1+\gamma)$ top elements are spread almost equally among the parts, with good probability, more than $k$ parts containing top elements return the correct maximum. In a second step the algorithm finds accurately the \textsc{Top}-$k$ values among all maxima (of which there are not too many).

\FloatBarrier

\begin{algorithm*}[ht!]\caption*{$\operatorname{\bf Algorithm}$\ $\approxtopk(X, k, r, \gamma,\delta)$  \  \ \ \ \ (see \autoref{thm:approx})
\\
{\bf input:} set $X$, integer $k$, a number of rounds $r$, error probability $\delta$,  rank approximation factor $\gamma$\\ 
{\bf output:} $k$ elements with rank in $\{1, 2, \dots, k(1+\gamma) \}$\\
{\bf error probability:} $\delta$.
}
\begin{algorithmic}
\STATE $\gamma \gets  \min\{ 1,\gamma \}$
  \STATE  partition $X$ into $k^* =\frac{20k(1+\gamma)^2\log(1/\delta)}{\gamma}  $ randomly chosen  parts   $Y_1,Y_2,\ldots,Y_{k^*}$
    \STATE \inparallel, find the maximum element $y_i$ of $Y_i$
  for each $i$, in $r/2$ rounds with error probability
  $\frac{\gamma}{80\delta(1+\gamma/2)}$, using $\Max$.
  \STATE Compute and output the \textsc{Top}-$k$ elements of the set
  $\{y_1,\ldots,y_{k^*}\}$ in at most
  $r/2$ rounds and with error probability at most $\delta/6$, using
  $\Topk$.
\end{algorithmic}
\end{algorithm*}

\FloatBarrier
 \begin{proof}[Proof of \autoref{thm:approx}]
   The bounds on the query complexity follows directly
   from \autoref{lem:topksidekicks} and~\autoref{thm:topksidekicks}.
   More precisely, \autoref{lem:topksidekicks} shows that computing the maximum
   of a set of size $n_0 = n / k^*$
   in $r/2$ rounds
   with error probability at least  $\gamma\delta/(80(1+\gamma/2))$
   can be
   done using at most $O(n_0 \log (b\frac{1+\gamma}{\gamma\delta}))$,
   where
   $b$ is defined as $r/2 = \log^*_b(n)$. Summing up over the
   $k^*$ parts, yields a query complexity
   of $O(n \log (b\frac{1+\gamma}{\delta\gamma}))$
   for the first part and a round complexity
   of $r/2$.
   \autoref{thm:topksidekicks} shows that computing the \textsc{Top}-$k$ elements
   of a set of size $k^*$ in
   $r/2$ rounds
   with success probability at least
   $\delta/6$ can be done using 
   $O(k^* \log (kb/\delta))$ queries,
   where again $r/2 = \log^*_b(n)$.
   The query and round complexities of the algorithm follow.
   
 
   We thus aim at showing that the elements output by the algorithm
   are part of the top-$(1+\gamma)k$ elements of the input
   set with probability at least $1-\delta$.
   We say that an element is \emph{desirable} if it is part of a
   correct output
   to the top-$(1+\gamma)k$ problem on the input.
   We denote by event $\calE$ the event that
   there are at least $k(1+\gamma/2)$ parts $Y_i$ in the partition
   $Y_1,\ldots,Y_{k^*}$ that are such that
   the maximum of $Y_i$ is desirable.
   \begin{claim}
     \label{claim:nbdesirable}
     Event $\calE$ happens with probability at least $1-\delta/3$.
   \end{claim}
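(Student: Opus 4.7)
The plan is to reduce event $\calE$ to a balls-in-bins-style argument about the placement of desirable elements, and then to control the relevant count via a collision analysis.

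First, I would observe that the maximum of $Y_i$ is desirable if and only if $Y_i$ contains at least one desirable element. Indeed, every desirable element has value at least the $(1+\gamma)k$-th largest in $X$, hence at least as large as any non-desirable element, so any part containing a desirable element has its max realized by a desirable element. Let $D$ denote the set of desirable elements (so $|D|=(1+\gamma)k$ up to ties) and let $X_i := |Y_i \cap D|$. The quantity of interest is $N := |\{i : X_i \geq 1\}|$, and the claim becomes $\Pr{N \geq k(1+\gamma/2)} \geq 1-\delta/3$.

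Next, I would upper bound the ``loss'' $m - N$, where $m:=|D|$, by a collision count. Since $\sum_i X_i = m$, we have $N = m - \sum_i (X_i-1)^+$, and $(X_i-1)^+ \leq \binom{X_i}{2}$, so $m - N \leq C := \sum_i \binom{X_i}{2}$, the number of unordered pairs of desirable elements placed in the same part. Thus it suffices to prove $\Pr{C > k\gamma/2} \leq \delta/3$. To estimate $\E{C}$, note that under the random equal-sized partition into $k^*$ parts, any two fixed elements lie in the same part with probability at most $1/k^*$, so
\[
\E{C} \leq \binom{m}{2}\cdot \frac{1}{k^*} \leq \frac{m^2}{2 k^*} = \frac{k\gamma}{40 \log(1/\delta)},
\]
using the algorithm's choice $k^* = 20 k(1+\gamma)^2 \log(1/\delta)/\gamma$.

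Finally, I would apply a concentration inequality to pass from this expectation bound to the desired tail bound. Since the pair-collision indicators are negatively associated under random assignment, a Chernoff bound for sums of negatively correlated Bernoullis yields
\[
\Pr{C \geq k\gamma/2} \leq \left(\frac{e\,\E{C}}{k\gamma/2}\right)^{k\gamma/2} \leq \left(\frac{e}{20\log(1/\delta)}\right)^{k\gamma/2},
\]
which is at most $\delta/3$ by the algorithm's parameter choice (after capping $\gamma \leftarrow \min\{1,\gamma\}$). The main obstacle, which I would handle with care, is making this last concentration step uniform across all admissible parameter regimes, particularly when $k\gamma$ is small compared to $\log(1/\delta)$. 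Here I would fall back on either McDiarmid's inequality applied to $N$ directly (which is $1$-Lipschitz in each desirable element's assignment), or a Chebyshev argument that exploits $\mathrm{Var}(C) \leq \E{C}$ (again valid by negative correlation), choosing whichever bound is tighter in the given regime.
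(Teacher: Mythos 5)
Your proposal is essentially the same approach as the paper. Both frame $\calE$ via a balls-into-bins collision count on the desirable elements (using the parameter choice $k^* = 20k(1+\gamma)^2\log(1/\delta)/\gamma$ to bound the expected number of collisions by $O(k\gamma/\log(1/\delta))$), then invoke a concentration bound to argue collisions rarely exceed $k\gamma/2$. The only real bookkeeping difference is that you bound the deficit $m-N$ by the unordered-pair-collision count $C=\sum_i\binom{X_i}{2}$ and estimate $\E{C}$ combinatorially, whereas the paper works directly with the number of colliding balls $\sum_i (X_i-1)^+$ via a sequential argument (the $t$-th desirable element collides with probability at most $p'$ given the history, so the sum is stochastically dominated by $\mathrm{Binomial}(m,p')$). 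Your $(X_i-1)^+ \le \binom{X_i}{2}$ step is a mild loosening; the paper's quantity is exactly $m-N$.

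One place your write-up is shakier than the paper's (implicit) argument: negative association of the pair-collision indicators under a \emph{random equal-size partition} is not a standard off-the-shelf fact and you assert it twice without justification (for the Chernoff step and again for $\mathrm{Var}(C)\le\E{C}$). The sequential stochastic-domination route the paper takes sidesteps that issue cleanly, and is what you should use if you want to make the concentration step airtight. On the other hand, the concern you flag in your last sentence is genuine: a tail bound of the form $\bigl(e/(20\log(1/\delta))\bigr)^{k\gamma/2}\le\delta/3$ does \emph{not} hold uniformly over the admitted range (e.g.\ $k\gamma = \Theta(1)$ together with $\delta$ polynomially small), and neither McDiarmid nor Chebyshev closes that gap. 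But the paper's own one-line appeal to "Chernoff inequality" implicitly needs $k\gamma = \Omega(\log(1/\delta))$ as well, so you have faithfully reproduced the approach including the spot where the paper is elliptical.
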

   \begin{proof}
     This follows from some classic result on ball-into-bins.
     Each desirable element is seen as a ball and is placed in a
     random part of $Y_1,\ldots,Y_{k^*}$. Each
     of these parts is seen as a bin. We assume that elements
     are placed into bins uniformly, one at a time.
     
     We aim at bounding the number of times a desirable element is
     assigned to a bin that already contains desirable elements.
     At any time $t$, the probability for placing the $t$th
     element into a bin already containing a desirable
     elements is at most the total number of desirable elements divided
     by the total number of bin, namely
     \[p' = \frac{(1+\gamma)k}{k^*} =
     \frac{\gamma }{20 \log(1/\delta) (1+\gamma)}. \]

     It follows that the expected  number of desirable elements that,
     at the end of the execution, are in a
     bin already containing a desirable element
     is at most $k \gamma/(10\log(1/\delta))$, since the algorithm ensures $\gamma \leq 1$. Note that we assume $\gamma \geq 1/k$ otherwise the problem reduces to finding the \textsc{Top}-$k$ elements.
     By Chernoff inequality (upper bounding each probability with $p'$), we have that event $\calE$ happens
     with probability at least  $1-\delta/3$.

     
   \end{proof}

   \begin{claim}
     \label{claim:approxcorrect}
     Conditioned on event $\calE$,
     the total number of desirable elements in the set (of maxima)
     $\{y_1,\ldots,y_{k^*}\}$ is at least $k$
     with probability at least $1-\delta/3$.
   \end{claim}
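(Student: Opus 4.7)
The plan is to condition on $\mathcal{E}$ and argue that on most of the $k(1+\gamma/2)$ \emph{good} parts (those whose true maximum is desirable) the $\Max$ subroutine actually returns the true maximum, so that enough $y_i$'s land in the top-$(1+\gamma)k$ of $X$. Since the $k^*$ invocations of $\Max$ use disjoint subsets and fresh independent queries, the corresponding failure events are mutually independent and a simple Markov (or, if needed, Chernoff) bound on the total number of failures on good parts should suffice.

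First, I would set $\mathcal{G} = \{\, i : \max(Y_i) \text{ is desirable}\,\}$; event $\mathcal{E}$ gives $|\mathcal{G}| \geq k(1+\gamma/2)$, and trivially $|\mathcal{G}| \leq k(1+\gamma)$ since each $i \in \mathcal{G}$ singles out a distinct desirable element. For each $i \in \mathcal{G}$ let $F_i$ be the event that $\Max(Y_i,\cdot,\cdot)$ fails to return the true maximum of $Y_i$; by \autoref{lem:topksidekicks}, $\Pr{F_i} \leq p$, where $p$ is the per-part failure parameter passed by $\approxtopk$. Setting $Z = \sum_{i \in \mathcal{G}} \mathbf{1}[F_i]$, the number of desirable elements among $\{y_1,\ldots,y_{k^*}\}$ is at least $|\mathcal{G}| - Z \geq k(1+\gamma/2) - Z$, because on every good $i$ on which $\Max$ does not fail we get $y_i = \max(Y_i)$, which is desirable.

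It therefore suffices to prove $Z \leq k\gamma/2$ with probability at least $1-\delta/3$. I would bound $\E{Z} \leq |\mathcal{G}|\, p \leq 2kp$ and apply Markov's inequality to obtain $\Pr{Z > k\gamma/2} \leq 4p/\gamma$. With the choice of $p$ of order $\gamma\delta/(1+\gamma/2)$ made in $\approxtopk$, this is at most $\delta/20 \leq \delta/3$, as required.

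The main technical subtlety is simply verifying that the per-call failure probability baked into $\approxtopk$ is tuned precisely so that $\E{Z}/(k\gamma/2) \leq \delta/3$; no genuine probabilistic obstacle arises, only careful bookkeeping of the constants and of the conditioning on $\mathcal{E}$ (the failures $F_i$ depend on the coins of $\Max$, which are independent of the partitioning coins that determine $\mathcal{G}$, so conditioning on $\mathcal{E}$ does not disturb the independence argument). Should the constants turn out tighter than stated, a Chernoff bound on the sum of independent Bernoullis can replace Markov's inequality, but this does not appear to be necessary.
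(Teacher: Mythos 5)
Your argument is correct and follows essentially the same route as the paper: both proofs set up a count of ``failed'' $\Max$ calls on the parts whose true maximum is desirable, bound its expectation by the per-call failure probability times the number of such parts, and then apply Markov's inequality, using the cushion $k\gamma/2$ that event $\calE$ provides. The only bookkeeping difference is the choice of Markov threshold: you fix it at $k\gamma/2$, which forces you to also upper-bound $|\mathcal{G}|$ by $(1+\gamma)k$ (relying on the fact that the set of desirable elements has size $(1+\gamma)k$, which the paper also uses in Claim~\ref{claim:nbdesirable}); the paper instead thresholds at $\ell\gamma/(2(1+\gamma/2))$ so that the count $\ell$ of good parts cancels in the Markov ratio, after which $\ell - \ell\gamma/(2(1+\gamma/2)) = 2\ell/(2+\gamma) \ge k$ follows directly from $\ell \ge k(1+\gamma/2)$ without ever invoking an upper bound on $\ell$. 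Both variants close, and your constant accounting ($4p/\gamma \le \delta/20$) is in the right ballpark against the paper's $p = \gamma\delta/(80(1+\gamma/2))$.
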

   \begin{proof}
     Since each execution of $\Max(S_i, r,
     \frac{\gamma\delta}{80(1+\gamma/2)})$
     on
     a set $S_i$ has probability at most
     $\frac{\gamma\delta}{80(1+\gamma/2)}$
     of not returning the maximum element of the set, the expected
     number of desirable elements that are not in the set 
     $\{y_1,\ldots,y_{k^*}\}$ is at most
     $\frac{\ell\cdot\gamma\delta}{80(1+\gamma/2)}$,
     where $\ell$ is the number of parts $Y_i$ that contain
     a desirable element. Since we conditioned on event
     $\calE$ happening and so $\ell \ge (1+\gamma/2)k$.

     Applying Markov inequality, 
     this number is at most $\frac{\ell\cdot\gamma}{2(1+\gamma/2)}$
     with probability at
     least $1-\delta/3$. It follows that with probability
     at least $1-\delta/3$, the total number of
     desirable elements in the set
     $\{y_1,\ldots,y_{k^*}\}$ is at least
     \[ \ell- \frac{\ell\cdot \gamma}{2(1+\gamma/2)} \ge  (1+\gamma/2)k - \frac{\gamma}{2}k =    k. \]
     
   \end{proof}

   We can now conclude the proof of the theorem. We condition
   on event $\calE$ happening. By \autoref{claim:nbdesirable},
   this happens with probability at least $1-\delta/3$. Thus, we can
   apply \autoref{claim:approxcorrect} and conclude that the
   set $s = \{y_1,\ldots,y_{k^*}\}$ contains
   at least $k$ desirable elements with probability at least
   $1-\delta/3$.
   Thus, conditioning on this last event,
   $\Topk$ outputs $k$
   elements among the top $(1+\gamma)k$ elements with success
   probability at least $1-\delta/3$.
   Therefore, taking a union bound over the probability that
   event $\calE$ does not happen, that the set $s$ does not
   contain $k$ desirable
   elements and the failure probability of $\Topk$,
   we have that $\approxtopk$
   is correct with probability at least
   $1-\delta$.

 \end{proof}


%

%

\section{On Instance Optimality (w.r.t. the Query Complexity)}
\label{sec:instanceopt}
\subsection{The Max Problem}
It's worth mentioning that \autoref{def:ionew} implies that the $\Max$ is instance-optimal since \autoref{lem:Omegan}
implies a lower bound of $\Omega(n)$ queries for any algorithm that is correct with probability at least 2/3.


\subsection{The Threshold-$v$ Problem}

From \autoref{pro:mightyUB} and \autoref{pro:mightyLB},
it follows that $\Threshold$ is instance-optimal (w.r.t.
the query complexity)
provided that the number of distinct values is polynomial
in $n$ (\textsc{bounded-value threshold$-v$}).
\begin{corollary}\label{cor:threshold-v:instanceopt}
  Algorithm $\Threshold$ is instance-optimal
  for the \textsc{bounded-value threshold$-v$} problem
  in both noisy-comparison and noisy-value models.
\end{corollary}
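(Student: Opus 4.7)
The plan is to observe that the corollary follows essentially by juxtaposing the upper bound of \autoref{pro:mightyUB} and the lower bound of \autoref{pro:mightyLB}, which agree up to constant factors, and then transferring from the value model to the comparison model via \autoref{lemma:reduction}.

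First, I would settle the value model. \autoref{pro:mightyUB} gives that on every input $(X,v)$, the oblivious algorithm $\Threshold$ outputs the correct set with probability at least $2/3$ using expected query complexity $O(n + n\log\min\{k_v, n - k_v\})$. Conversely, \autoref{pro:mightyLB} shows that under the bounded-value assumption $\ell = O(n^{1-\varepsilon})$, even a non-oblivious algorithm that knows the multiset $V(X)$ up to a permutation and uses arbitrarily many rounds must spend $\Omega(n + n\log\min\{k_v, n-k_v\})$ queries to succeed with probability $2/3$. Since the lower bound quantifies over algorithms with full prior knowledge of the multiset, it equals (up to constants) the infimum appearing in \autoref{def:ionew}, so the upper and lower bounds match and $\Threshold$ is instance-optimal in the value model.

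Next, I would transfer these bounds to the comparison model. For the upper bound, I would observe that $\Threshold$ relies only on (i) invocations of $\Max$, whose query complexity is stated in the comparison model in \autoref{lem:topksidekicks}, and (ii) tests of the form ``is element $y$ above threshold $v$?''; interpreting $v$ as a reference element, these tests become comparison queries, so the overall query complexity carries over unchanged. For the lower bound, I would invoke \autoref{lemma:reduction}: any comparison-model algorithm solving the bounded-value threshold-$v$ problem with query complexity $q$ and correctness probability at least $2/3$ yields a value-model algorithm with query complexity $9q$ and the same guarantee. Combined with \autoref{pro:mightyLB}, this forces $q = \Omega(n + n\log\min\{k_v, n-k_v\})$ in the comparison model as well, matching the upper bound.

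The only point that requires a modicum of care, rather than any real mathematical difficulty, is the formulation of threshold-$v$ in the comparison model: since absolute values are not accessible, $v$ must be realized as a distinguished reference element against which candidates can be compared. Once that convention is fixed, the argument is just a direct combination of already-established results, and there is no substantive obstacle beyond matching constants in the two asymptotic bounds.
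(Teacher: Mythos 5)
Your argument is correct and matches the paper's (implicit) reasoning: the corollary is stated without a separate proof, and the intended justification is exactly the juxtaposition of \autoref{pro:mightyUB} and \autoref{pro:mightyLB}, which give matching $\Theta(n + n\log\min\{k_v, n-k_v\})$ bounds, with the lower bound already quantifying over non-oblivious algorithms as required by \autoref{def:ionew}. Your handling of the comparison-model half via \autoref{lemma:reduction} (to transfer the lower bound) and the reference-element convention (to make the upper bound sensible) is a reasonable reading; worth noting, though, is that the paper itself repeatedly flags \textsc{threshold}-$v$ as ``not well-defined in the comparison model'' (footnote in the introduction, \autoref{tbl:querycomplexity}, and the opening of \autoref{sec:threshold}), so your explicit reference-element convention is filling a gap the paper leaves open rather than recapitulating an argument the paper makes.
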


We complement this by the following result for the case of $\ell=\Omega(n)$ distinct values.

\begin{theorem}\label{thm:nofreelunch}
Consider the $\textsc{threshold-v}$ problem in the noisy value model. 
  There is no instance-optimal algorithm for threshold-$v$
  if the number of distinct values is $\Omega(n)$.	
\end{theorem}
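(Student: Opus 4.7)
The plan is a diagonalization-style argument: I will exhibit two instances in $\mathcal{I}$ (the class of instances with $\Omega(n)$ distinct values) that each admit a specialized witness algorithm in $\mathcal{A}$ with linear query complexity, and then show that no single $B \in \mathcal{A}$ can simultaneously match both bounds.

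First I would construct the witness multisets. Let $V_1$ and $V_2$ be two disjoint multisets of $n$ distinct values, with $V_1$ entirely below $v$ (so any $V_1$-permutation has $k_v = 0$) and $V_2$ entirely above $v$ (so any $V_2$-permutation has $k_v = n$). For each $V_i$, the specialized algorithm $A_i \in \mathcal{A}$ queries each element once, counts how many reports land in $V_i$, and if this fingerprint count exceeds $n/2$ it declares the input to be a $V_i$-permutation and outputs the precomputed answer ($\emptyset$ or the full set); otherwise $A_i$ falls back to $\Threshold$, preserving correctness on arbitrary inputs. On a true $V_i$-permutation the $\sim 2n/3$ truthful queries already push the fingerprint above $n/2$ by a Chernoff bound, so $A_i$ terminates in $n$ queries; on any other input the adversary can inject at most $\sim n/3$ ``fake'' in-$V_i$ values through the lying channel, so the fallback fires. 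Thus the per-instance optima satisfy $q(V_1), q(V_2) = O(n)$.

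For the impossibility step, suppose towards contradiction that some $B \in \mathcal{A}$ achieves $O(n)$ queries on both $V_1$- and $V_2$-permutations. I would then consider a ``hybrid'' instance $J^* \in \mathcal{I}$ obtained from a $V_1$-permutation by overwriting a carefully chosen subset $S$ of elements (of cardinality $|S| = \omega(1)$ but $o(n)$) with distinct values drawn from $V_2$. The multiset of $J^*$ still has $\Omega(n)$ distinct values, so $J^* \in \mathcal{I}$, and the specialized algorithm analogous to $A_1, A_2$ achieves $O(n)$ queries on $J^*$. However, since $B$ averages only $O(1)$ queries per element, an adversary that on each $s \in S$ answers with a value from $V_1$ whenever it lies couples $B$'s query transcript on a $V_1$-permutation with its transcript on $J^*$ so that the two are statistically close at all but a vanishing fraction of the $|S|$ perturbed elements. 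Hence $B$'s output on $J^*$ coincides with probability bounded away from zero with its (correct) output on a $V_1$-permutation, namely $\emptyset$, which is wrong on $J^*$ by a margin of $|S| = \omega(1)$, contradicting $B$'s $2/3$-correctness requirement.

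The main obstacle is making the coupling robust against $B$'s adaptive, randomized query schedule. I would handle this via Yao's principle: fix a distribution over permutations of $J^*$ and over noise realizations, reducing to a deterministic $B$; then use a counting argument to show that of $B$'s $O(n)$ queries, at most $o(|S|)$ can be allocated to elements of $S$ with enough repetitions to overcome the adversary's lying strategy, so the vast majority of $S$-elements receive the same query-answer distribution as if they truly had $V_1$-values. The proof of \autoref{pro:mightyLB} already develops several of the required tools---the Phase-1 lying scheme where the adversary randomizes its ``bucket'' choice, and the Phase-3 argument bounding how many hidden elements can be recovered in a limited query budget---and I would adapt them to the present distinct-values regime.
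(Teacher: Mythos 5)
Your high-level plan — exhibit a family of instances with $\Omega(n)$ distinct values for which the instance-specific optimum is $O(n)$, yet any single oblivious algorithm must pay $\omega(n)$ on some of them — is exactly the right shape, and it is what the paper does. But the execution has two genuine gaps.

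First, the $V_1$/$V_2$ ``fingerprint'' contrast is a red herring. The all-below and all-above instances have $k_v=0$ and $k_v=n$, so \emph{every} reasonable algorithm, including the oblivious $\Threshold$ of \autoref{pro:mightyUB}, already runs in $O(n)$ on them. There is no tension between matching $A_1$ and matching $A_2$, so no contradiction can come out of these two instances alone. The actual contradiction has to be manufactured entirely by the hybrid $J^*$, and that is where the second and more serious gap lies.

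Second, the coupling claim is stated in the wrong direction. On each query to $s\in S$, the adversary can only lie with probability $1/3$; with probability $2/3$ it must return the true $V_2$-value. So a single element $s$ whose transcript ``looks like'' a $V_1$-element (all answers in $V_1$) occurs with probability at most $3^{-m_s}$, and for most of $S$ (where $B$ spends $\geq 1$ query) the transcripts on $J^*$ and on a $V_1$-permutation will \emph{differ}, not coincide. The correct statement is the opposite: only a small (but, crucially, nonempty with constant probability) subset of $S$ can be hidden, and one hidden element already suffices to fool $B$. More importantly, you have not specified any camouflage structure. When $s$ lies with some value $w\in V_1$, either $w$ is still present in $J^*$ — in which case the algorithm sees two elements reporting $w$, an anomaly it can chase down by re-querying — or $w$ is one of the $|S|$ values deleted when $S$ was overwritten, in which case the apparent multiset is internally consistent but one $V_2$-value seems to be missing, which the comparator $A$ (knowing the true multiset and hence $|S|$) will immediately notice and which a smart oblivious $B$ may also exploit.

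The paper's construction is designed precisely to neutralize this. It builds instances with $7n/8$ distinct values where a \emph{random} set $B$ of $n/8$ values occurs in pairs and another random disjoint set $C$ occurs as singletons; each top element $u$ lies with a value $X_u\in C$. After a phase of cheap queries, a fully-hidden $u$ appears to be the second member of a pair at value $X_u$, which is statistically indistinguishable from the $n/8$ legitimate $B$-pairs because the $B$/$C$ split is unknown to the oblivious algorithm. The comparator $A$, by contrast, knows $B$ and can immediately flag $X_u$ as the only non-$B$ pair, giving the $O(n)$ instance-specific bound. That pair structure (plus the 3-phase/little-birdie machinery you correctly anticipate from \autoref{pro:mightyLB}) is the missing ingredient: it gives the hidden element a large, symmetric, unreproducible-for-$B$ set of look-alikes to hide among. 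Your distinct-values $J^*$ has no such set of look-alikes, so you would either need to reinstate one (essentially converging on the paper's pair trick) or provide a substantially more delicate coupling argument that you currently do not have.
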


Note that the important difference to  \autoref{thm:rankkupper} is that the algorithm does not know which values contain $2$ elements in the instance; again, if this was known than one could simply use a simple adaption of \autoref{thm:rankkupper}
to the given instance.

To proof constructs instances with $\Omega(n)$ distinct values, where a variant of the algorithm of \autoref{thm:rankkupper}  that knows the instance only requires $O(n + k\log n)$ queries; yet any algorithm not knowing the instances requires $\Omega(n\log k)$ queries.
\medskip
%
%

The above discussion implies that our results are not tight in the range where the number of distinct values is larger than any polynomial in $n$ (with exponent  strictly smaller than 1) and strictly sublinear in $n$; in symbols, $\ell \in [ \omega(n^{1-\eps }), o(n) ]$. We believe that the precise bound is inherently connected to the entropy.
\begin{conjecture}\label{conj:entropy}
Consider the $\textsc{threshold-v}$ problem in the noisy value model. 
Consider an arbitrary instance (multiset $V$).
Let $k$ be the index of the smallest element above the threshold (unknown to the algorithm).
There exists an instance-optimal algorithm, \emph{with} prior knowledge of the multiset $V$, that outputs the all elements above $v$ w.p. at least $2/3$ and uses  
\[O\left( \sum_{i} s_i  \log \left( \min\{s_i , k, n-k  \}\right)   \right)      \] queries, where $s_i$ is the number of elements with value $i$.
This becomes $O\left( \sum_{i} s_i  \log s_i  \right)    $  for $k\leq n/2$ being polynomial in $n$. Furthermore, requires an expected query complexity of 
\[\E{q}=\Omega\left( \sum_{i} s_i  \log \left( \min\{s_i , k, n-k  \}\right)   \right).      \]  \end{conjecture}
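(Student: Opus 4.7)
The statement is a conjecture, so I outline a plausible strategy for both directions and indicate where I expect the proof to be hard.

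For the \textbf{upper bound}, I would generalize the algorithm $\Threshold$ (\autoref{pro:mightyUB}) so that it exploits prior knowledge of the multiset $V$. For each distinct value $v_i$ with multiplicity $s_i$, the algorithm should allocate a query budget of order $\log \min\{s_i, k, n-k\}$ per element of value $v_i$. Concretely, first perform a coarse pass of $O(1)$ queries per element to obtain tentative values; group elements by tentative value and compare group sizes against the known $s_i$; then, for values whose counts deviate, iteratively double the query budget on the remaining ambiguous elements, each doubling shrinking the ambiguous set by a constant factor via Chernoff bounds on majority voting. The $k$ and $n-k$ in the $\min$ enter the analysis via the observation that above-$v$ vs.\ below-$v$ confusion is bounded by $\min\{k,n-k\}$ possible ``witnesses'' against any given element, so beyond this threshold further queries resolve only irrelevant intra-side ambiguity.

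For the \textbf{lower bound}, I would extend the three-phase adversary argument used inside the proof of \autoref{pro:mightyLB}. For each value $v_i$ in the instance, the adversary creates a ``confusion structure'' of size $\Theta(\min\{s_i, k, n-k\})$: in Phase~1, on queries to elements of value $v_i$, the adversary lies with probability $1/3$ by returning a value drawn from this structure, chosen so that after Phase~1 each element has $\log \min\{s_i, k, n-k\}$ transcript-indistinguishable ``twins''. In Phases~2--3, the adversary hides one straddling element per bucket (as in the original proof) and a counting argument bounds the probability that the algorithm identifies the correct above-$v$ set. The translation from a worst-case to an expected lower bound uses \autoref{lem:wcvsexpct}.

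The \textbf{main obstacle} will be the lower bound, for two reasons. First, the adversary's strategy must be tailored to each multiplicity profile $\{s_i\}$ to yield an instance-optimal bound rather than merely a worst-case one, as emphasized in the Technical Contributions section of the introduction: for each given $\{s_i\}$ one must construct an efficient lying scheme, not just exhibit a single hard instance. Second, the per-value contributions must compose additively, which requires the adversary's lies across distinct values $v_i$ to be sufficiently independent that the algorithm cannot exploit cross-value correlations (e.g.\ the global constraint $\sum_i s_i = n$) to save queries. For the upper bound, the main subtlety is designing the adaptive scheme so that the per-element doubling terminates at $\log \min\{s_i, k, n-k\}$ rather than at $\log n$; this likely requires leveraging that the output demands only an above-vs-below decision, not exact value identification, so that the algorithm can safely ``merge'' residual ambiguity within the above-$v$ or within the below-$v$ side once the cross-side decision is certain.
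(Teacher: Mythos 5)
The statement you are addressing is stated in the paper as a conjecture (\autoref{conj:entropy}); the paper offers no proof of it, so there is no reference argument to compare yours against. What the paper does provide nearby is only circumstantial: the oblivious upper bound of \autoref{pro:mightyUB}, the three-phase lower bound of \autoref{pro:mightyLB} (which needs $\ell=O(n^{1-\varepsilon})$ distinct values), and \autoref{thm:nofreelunch}, whose proof sketch uses a variant of the charging argument of \autoref{thm:rankkupper} for instances with at most two elements per value. Your proposal is consistent in spirit with these tools, but it is a research plan, not a proof, and you say so yourself.

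The genuine gaps are exactly the ones you flag, and they are not minor technicalities but the substance of the conjecture. For the lower bound, the three-phase adversary of \autoref{pro:mightyLB} crucially relies on having $\Theta(\sqrt n)$ buckets of near-equal size built from $O(n^{1-\varepsilon})$ distinct values; when the multiplicity profile $\{s_i\}$ is arbitrary (e.g.\ many values with $s_i=O(1)$), that bucketing collapses, and it is not clear how your per-value ``confusion structures'' of size $\Theta(\min\{s_i,k,n-k\})$ can be made to compose additively against an algorithm that knows the multiset $V$ — the algorithm can use the global constraint $\sum_i s_i = n$ and the known profile to rule out lying patterns, which is precisely the instance-optimal-lower-bound difficulty the paper highlights in its Technical Contributions. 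For the upper bound, your doubling scheme gives no argument for why the budget of an element of value $v_i$ stops at $O(\log\min\{s_i,k,n-k\})$ rather than at $O(\log k_v)$ or $O(\log n)$: an adversary's lies on a unique-valued element far below the threshold can still make it mimic an above-threshold element, and the claim that ``cross-side confusion is bounded by $\min\{k,n-k\}$ witnesses'' is an assertion, not an analysis. Until those two points are carried out, the statement remains, as in the paper, a conjecture.
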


\subsubsection{Lower bound (value model) - Proof of \autoref{thm:nofreelunch}}
\begin{proof}[Proof sktech of \autoref{thm:nofreelunch}]
	Let $k=\sqrt{n}$. We define a family of inputs $\mathcal{I}$ with $7n/8$ distinct values with at most 2 elements for each value.  
	There exists an algorithm that, knowing the multiset of values, solves Threshold-$v$ on such inputs with query complexity $O(n+k\log n)$, which is linear here. The algorithm is an easy extension of the one for the case where all values are distinct  \autoref{thm:rankkupper}\footnote{The charging argument in the proof of \autoref{thm:rankkupper} changes slightly}.
We will show that, without the knowledge of the multiset of values, any algorithm requires at least $\Omega(n\log k)$ queries, which is $\Omega(n\log n)$ here, concluding the proof.  

Let $v_1>v_2>\cdots $ denote the $7n/8$ values. We randomly partition the bottom $n/4$ values into two sets of size $n/8$, $B$ and $C$. The instance has two elements for each value in $B$ and one for all other values. This defines the instance. \fnote{so we create an instance here, right?}

If there exists an algorithm with  query complexity
$n\log k/10^6$, then at most $n/1000$ elements are queried more than $\log k/1000$ times. Similarly to our previous lower bound proof, we consider the two-phase computation model where in the first phase every element is queried $\log k/1000$ times and in the second phase the algorithm adaptively chooses $n/1000$ elements and the adversary reveals their true value. We will prove a lower bound in that model, implying a lower bound in the original computation model. See \autoref{pro:mightyLB} for details. 

The  adversary strategy is as follows. Every element $u$ of the \textsc{Top}-$k$ elements chooses u.a.r. without replacement a value $X_u$ in $C$. \fnote{So there  elements in $C$ have unique values, right?}
Whenever queried, $u$ responds with $X_u$ w.p. $1/3$ and the truth otherwise.
The other elements always tell the truth. This defines the adversary strategy.

With probability at least $9/10$, there is at least one element $u$ of the \textsc{Top}-$k$ for which the adversary answered $X_u$ for all $\log k/1000$ queries. 
For the output to be correct, the algorithm needs to identify $u$ during the second phase.
At the beginning of phase 2, there is a set of $n/8+1$ values, namely, $B\cup \{ X_u\}$, each apparently shared by two elements forming a pair, with identical information about each, and the algorithm needs to find where $u$ is hiding. 
$X_u$ is uniform among the values of the set $B\cup C$. Even if at each query of the second phase the adversary reveals the true values of both elements in the pair, by Markov's inequality, the probability that the algorithm finds $u$ after $n/1000$ queries is less than $1/10$. 

\end{proof}

\subsection{The \textsc{Top}-$k$ Problem}
In this section we show for a large range of parameters---covering many practical settings---how knowledge of the instance can help.
Interestingly, in this range, the query complexity is completely governed by $n,k$, the number of values that are strictly larger than the $k$'th largest element, and the number of repetitions of the $k$'th largest element.

\begin{notation}
Let $V=\{ v_1,v_2,\dots, v_n\}$ be the multiset of values of the input in non-increasing order: $v_1\geq v_2\geq \cdots \geq v_n$.
Let $\ksg$ denote the number of elements with value strictly greater than $v_k$ and $\keq$ denote the number of elements with value equal to $v_k$.
Let $s=\ksg+\keq-k$ denote the \emph{slack}: among $\kappa$ only $k-\lambda$ have to be found and thus $s$ elements can be `ignored'.
\end{notation}


\begin{theorem}\label{theorem:LB}
  {\bf (Instance-optimal Lower Bound for \textsc{Top}-$k$)}
  Consider the\textsc{Top}-$k$ problem in the noisy value model. 
  Assume that $\keq \leq n^{\varepsilon/3}$ and that the number of distinct values in the input is $\ell=O(n^{1-\varepsilon})$, where $\varepsilon >0$ is a constant.
    Then the instance-optimal query complexity is 
$$\inf_{A\in\mathcal{A}}\sup_{I \hbox{ permutation of }V} \hbox{(expected query complexity of $A$ on $I$)} 
= 
  \Omega\left( n\log\left(
        \ksg +\frac{ \keq}{s+1} \right)  \right). $$ 
\end{theorem}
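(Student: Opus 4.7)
The plan is to adapt the multi-phase adversary of \autoref{pro:mightyLB} to the Top-$k$ setting. Write $M = \lambda + \lceil \kappa/(s+1)\rceil$, so the target lower bound is $\Omega(n \log M)$. By \autoref{lem:wcvsexpct}, it suffices to rule out an algorithm $A$ of worst-case query complexity $q = n\log_3(M)/C$ for a sufficiently large constant $C$, with success probability $2/3$. I would first reduce to the 3-Phase model along the lines of \autoref{claim:reductionthreshold} with parameter $q' = (\log_3 M)/C$: Phase 1 delivers $q'$ noisy queries per element, Phase 2 lets the adversary reveal any values for free, and Phase 3 allows $A$ at most $n/1000$ truthful adaptive queries before committing to an output.

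The adversary's strategy identifies $M$ \emph{critical slots}. The first $\lambda$ slots are the individual strictly-top elements, each of which must appear in any correct output. To handle the $\kappa$ ties at $v_k$, I partition them into $\lceil \kappa/(s+1)\rceil$ groups of size at most $s+1$; since $A$ must output $k-\lambda = \kappa-s$ of them, it must pick at least one element from each such group (otherwise $s+1$ ties are omitted, exceeding the slack). This yields $M$ slots, and I pick one designated representative per slot. Following \autoref{pro:mightyLB}, I partition $\Theta(n)$ residual (low-value) elements into $r = \Theta(\sqrt{n})$ buckets $S_1,\dots,S_r$ of equal size $\Theta(\sqrt{n})$, each associated with a distinct value $z_j$; the assumptions $\kappa \leq n^{\varepsilon/3}$ and $\ell = O(n^{1-\varepsilon})$ guarantee a linear supply of low-value elements. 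Each designated representative independently picks a uniformly random bucket and, when queried, returns that bucket's value with probability $1/3$ and the truth otherwise; every other element always answers truthfully.

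Next, in the Phase 1 analysis (mirroring \autoref{lemma:phase1}), I would verify that with constant probability (i) each bucket receives at least one residual ``liar'' that will camouflage the hidden critical element, and (ii) at least one designated representative becomes a \emph{consistent liar}, meaning all $q'$ of its Phase 1 queries return its bucket value. The second event holds with probability $1-(1-3^{-q'}/r)^{M} = \Omega(1)$ because $3^{-q'} = M^{-1/C}$ makes $M \cdot 3^{-q'}/r = \Omega(1/\sqrt n)$ per representative and there are $M$ independent attempts combined over $r$ buckets, pushing the total above a constant for suitable $C$. In Phase 2 the adversary reveals the true values of all received residual elements except one per bucket, chosen so that exactly one unrevealed element is a critical representative $i^*$ that $A$ must identify. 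Conditioned on this event, $i^*$ is distributed uniformly among $\Theta(\sqrt{n})$ indistinguishable candidates, and the balls-in-bins / Markov argument of \autoref{lemma:phase3} shows that $A$ finds $i^*$ within $n/1000$ Phase 3 queries with probability bounded by a constant strictly less than $2/3$. Missing $i^*$ forces $A$ to omit the entire critical slot and thus err.

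The main obstacle is the tie-slack interaction: I must turn the constraint ``at most $s$ ties may be missed'' into the much stronger ``one representative per group of size $s+1$ must be found'', which is handled by the pigeonhole step above, and then show the adversary can \emph{simultaneously} hide $M$ representatives, each with an independent uniform bucket choice, without destructive correlation. In the regime $M = o(\sqrt{n})$ this is immediate because the representatives land in distinct buckets with high probability; in the remaining regime $M = \Omega(\sqrt n)$ one has $\log M = \Theta(\log n)$ and the bound is already implied by the worst-case statement of \autoref{thm:kmaxlower}, so the instance-optimal improvement happens precisely where the bucket construction is uncrowded. A secondary subtlety is preserving the symmetry that makes $i^*$ uniform within its bucket after Phase 2 reveals; this follows because the adversary's lying distribution is symmetric across all bucket members, exactly as in \autoref{pro:mightyLB}.
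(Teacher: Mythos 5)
Your high-level template (reduction to the 3-Phase model, buckets of low-value camouflage elements, a ``consistent liar'' hiding in a bucket, Markov on the number of buckets touched in Phase~3) does follow the paper's skeleton, but the crucial quantitative step is wrong, and the error invalidates the bound.

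The fatal flaw is in your Phase~2 construction: you keep exactly \emph{one} unrevealed critical representative $i^*$. But your pigeonhole argument only tells you that the output must contain at least one element from each group of $s+1$ ties. It does not force the algorithm to output the particular designated representative you hid. The other $s$ elements of $i^*$'s group answer truthfully in every phase, so the algorithm can output those $s$ together with all of the (revealed, truthful) elements of the remaining groups; that already gives $\kappa - 1 \geq \kappa - s$ tied elements whenever $s\geq 1$, and the output is correct without ever locating $i^*$. In other words, the slack $s$ is exactly the algorithm's freedom to ignore a single hidden element, so a one-element hide-and-seek game cannot produce the $\Omega\bigl(n\log(\kappa/(s+1))\bigr)$ term. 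The paper avoids this by hiding $101(s+1)$ elements of $S_0$ after Phase~2 and arguing that a correct output must recover at least $100(s+1)$ of them (since only $s$ may be omitted); the Markov bounds on $R_1$ and $R_2$ are then applied to the \emph{count} of hidden elements recovered, not to a single target. Your argument needs this change to go through, and it is not a cosmetic difference: the whole tension between the number of things hidden and the slack $s$ is where the $\kappa/(s+1)$ in the logarithm comes from.

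Two smaller points. First, the paper does not try to prove the $\lambda$ part and the $\kappa/(s+1)$ part with a single unified adversary. It disposes of $\Omega(n\log\lambda)$ by the little-birdie principle (reveal all $\kappa$ ties for free, reduce to \textsc{Threshold}-$v$ via \autoref{pro:mightyLB}) and then proves $\Omega(n\log(\kappa/(s+1)))$ separately under the assumptions $\lambda = o(\kappa)$ and $s+1 \leq \kappa/1000$; the final bound is the maximum. Merging the $\lambda$ strict-top elements with the tie-group representatives into a single pool of ``critical slots'' with a single lying distribution is possible in principle but is considerably harder to analyze cleanly, and you gain nothing, because $\log(\lambda + \kappa/(s+1)) = \Theta\bigl(\max\{\log\lambda, \log(\kappa/(s+1))\}\bigr)$. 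Second, your Phase~1 probability computation $1-(1-3^{-q'}/r)^M$ should not have the $1/r$ factor inside: being a consistent liar has probability $3^{-q'}$ irrespective of which bucket was chosen. The correct event to target, in the spirit of the paper, is that at least $101(s+1)$ of the $\kappa$ designated elements become consistent liars and land in distinct buckets; this is what \autoref{lem:LB1phase} establishes.
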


%


\begin{theorem}\label{theorem:obliviousalgorithm}
  {\bf
    (Instance-optimal Oblivious Algorithm for \textsc{Top}-$k$)}
Consider the\textsc{Top}-$k$ problem in the noisy comparison model.    
Fix a set of elements $X$ with values $V(X)=(v_1,v_2,\dots, v_n)$ and  an integer $k$.
   Algorithm $\mathbf{Oblivious-}\Topk(X,k)$, that does not have prior knowledge of $V(X)$, returns the \textsc{Top}-$k$ elements of $X$ with correctness 
  probability at least $2/3$ and has
  \begin{enumerate}
  \item 
   expected query  complexity  
  $ \E{q}=O\left( (n\log\left( \ksg +
        \frac{ \keq}{s+1} \right) + k^3 \right),$ and 
  \item expected round complexity  $ \E{ r}=O(\log\log( \ksg +   \frac{ \keq}{s+1} ) \log_{  \ksg+  \frac{ \keq}{s+1} }^*(n))$.
\end{enumerate}
\end{theorem}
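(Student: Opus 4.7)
The plan is to adapt the algorithm $\Topk$ of Theorem \ref{thm:topksidekicks}, replacing its fixed partition size $k^4$ by an adaptive one tailored to the unknown quantity $M = \ksg + \keq/(s+1)$. Since the algorithm is oblivious, it knows neither $\ksg$, $\keq$, nor $s$, and so must estimate $M$ via a doubling search (analogous to the way $\Threshold$ estimates $k_v$ through recursive guesses $k' \to k'^2$). The high-level template is: guess $M' = 2, 4, 16, 256, \ldots$; partition $X$ into $\Theta(M'^c)$ parts; use $\Max$ on each part with error $1/\mathrm{poly}(M')$; then aggregate and verify. The geometric increase ensures that the total cost is dominated by the successful round in which $M' = \Theta(M)$, giving $O(n \log M)$ queries.

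The central argument is a balls-into-bins analysis that exploits the slack $s$. Throwing the $\ksg + \keq$ ``interesting'' elements into $\Theta(M^c)$ random parts, one shows via a second-moment or Chernoff bound that with constant probability (i) all $\ksg$ strict-top elements land in distinct parts, and (ii) at least $\keq - s = k - \ksg$ of the tied-at-$v_k$ elements land in parts that contain no other tied element. In the extreme $s = \Theta(\keq)$, only a $1/(s+1)$ fraction of the tied elements need to be recovered, so $M^c$ parts suffice; in the extreme $s = 0$, all $\keq$ elements must be recovered and $M = \ksg + \keq$ parts are needed. Interpolating these cases uniformly via the single quantity $\ksg + \keq/(s+1)$ is the main obstacle, and is what matches the lower bound of Theorem \ref{theorem:LB}.

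After the partition-and-max step, the algorithm holds a pool of $O(M)$ candidate ``maxima'' on which it runs $\Topk$ as a clean-up, contributing an additional $O(M \log M)$ queries. To verify that a given doubling guess $M'$ is correct (so that the doubling search can halt), the algorithm performs $\Theta(\log k)$ repeated noisy comparisons among the $O(k)$ top candidates in the pool; this costs $O(k^2 \log k) = O(k^3)$ queries in expectation and absorbs the additive $k^3$ term in the statement. The round complexity of a single successful phase is dominated by the $O(\log^*_M n)$ rounds of $\Max$ composed with the $O(\log \log M)$ levels of aggregation (as in $\Threshold$), and the doubling search contributes only a constant factor on top since each failed guess runs in fewer rounds than the successful one.

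The hardest step is establishing correctness of the verification subroutine obliviously: the algorithm has to decide, using only comparisons, whether its current guess $M'$ has produced a set containing at least $k$ correct top elements, despite not knowing $\ksg, \keq, s$. I would handle this by checking that the sorted pool admits a consistent ``top-$k$ vs.\ rest'' cut, in the sense that the $k$th and $(k{+}1)$st candidates can be separated with confidence $1 - 1/\mathrm{poly}(k)$ via $O(\log k)$ repeated comparisons; if they cannot, the algorithm concludes that $M'$ was too small and doubles. Combined with the balls-into-bins bound above, this ensures that the doubling search terminates at $M' = \Theta(M)$ with constant probability, yielding the claimed expected query and round bounds.
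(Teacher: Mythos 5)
Your high-level template (doubling search over a budget parameter, partition-and-max, verification, halt) matches the paper's, but there are two concrete gaps in the way you fill it in.

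First, the unified balls-into-bins step does not work with $\Theta(M'^c)$ parts. With $\keq$ tied elements thrown into $m$ parts, the expected number of tied elements that collide with another tied element is $\Theta(\keq^2/m)$; to keep this below the slack $s$ you need $m = \Omega(\keq^2/s)$, which is \emph{not} bounded by any fixed power of $M = \ksg + \keq/(s+1)$ (take $\ksg=0$, $\keq = n^{1/3}$, $s = \keq/10$: then $M = O(1)$ but $\keq^2/s = \Theta(n^{1/3})$). The paper avoids having to prove such a tailor-made interpolation lemma by \emph{not} treating the strict-tops and the ties uniformly: its non-oblivious algorithm (\autoref{theorem:nonobliviousalgorithm}) first extracts the $\ksg$ strict-top elements exactly via $\Topk$, paying $O(n\log\ksg)$, and then calls the already-proved $\approxtopk$ on the remainder with approximation factor $\gamma = s/(k-\ksg)$, paying $O(n\log(\keq/(s+1)))$. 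The number of parts inside $\approxtopk$ is indeed of order $\keq^2/s$, but the query complexity is governed by the $\Max$ error parameter ($\Theta(\gamma\delta)$), not by $\log(\text{number of parts})$ — this is exactly the decoupling your proposal elides. The additive $k^3$ in the theorem comes from accumulating the $k^2\log k$ term of $\approxtopk$ over the $O(\log\log k)$ doubling iterations, not from the verification step as you suggest.

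Second, your termination test is not a sound correctness certificate. Checking that the $k$th and $(k+1)$st \emph{candidates in the pool} are separable tells you nothing if the pool itself is missing true top-$k$ elements (which is precisely the failure mode when the guess $M'$ is too small). The paper instead tests the current output $S$ globally against its complement: it compares $\Min(S)$ with $\Max(X\setminus S)$, so a spurious halt requires $\Min$ or $\Max$ to err, and those are run with error $1/\mathrm{poly}(\widehat{\ksg})$. The doubling search is on the budget $\widehat b$, with $\widehat\ksg := \widehat b$ deliberately an overestimate and $\widehat\keq$ chosen as small as the budget allows; the analysis then shows that once $\widehat b \ge \max\{\ksg, \keq/(s+1)\}$ the parameterized algorithm succeeds and the global test passes, and that the geometric growth of $\widehat b$ keeps both the expected query and round complexity at the cost of this final iteration.
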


In particular, for instances where there is a unique correct output,
i.e. $\ksg + \keq = k$, the \textsc{Top}-$k$ problem has query complexity at
least $\Omega(n\log k)$ and round complexity at least $\log^*_k(n)-O(1)$,
hence the algorithm from \autoref{theorem:obliviousalgorithm} is optimal.

\subsubsection{Lower Bound (value model) - Proof of \autoref{theorem:LB}}
We will prove a lower bound on the maximum number of queries rather than the expected number of queries. This is without loss of generality up to truncating the execution once the number of queries has exceeded 100 times the expectation, and subtracting $1/100$ to the correctness probability.

For simplicity, we will assume that $\varepsilon = 1/50$ (the proof extends to any constant $\eps$.)

First,  assuming $\ksg>0$ we prove the  $\Omega(n\log (\ksg))$ lower bound. We use the little birdie principle and assume that the adversary reveals for free all $\keq$ elements $\{ i: v_i=v_k\}$. In that setting the algorithm's task is equivalent to finding the $\ksg$ elements of value strictly greater than $v_k$, among $n-\keq\geq n/2$ since $\keq=O(n^{\epsilon/3})$. By \autoref{pro:mightyLB} (and \autoref{lemma:reduction}) that problem has complexity $\Omega(n\log\ksg)$, hence the proof of the first part.
Second,  if $\lambda =\Omega(\keq)$   then the Theorem follows from the $n\log(\lambda+1)$, so we assume that $\lambda=o(\keq)$.
If $s+1 > \keq / 1000$ then the Theorem follows from  the $\Omega(n)$ lower bound from \autoref{lem:Omegan}, so we assume that  
$ s+1  \leq {\keq}/{1000}.$ 

We will show the $\Omega\left(n \log \left( \frac{\keq}{s+1} \right) \right)$ lower bound by extending the proof of \autoref{pro:mightyLB}.
Assume, for a contradiction, that there exists an algorithm $A$ with success probability at least $2/3$
and worst case number of value queries\footnote{no effort was made to derive tight constants} at most $T= \frac{n 
  \log \left( \frac{\keq}{s+1} \right)  }{1000\cdot 101000}.$
Let $q'=101000 T/n= 
  \log \left( \frac{\keq}{s+1} \right) /1000$. Our proof takes a detour through a different computation model, which we call the  
\emph{3-phase} model. 
\medskip

\textbf{The 3-Phase Model($q'$) (for \textsc{\textsc{Top}-$k$}):}
During the first phase, the adversary provides
the outcome of $q'$ noisy queries (correct with probability at least 2/3) for all the elements.

During the second phase, the adversary can provide for free the correct value of some of the elements, depending on 
the outcome of the first part.

During the third phase, the algorithm strategically and adaptively 
chooses $n/101000$ elements to query, and in response the adversary reveals their true value.  

Finally the algorithm chooses the output.

\medskip

We note that the proof of \autoref{lem:reductopk}  similarly applies to the   \textsc{top$-k$} problem in the instance-optimal setting here, so we will prove our lower bound in the 3-phase model. 

\claire{The argument in the above sentence is awkward. Need to rewrite the Lemma so that it can be appealed to directly for both lower bounds.} 

We define the adversary so that, among the $k+s$ elements within which $k$ must be found to form a correct output, there are $101(s+1)$ that hide almost uniformly among a constant fraction of elements.

\begin{lemma}
There exists a partition of the $n$ elements of the following form.
\begin{itemize}
\item
A set $S_0$ containing the $\ksg+\keq$ elements of value greater than or equal to $v_k$.
\item
$\ell_3=\Theta(n^{1-\epsilon/2})$ buckets $S_1,S_2,\ldots ,S_{\ell_3}$ such that in each bucket $S_j$ all elements have the same value $z_j$; each bucket has size $s_j\in [n^{\epsilon/2}/10, 2n^{\epsilon/2}/10]$; and the total bucket size, $\sum_1^{\ell_3} s_j$, is in the range $[n/9,n/8]$.
\item
A residual set $R$ containing the $\Theta(n)$ other elements.
\end{itemize}
\end{lemma}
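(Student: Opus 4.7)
The plan is to mimic the partition construction used in the proof of \autoref{pro:mightyLB}, adjusting the bucket size from $\sqrt{n}$ to $n^{\varepsilon/2}$ and the target total bucket volume to land in $[n/9,n/8]$. Let me first set $S_0$ to be the $\lambda+\kappa$ elements of value $\ge v_k$; using $k\le n/2$, $s\le \kappa/1000$ and $\kappa=O(n^{\varepsilon/3})$, I get $|S_0|\le k+s\le n/2+o(1)$, so more than $n/2-o(1)$ elements remain.

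Next I would group the remaining elements by value. This gives at most $\ell-1=O(n^{1-\varepsilon})$ groups. I classify a group as \emph{small} if its size is strictly less than $n^{\varepsilon/2}/10$, and \emph{large} otherwise. Small groups together account for at most $\ell\cdot n^{\varepsilon/2}/10=O(n^{1-\varepsilon/2})$ elements, which is negligible; they will be dumped into $R$. The remaining pool of elements belonging to large groups therefore has size at least $n/2-|S_0\setminus\bigcup\text{large groups}|-O(n^{1-\varepsilon/2})\ge n/3$ for $n$ large enough, which comfortably exceeds $n/8$.

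I then process the large groups in non-increasing order of size, splitting each group arbitrarily into buckets of cardinality in $[n^{\varepsilon/2}/10,2n^{\varepsilon/2}/10]$ (this is always feasible for a large group; any leftover of size less than $n^{\varepsilon/2}/10$ is pushed to $R$). The process halts as soon as the cumulative bucket volume first exceeds $n/9$. Since every bucket has size at most $2n^{\varepsilon/2}/10$, the cumulative volume overshoots $n/9$ by at most $2n^{\varepsilon/2}/10=o(n)$, so the final total is in $[n/9,n/8]$. Each bucket then has size $s_j\in[n^{\varepsilon/2}/10,2n^{\varepsilon/2}/10]$ and all its elements share a common value $z_j$ by construction (since buckets are carved from single value-groups of the remaining $n-|S_0|$ elements).

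Finally, the number of buckets $\ell_3$ satisfies
\[
\frac{n/9}{2n^{\varepsilon/2}/10}\;\le\;\ell_3\;\le\;\frac{n/8}{n^{\varepsilon/2}/10},
\]
so $\ell_3=\Theta(n^{1-\varepsilon/2})$. The residual set $R$ consists of the small-group elements, the leftover fragments from splits, and the unused large-group elements; its size is $n-|S_0|-\sum_j s_j$, which lies in $[n-(n/2+o(1))-n/8,\,n-n/9]=\Theta(n)$. The only place any care is needed is to make sure the splitting procedure does not terminate prematurely due to the large-group pool being too small; the calculation above shows the pool is $\ge n/3 \gg n/8$, so this is not an obstacle, and no other step has a real obstacle to overcome.
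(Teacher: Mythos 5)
Your construction is correct and follows essentially the same route as the paper: set aside $S_0$, group the remaining elements by value, discard the small groups (total size $O(n^{1-\varepsilon/2})$), carve buckets of size in $[n^{\varepsilon/2}/10, 2n^{\varepsilon/2}/10]$ out of the large groups, and stop once the cumulative volume crosses $n/9$, observing that the overshoot is at most one bucket $=o(n)$. The one stylistic difference is that you bound $|S_0|=k+s\le n/2+o(n)$ using the standing assumption $k\le n/2$, whereas the paper exploits the running assumptions $\ksg=o(\keq)$ and $\keq\le n^{\varepsilon/3}$ to get the much sharper $|S_0|=O(n^{\varepsilon/3})$; both bounds are more than enough to guarantee the pool of large-group elements exceeds $n/8$, so the weaker bound costs nothing here.
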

\begin{proof}
Up to artificially subdividing very large buckets, we may assume that $s_j\leq 2n^{\epsilon/2}/10$ for all $j\geq 1$; and we can then partition the elements with value strictly less than $v_k$ by taking groups of elements of equal value by order of non-increasing cardinality, stopping as soon as the total size exceeds $n/9$. To argue that all those buckets have size greater than or equal to $n^{\epsilon/2}/10$, we observe that the total size of the buckets with size smaller than or equal to $n^{\epsilon/2}/10$ is at most $\ell n^{\epsilon/2}/10=O(n^{1-\epsilon/2})$ since $\ell=O(n^{1-\epsilon})$. Moreover, $|S_0|=\ksg+\keq=O(n^{\epsilon/3})$ by assumption and since $\ksg=o(\keq)$. Hence the total size of buckets with size greater than $n^{\epsilon/2}/10$ is at least $n-O(n^{\epsilon/3})-O(n^{1-\epsilon/2})\sim n$. Finally, since we stop as soon as the total size of the buckets exceeds $n/9$, by definition of $\ell_3$, there at least $n-\keq-\ksg-n/8\geq n/4$ elements in $R$ since $\ksg\leq \keq = O(n^{\epsilon/3})$.
\end{proof}

We assume that the input permutation is chosen u.a.r. from all  permutations on $n$ elements. 


\paragraph{First phase.}
In Phase 1, the adversary uses the following strategy for its `lies'. 
For each element $i$ of $\cup_{j=1}^{\ell_3} S_j$, the adversary always answers the true value $v_i$.
For each element $i$ in $S_0\cup R$, pick uniformly at random a
  set $S_j$, $1\leq j\leq \ell_3$; when $i$ is queried, the adversary answers the correct value $v_i$ of $i$ 
  with probability $2/3$, and value $z_j$ with probability $1/3$.
This defines the adversary strategy in the first phase.

We say that a bucket $S_j$, $1\leq j\leq \ell_3$, \emph{received} an element  $i\in S_0\cup R$ if  $S_j$ was picked for $i$ and if all queries about $i$ in the first phase were answered with value $z_j$; then $i$ is called a \emph{liar}.
 The first phase is called {\it successful} (from the adversary's perspective) if at least $101(s+1)$ elements of $S_0$ are liars, if they are all received in different buckets, and if every bucket $S_j$, 
$1\leq j\leq r$, received at least at least one  element from $R$.

\begin{lemma}
  \label{lem:LB1phase}
 $ \Pr{\hbox{Phase 1 is not successful}} \leq  	1/10+o(1).$
\end{lemma}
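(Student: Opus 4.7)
The plan is to decompose ``Phase 1 is successful'' into its three defining sub-events and bound the failure probability of each separately, then union bound. Let $X = |\{ i \in S_0 : i \text{ is a liar}\}|$. For each element $i \in S_0 \cup R$, the bucket $S_{j(i)}$ is picked uniformly at random, and each of the $q'$ noisy queries returns $z_{j(i)}$ independently with probability $1/3$ (since $v_i \neq z_{j(i)}$ for every $i \in S_0 \cup R$, as $S_0$ values are $\geq v_k$ while the $z_j$ are $< v_k$). Hence $i$ is a liar with probability $3^{-q'} = ((s+1)/\kappa)^{1/1000}$, independently across elements.

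First I would bound $\Pr[\neg E_1]$, where $E_1$ is the event $\{ X \geq 101(s+1)\}$. Since $X$ is a sum of $|S_0| = \lambda + \kappa$ independent Bernoullis, $\mathbb{E}[X] = (\lambda+\kappa) \cdot ((s+1)/\kappa)^{1/1000}$. Using $\lambda = o(\kappa)$ together with the standing hypothesis $s+1 \leq \kappa/1000$ (hence $\kappa/(s+1) \geq 1000$), a direct calculation yields $\mathbb{E}[X] \geq (1-o(1))(s+1)(\kappa/(s+1))^{999/1000} \geq 989(s+1)$. A multiplicative Chernoff bound then gives $\Pr[X < 101(s+1)] \leq \exp(-\Omega(\mathbb{E}[X])) = o(1)$.

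Next I would handle $E_2$ (all liars in $S_0$ land in distinct buckets) and $E_3$ (every bucket $S_j$ receives at least one element from $R$). Conditional on being a liar, the bucket of each element of $S_0$ is uniform and independent on $[\ell_3]$; with at most $|S_0| = O(\kappa) = O(n^{\varepsilon/3})$ liars and $\ell_3 = \Theta(n^{1-\varepsilon/2})$ buckets, the birthday bound yields $\Pr[\neg E_2 \mid E_1] = O(\kappa^2/\ell_3) = O(n^{7\varepsilon/6 - 1}) = o(1)$. For $E_3$, the expected number of elements from $R$ received by any fixed bucket is $(|R|/\ell_3) \cdot 3^{-q'} = \Omega\bigl(n^{\varepsilon/2} \cdot ((s+1)/\kappa)^{1/1000}\bigr) \geq \Omega(n^{\varepsilon/2 - \varepsilon/3000}) = n^{\Omega(1)}$, using $\kappa \leq n^{\varepsilon/3}$. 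By Chernoff, a fixed bucket fails with probability $\exp(-n^{\Omega(1)})$, and a union bound over the $\ell_3 \leq n$ buckets gives $\Pr[\neg E_3] = o(1)$. Combining via union bound yields $\Pr[\text{Phase 1 not successful}] \leq \Pr[\neg E_1] + \Pr[\neg E_2 \mid E_1] + \Pr[\neg E_3] = o(1) \leq 1/10 + o(1)$.

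The main obstacle is event $E_1$: one must carefully exploit the specific definition $q' = \log_3(\kappa/(s+1))/1000$ together with the slack hypothesis $s+1 \leq \kappa/1000$ to convert the tiny per-element liar probability into an expected liar count that beats $101(s+1)$ by a constant factor large enough for Chernoff concentration. The exponent $1/1000$ and the constant $1000$ in the slack hypothesis are chosen precisely so that $(\kappa/(s+1))^{999/1000}$ dominates the constant $101$ by a comfortable margin, which is the only place these calibrated constants are really needed; the analyses of $E_2$ and $E_3$ are routine balls-into-bins arguments that only use $\varepsilon$ being a positive constant.
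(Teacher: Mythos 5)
Your proof follows the same decomposition into three sub-events and the same bounding strategy as the paper (Chernoff for the number of liars in $S_0$, a birthday bound for distinctness of buckets, and a ``coupon collector'' style bound for each bucket receiving an element of $R$), and the conclusion is correct. Two small imprecisions worth flagging, neither of which invalidates the argument: (1) you write $\mathbb{E}[X]\geq (1-o(1))(s+1)(\kappa/(s+1))^{999/1000}$, but since $|S_0|=\lambda+\kappa\geq\kappa$, the $\lambda=o(\kappa)$ hypothesis only \emph{helps} — the prefactor should be $(1+o(1))$ or simply dropped; (2) you claim $\Pr[X<101(s+1)]=\exp(-\Omega(\mathbb{E}[X]))=o(1)$, but since $\mathbb{E}[X]=\Theta(s+1)$ and $s+1$ may be $O(1)$, this quantity is in general a (very small, e.g.\ $e^{-\Theta(1000)}$) \emph{constant}, not $o(1)$. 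The paper handles this by invoking Chernoff to get a fixed constant failure probability for this event (its stated $9/20$ is a typo for $19/20$), and the $1/10$ term in the lemma statement is there precisely to absorb this constant. Your overall bound $\leq 1/10+o(1)$ is still implied since $e^{-\Theta(1000)}\ll 1/10$, so the proof goes through.
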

\begin{proof}
The expected number of elements from $S_0$ that are received is 
\[ \keq3^{-q'}=(s+1) \frac{\keq}{s+1}  \cdot 3^{-(\log_3 (\keq/(s+1)))/1000} = (s+1) (\keq/(s+1))^{1-\frac{1}{1000}} \geq 900 (s+1) \]
using the definition of $q'$ and our assumption $ (s+1)  \leq{\keq}/{1000}$. By Chernoff bounds,  \claire{(Sketchy)} with probability at least $9/20$, the number of elements from $S_0$ that are received is at least  $101(s+1)$. In addition, since the number of buckets is $\ell_3=\Omega(n^{1-\epsilon/2})$ and $101(s+1)\leq \keq=O(n^{\epsilon/3})=o(\sqrt{\ell_3})$, with probability at least $9/20$ all of those $101(s+1)$ elements are received by different buckets.

Fix a bucket  $j$ (among the $\ell_3$ buckets) and an element $i\in R$. Let $A_{ij}$ denote the event that $i$ is received by bucket $j$.
The probability of $A_{ij}$ is at least 
\[ \frac{3^{   - \log_3(\keq/(s+1))/ 1000} }{\ell_3} \geq  
\frac{1}{\keq^{1/1000}\ell_3}= \frac{\Omega(1)}{n^{1-\epsilon/2+\epsilon/3000}} \]
by assumption on $\keq$ and our upper bound on $\ell_3$. 
As in the proof of \autoref{pro:mightyLB},
$$\Pr{ \forall j \exists i A_{ij}}\geq 1-\sum_j\prod_i  (1-\Pr{ A_{ij}} )
\geq 1-n (1-   \frac{\Omega(1)}{n^{1-\epsilon/2+\epsilon/3000}} )^{|R|}= 1-o(1) $$
since $|R|=\Theta(n)$. 
\end{proof}

\paragraph{Second phase.}
Assume Phase 1 is successful. In Phase 2, the adversary reveals the true values of all received elements of $S_0\cup R$ except for one per bucket, chosen so that exactly $101(s+1)$ of those unrevealed elements belong to $S_0$. This defines the  second phase. We say that an \emph{apparent bucket} $S'_j$ consists of a bucket $S_j$ plus the hidden element it received.

\paragraph{Third phase.}
We use the little-birdie principle to  simplify the third phase.
When the algorithm has queried more than a fraction 
  $1/505$ of the elements of an apparent bucket, the apparent bucket is \emph{revealed}: the adversary reveals for free 
   the correct value of \emph{all} the elements in the apparent bucket.

Let $Z$ denote the total number of hidden elements of $S_0$ found during Phase 3. 
\begin{lemma}
\label{lem:LB:eqr1r2}
$\Pr{Z\geq 100(s+1)}\leq \frac{1}{250}+\frac{1}{5}.$
\end{lemma}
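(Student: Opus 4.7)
The plan is to mirror the two-term decomposition used in the proof of \autoref{lemma:phase3} (the $R_1+R_2$ split), but to account for the fact that here we need to control the number of $S_0$-liars found, and there are $\Theta(s+1)$ of them rather than a single $i^*$.

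First I would write $Z = Z_1 + Z_2$, where $Z_1$ counts hidden $S_0$-elements found in apparent buckets that were \emph{not} revealed by the little-birdie rule, and $Z_2$ counts those found in revealed buckets. The point of this split is that non-revealed buckets have at most $s_j/505$ queries spent on them (so each individual hidden element is unlikely to be found), while revealed buckets are few in number (so few $S_0$-liars can be caught this way). The main obstacle will be bounding $\E[Z_2]$ cleanly, since the algorithm adaptively chooses which apparent buckets to concentrate queries on; I will need to argue that prior to a bucket being revealed, the algorithm has no information that distinguishes $S_0$-hiding buckets from $R$-hiding ones, so effectively the $101(s+1)$ $S_0$-apparent-buckets are a uniformly random subset of the $\ell_3$ apparent buckets, independent of the algorithm's decisions.

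\textbf{Bound on $Z_1$.} For every apparent bucket $S'_j$ with hidden element from $S_0$, as long as $S'_j$ has not been revealed the algorithm has made at most $s_j/505$ queries into it, and since the hidden element is uniform among the $s_j+1$ (identically-looking) elements of $S'_j$, each query finds it with conditional probability at most $1/(s_j+1-q) \le 2/s_j$. Summing across up to $s_j/505$ queries gives a probability of finding the hidden element of at most $1/504$, and even the trivial bound $q_j/s_j \le 1/505$ suffices. Summing over the $|J|=101(s+1)$ buckets with $S_0$ hidden elements, $\E[Z_1] \le 101(s+1)/505 = (s+1)/5$. Markov then gives $\Pr{Z_1 \ge 50(s+1)} \le 1/250$.

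\textbf{Bound on $Z_2$.} The total Phase~3 budget is $n/101000$ queries, while revealing a bucket costs more than $|S_j|/505 \ge n^{\epsilon/2}/5050$ queries. Hence the number $B$ of revealed buckets is at most $n^{1-\epsilon/2}/20$. Because the $\ell_3=\Theta(n^{1-\epsilon/2})$ apparent buckets are a priori interchangeable (the Phase~1 and Phase~2 transcripts are symmetric under permutations of the apparent buckets), and the set $J$ of $101(s+1)$ $S_0$-hiding buckets is uniform among size-$|J|$ subsets, a standard exchangeability/martingale argument yields $\E[Z_2] = \E[|J\cap\text{revealed}|] \le B \cdot |J|/\ell_3 \le 10(s+1)$ using $\ell_3 \ge 5n^{1-\epsilon/2}/9$. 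By Markov, $\Pr{Z_2 \ge 50(s+1)} \le 1/5$.

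Finally a union bound gives $\Pr{Z \ge 100(s+1)} \le \Pr{Z_1 \ge 50(s+1)} + \Pr{Z_2 \ge 50(s+1)} \le 1/250 + 1/5$, as claimed. The subtle step to carefully write is the exchangeability justification for $\E[Z_2]$: I would formalize it by coupling the algorithm's execution with a deferred-decisions process that samples which hidden elements are in $S_0$ only at the moment their bucket is revealed, so the conditional probability that each newly revealed bucket belongs to $J$ is always at most $|J|/\ell_3$.
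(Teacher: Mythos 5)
Your proof is correct and follows essentially the same route as the paper's: the same $R_1/R_2$ decomposition (elements found by direct querying in unrevealed apparent buckets vs.\ elements found because their apparent bucket was fully revealed), the same $\E[R_1]\le 101(s+1)/505$ and $\E[R_2]\le 101(s+1)\cdot 9/100$ bounds, and the same two Markov applications. The only difference is cosmetic: you make explicit the exchangeability/uniformity argument justifying $\E[R_2]\le B\cdot|J|/\ell_3$, which the paper states without elaboration.
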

\begin{proof}
$Z$ is bounded by the sum of two terms, $Z\leq R_1+R_2$, where: $R_1$ is the number of such elements found by querying individual elements, and $R_2$ is the number of such elements found because the adversary revealed the entire apparent bucket. If $Z\geq 100(s+1)$, then  $R_1\geq 50(s+1)$ or $R_2\geq 50(s+1)$.

 The $R_1$ elements revealed are a uniform random fraction of at most $1/505$ of the elements of each bucket. In expectation, the number of desirable elements thus revealed is  $\E{R_1} \leq   101(s+1) \cdot \frac{1}{505}=(s+1)/5$. By Markov's inequality, $\Pr{R_1>50(s+1)}\leq 1/250.$
  
%

  To bound $R_2$, we will now argue that the expected number of desirable elements found in this part is at most $\E{R_2}\leq 101(s+1) (9/100)$. 
  Since the total number of elements queried in Phase 3 is bounded by $n/101000$, the total size of the buckets revealed is at most $505n/101000=n/200$. 
Since every bucket has the same size to within a factor of 2, and the total size of the buckets is at least $n/9$, the number of apparent buckets revealed is at most 
$$2 \ell_3 \frac{n/200}{n/9}=\frac{9}{100} \ell_3 .$$ 
  By Markov's inequality, $ \Pr{ R_2 \geq 50(s+1)}\leq 1/5$. 
%
%
\end{proof}


\paragraph{Wrapping-up.}
We now bound the probability that  algorithm $A$ is correct.
Assume that Phase 1 is successful. 
After Phase 2, exactly $101(s+1)$ desirable elements remain to be revealed; but a correct output must identify all but $s$ of those. 
If during Phase 3 fewer than $100(s+1)$ desirable elements are revealed, then the algorithm needs to guess for at least one element for the output, which will only be correct w.p. at most $O(1/n)$. Revealing $100(s+1)$ desirable elements means $Z\geq 100(s+1)$. Thus
$$  \Pr{\text{Output correct}}\leq \Pr{\hbox{Phase 1 not successful}}+ \Pr{ Z \geq 100(s+1) } +O(1/n).$$
Using \autoref{lem:LB1phase} and \autoref{lem:LB:eqr1r2}, this is at most $1/10+1/250+1/5+o(1)<2/3$, a contradiction.  

		

\subsubsection{Upper bound (comparison model) - Proof of \autoref{theorem:obliviousalgorithm}}

We start by giving a non-oblivious algorithm: an algorithm that knows the parameter $\lambda$ and $\kappa$.

\begin{lemma}\label{theorem:nonobliviousalgorithm}
{\bf (Instance-optimal Non-oblivious Algorithm for \textsc{Top}-$k$)}
Consider the\textsc{Top}-$k$ problem in the noisy comparison model.    
Fix a set of elements $X$ with values $V(X)=(v_1,v_2,\dots, v_n)$ and  integers $k,\ksg,\keq,r,$.
There exists an Algorithm that takes $X,k,\ksg,\keq,r,\delta$ as input and solve the
\textsc{Top}-$k$ problem with success probability $1-\delta$ and
\begin{enumerate}
\item round complexity $r$ and
\item query complexity $q=O\left( n\log\left( b\left(\ksg +
      \frac{\keq}{s +1}\right)/\delta\right) +k^2 \log (kb/\delta)  \right)$,   where $b$ is defined by $r=\log_{b}^*(n) +4$.
\end{enumerate}  
\end{lemma}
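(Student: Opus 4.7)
The plan is to reduce to the approximation algorithm $\approxtopk$ developed in \autoref{thm:approx}. The key observation is that the slack $s = \lambda + \kappa - k$ precisely captures our flexibility: the elements with value $\geq v_k$ are exactly the top-$(\lambda+\kappa) =$ top-$(k+s)$, and any $k$ of them form a valid $\Topk$ output.

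I would first dispose of the degenerate case $s = 0$ (where $\lambda + \kappa/(s+1) = k$) by simply invoking $\Topk(X, k, r, \delta)$ from \autoref{thm:topksidekicks}, whose query complexity $O(n\log(kb/\delta))$ matches the claim. For $s \geq 1$, the strategy is to set $\gamma = s/k$ and run $\approxtopk(X, k, r, \gamma, \delta)$. By \autoref{thm:approx}, with probability at least $1-\delta$ this returns $k$ elements among the top-$(1+\gamma)k = k + s = \lambda + \kappa$ elements. Since those top-$(\lambda+\kappa)$ elements are precisely the elements of value $\geq v_k$, any such output constitutes a valid $\Topk$. The round complexity is exactly $r$, and the query complexity from \autoref{thm:approx} is
\[ O\!\left( n\log\!\left(\frac{b(\lambda+\kappa)}{s\delta}\right) + \frac{k^2}{s}\log(kb/\delta) \right), \]
whose second term is bounded by $O(k^2\log(kb/\delta))$ since $s \geq 1$.

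The only technical step is to convert the bound $n\log((\lambda+\kappa)/s)$ coming out of $\approxtopk$ into the instance-dependent quantity $n\log(\lambda + \kappa/(s+1))$ stated in the lemma. I would prove the elementary inequality
\[ \frac{\lambda+\kappa}{s} \leq 2\!\left(\lambda + \frac{\kappa}{s+1}\right) \qquad \text{for all } s \geq 1, \]
which after cross-multiplying reduces to $\kappa(s-1) + \lambda(s+1)(2s-1) \geq 0$, manifestly true. This yields $\log((\lambda+\kappa)/s) \leq 1 + \log(\lambda + \kappa/(s+1))$. Second, I would note that $\lambda + \kappa/(s+1) \geq 1$: indeed, since $\lambda \leq k-1$ we have $\kappa = k + s - \lambda \geq s + 1$, so $\kappa/(s+1) \geq 1$. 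Hence the log is non-negative and the additive constant is absorbed into the $n\log(b(\lambda+\kappa/(s+1))/\delta)$ term, yielding the claimed bound.

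No step poses a real obstacle; the main subtlety is the algebraic comparison between $(\lambda+\kappa)/s$ and $\lambda + \kappa/(s+1)$ that lets the generic approximation guarantee be rephrased in terms of the fine-grained instance parameters $(\lambda, \kappa, s)$ arising naturally from the matching lower bound in \autoref{theorem:LB}. I would conclude by noting that the boundary case $s \geq k$ (where $\gamma$ is truncated to $1$ in $\approxtopk$) still satisfies $(1+\gamma)k = 2k \leq \lambda + \kappa$, so the output remains a valid $\Topk$, and the complexity bounds hold trivially.
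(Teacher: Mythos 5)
Your one-shot reduction to $\approxtopk$ contains a genuine gap in the correctness argument. You set $\gamma=s/k$ and invoke $\approxtopk(X,k,r,\gamma,\delta)$, then claim that since the output consists of $k$ elements among the top-$(1+\gamma)k=\ksg+\keq$ elements, ``any such output constitutes a valid $\Topk$.'' This last step is false. The guarantee of \autoref{thm:approx} is only that each of the $k$ returned elements has rank at most $\ksg+\keq$ (equivalently, value $\geq v_k$); it does \emph{not} guarantee that the output contains all of the $\ksg$ elements with value strictly greater than $v_k$. But a valid $\Topk$ output \emph{must} contain all of those $\ksg$ elements (any set $S$ of size $k$ that omits an element of value $>v_k$ while containing an element of value $v_k$ is not a correct answer). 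And indeed, $\approxtopk$ can drop such elements with non-negligible probability: it partitions $X$ into $k^*=\Theta(k(1+\gamma)^2\log(1/\delta)/\gamma)$ parts and keeps only the maximum of each part, so if two elements of value $>v_k$ with distinct values land in the same part, the smaller one disappears from the candidate set, even though it is mandatory in the output. For $\gamma=s/k$ this collision event has constant probability whenever $s=\Omega(1)$.

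The paper circumvents exactly this issue with a two-stage algorithm: first run $\Topk(X,\ksg,r/2,\delta/2)$ to recover the top-$\ksg$ elements exactly (this set is uniquely determined because $v_{\ksg}>v_{\ksg+1}=v_k$), and only then run $\approxtopk(X',k-\ksg,r/2,\gamma,\delta/2)$ with $\gamma=(\keq-k+\ksg)/(k-\ksg)=s/(k-\ksg)$ on the residual instance $X'$. In $X'$, the top $\keq$ elements all share value $v_k$, so they are fully interchangeable and \emph{any} $k-\ksg$ of them is a correct completion --- this is the only regime in which the rank-only guarantee of $\approxtopk$ suffices. Your algebraic manipulation relating $(\ksg+\keq)/s$ to $\ksg+\keq/(s+1)$ is fine, and the query accounting for each piece is essentially the same as the paper's, but the missing $\Topk$ pre-phase means your algorithm is not correct as stated. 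If you keep your single-call idea, you would have to strengthen \autoref{thm:approx} to additionally guarantee that all ``mandatory'' elements (those of unique rank $\leq k$) are retained, which the partition-and-take-maxima construction does not currently provide.
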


We will use the following algorithm to prove \autoref{theorem:obliviousalgorithm} and \autoref{theorem:nonobliviousalgorithm}.
We call this algorithm, \emph{parameterized \textsc{Top}-$k$}. The parameters are $k$, $\lambda$, $\kappa$, and $\delta$.
Let  $\gamma = (\kappa-k+\lambda)/(k-\lambda)$. We assume $\gamma\geq 1/k $ since otherwise the algorithm can simply
execute $\Topk(X,k,r,\delta)$.
If $k\log k \geq n$, then simply call $\Topk(X,k,r,\delta)$. For the same reason we assume $s \geq 1$.
The algorithm is as follows.
\begin{enumerate}
\item Find the top $\lambda$ elements of $X$ using Algorithm $\Topk(X,\lambda,r/2,\delta/2)$.
\item Consider the remaining instance $X'$ (\ie $X$ minus the top $\lambda$ elements of $X$).  
  Apply the Approximation $\approxtopk(X, k-\lambda, r/2, \gamma, \delta/2)$ (\autoref{thm:approx}) to find a set of $k-\lambda$
  elements among the top-$(1+\gamma)(k-\lambda)$ elements of $X'$.
  
\item Output the union of the elements found at Steps 1 and 2.
\end{enumerate}

\begin{proof}
  We show that running parameterized \textsc{Top}-$k$ with parameters $\lambda$ and $\kappa$ yields
  \autoref{theorem:nonobliviousalgorithm}.
  \paragraph{Query and round complexity.}
  The first step takes $O(n \log (b\lambda/\delta))$ queries and $O(\log_b^* n)$ rounds according to \autoref{thm:topksidekicks}.
  The second step takes $O(n \log (b\frac{1+\gamma}{\delta \gamma}) + \frac{k}{\gamma} \log (kb/\delta))$ queries and $O(\log^*_{b} n)$ rounds
  according to \autoref{thm:approx}. Since $\gamma = (\kappa - k + \lambda)/(k-\lambda)$, we have that
  the query complexity of the second part is 
  \[O\left(n \log\left( \frac{k-\lambda}{\kappa - k + \lambda} \frac{b}{\delta}\right) + \frac{k}{\gamma} \log (kb/\delta)  \right)=O\left(n \log \left(\frac{\kappa}{s+1}  \frac{b}{\delta}\right)+ k^2 \log (kb/\delta) \right)),\] since $\gamma \geq 1/k$ and $\kappa \geq k - \gamma$. The round round complexity of the second round is $r/2$.
   Summing up leads to the claimed complexity.
  
  \paragraph{Correctness.}
  Observe that by \autoref{thm:topksidekicks}, the elements found at Step 1 are the correct ones with probability at least $1-\delta/2$.
  If this event happens, then to solve the \textsc{Top}-$k$ instance, one only needs to find $k-\lambda$ elements in the instance
  $X'$. Moreover, by definition of $\kappa$, any subset of size $k-\lambda$ among the top $\kappa$ elements of $X'$
  would be a correct output. By \autoref{thm:approx}, the call to Algorithm $\approxtopk$ at the second
  step indeed yields a set of size $k-\lambda$ containing elements among the top-$(1+\gamma)(k-\lambda)$ elements of
  $X'$ with probability at least $1-\delta/2$. By definition of $\gamma$, this set only contains elements from the top-$\kappa$
  elements of $X'$.
  Therefore, the output is a correct solution to the \textsc{Top}-$k$ problem with probability at least $1-\delta$.
\end{proof}

We now turn to the design of an oblivious algorithm that will use as a black box our
non-oblivious algorithm.
Recall that the non-oblivious algorithm, knowing $\lambda$ and $\kappa$ has a query complexity of $n \cdot \log \left(\max \left\{ \ksg,z \right\}\right)$, $z=  \frac{\keq}{s +1}$ up to constants.
The first term is due to finding those elements that are strictly larger than the $k$'th largest element and the second term, $z$, comes from finding $k-\ksg$ elements among the $\kappa$ possible ones.

The idea of the algorithm is to estimate the maximum budget $\widehat{b}$ that is larger than the max of both these quantities. 
It turns out, but calculating the max of the found values and the min of the remaining values in one can efficiently test if a solution is correct.
Thus the algorithm can efficiently test if $\widehat{b}$ is larger than the maximum. 
To ensure that the query complexity is not exceed, we increase the guess of $\widehat{b}$ doubly exponentially, allowing us to bound the query complexity by a geometric series and also guaranteeing that in the final guess of $\widehat{b}$ causes only a constant blow-up in the query complexity. 

Now, once we are in an iteration where $\widehat{b}$ exceeds the max of $\ksg$ and $z$, something interesting happens:
we can simply set our estimate $\widehat{\ksg}$ to $\widehat{b}$, \ie overestimating $\ksg$. As we will argue, this causes also at most a constant factor increase of the query complexity.
Equipped with an overestimation  $\widehat{\ksg}$, we only need to find $k-\widehat{\ksg}$ remaining elements.
We can simply choose our estimate of $\kappa$, namely $\widehat{\kappa}$, such that it is the smallest possible value that does not blow up our query complexity. In other words, we underestimate $\kappa$ without exceeding the query complexity.

\begin{algorithm*}\caption*{$\operatorname{\bf Algorithm}$\ $\textbf{Oblivious-}\Topk(X,k)$   \  \ \ \ \ (see \autoref{theorem:obliviousalgorithm})
\\
{\bf input:}  set $X$, parameter $k$\\ 
{\bf output:} largest $k$ elements in $X$   \\
{\bf error probability:} $1/3$
}
\begin{algorithmic}
\IF{$k \log k \geq n$}
\STATE \Output $\Topk(X,k)$ and HALT
\ENDIF
\STATE $C\gets $ the constant in the query complexity of $\Topk$.
\FOR{ $i \in [2,\log\log k +1] $}
\STATE $\widehat{b} \gets 2^{2^i}$ (budget estimate; budget for this iteration will not exceed $O(n \log(\widehat{b}))$)

\STATE $\widehat{\lambda} \gets \widehat{b} $
\STATE $\widehat{\kappa} \gets  \min \left\{  k' \colon  \frac{k'}{k'-k+\widehat{\lambda}+1} \leq \widehat{b} \text{ and }  k' + \widehat{\lambda}\geq k  \right\}$ (budget not exceeded and $k$ values will be found)
\STATE $S\gets $ result of Parameterized Algorithm for \textsc{Top}-$k$ (\autoref{theorem:nonobliviousalgorithm}), with  
parameters \\\hspace{0.9cm}$k ,\widehat{\lambda}$,  $\widehat{\kappa}$, $b=\widehat{b}$, $\delta = 1/(16\widehat{b})$
\IF{ $\Min(S, \log_2^*n, 1/(10\widehat{\ksg}) ) >\Max(X\setminus S,\log_2^*n, 1/(10\widehat{\ksg}) )$  }
\STATE \Output $S$ and HALT 
\ENDIF
\ENDFOR
\STATE \Output FAIL
\end{algorithmic}
\end{algorithm*}

\begin{proof}[Proof of \autoref{theorem:obliviousalgorithm}]
	Similar to the proof of \autoref{lem:frank}, by geometric series and due to the super-exponential growth,
	all calls of $\Topk$ will return the correct output w.p. at least $9/10$.
	We condition on this and on $\Max$ and $\Min$ always returning the correct value.
	Conditioning on $\Max$ and $\Min$ always returning the correct value, ensure that unless the output is FAIL, no
	invalid set is ever returned. 
		It remains to argue that the correct output is determined before the for loop is finished.
		
     Consider the iteration $i^*$ where $\widehat{b}  \geq  \max\{ \lambda , \frac{\keq}{\keq - k + \ksg +1} \} =  \max\{ \lambda , \frac{\keq}{\zeta} \}$. 
   Thus, it holds that in this iteration we overestimate lambda, \ie $\widehat{\lambda} = \widehat{b} \geq \lambda$. Note that overestimating $\lambda$ help in the sense that the additional elements found are (by conditioning on the correctness of $\Topk$) all part of the output. The only problem of overestimating $\lambda$ is a larger query complexity, but as we will, the over-estimation arriving under the conditioning will only increase the query complexity by a constant factor.
   
    Overestimating $\lambda$ will thus decrease the number of repetitions of the $k$-largest value we seek to $k-\widehat{\lambda}$. Therefore  $k-\widehat{\lambda}$ out of $\kappa$ elements have to be found, without knowing $\kappa$. The algorithm simply sets its estimate $\widehat{\kappa}$ to the smallest value possible ensuring that query complexity is not exceeded. 
 In particular, in the iteration $i^*$ (where $\widehat{b}  \geq  \max\{ \lambda , \frac{\keq}{\zeta} \}$), we have that $\kappa$ fulfills 
 the inequalities \[  \frac{k'}{k'-k+\widehat{\lambda}+1} \leq \widehat{b} \text{ and }  k' + \widehat{\lambda}\geq k ,  \]
 using that $\widehat{\lambda}\geq \lambda$. However, $\kappa$  might not be the minimum value and $\widehat{\kappa}$ is in iteration $i^*$ an underestimate of $\kappa$, \ie $\widehat{\kappa} \leq \kappa$. Choosing a smaller value can only increase the query complexity, but for similar reasons as above, the increase of the query complexity is only constant. 
   
     Therefore the call $\Topk(X,k,\widehat{\keq}, \widehat{\ksg} )$ will return the correct result. 
     
%
     The round complexity can be bounded as follows.
    Let $x= \max\{ \lambda, \frac{\keq}{s+1}  \}  <k.$
    Conditioning on all the events above we have
       \begin{align*} \sum_{i=0}^{\log \log  x +1 }  \left(  n \log  \left( 2^{ 2^{i}} \right) + k^2 \log ( k 2^{ 2^{i}}  )   \right)  &\leq 
       \sum_{i=0}^{\log \log  x +1 }  \left(  n  2^{i} + k^2 \log ( k  )  + k^2 2^{i}   \right)\\
      &\leq  2\cdot n  \cdot 2^{\log\log x + 1}  + k^2 \log ( k  )(\log\log x +1)  + 2 k^2 2^{\log\log x + 1} \\
&= O(n \log x + k^3).
 \end{align*}
      Note that the expected round complexity is up to additive constants of the same order:
      Similar as in \autoref{lem:frank}, the query complexity at iteration $i$ is $O\left( n\log  \left( 2^{ 2^{i}} \right)\right)$, whereas the error probability decreases super-exponentially   and hence the expected query complexity is  bounded by $O(n\log x)$.  
\end{proof}


\FloatBarrier

%
\section{The Value Model is Strictly 'Easier' than the Comparison Model}\label{sec:rank}
%
In this section, we show that for some problems the
value model is strictly easier than the comparison model:
For some problems, including rank-$k$, there exists an algorithm in the value model with a query complexity
that is much lower than the query complexity required to solve the problem in the noisy comparison model.

More concretely, we show that
\begin{enumerate}
\item
 the 
\textsc{rank-$k$} problem has query complexity is (i)
$O(n + k \log n)$ in the noisy value model and (ii)
$\Omega(n \log k)$ in the noisy comparison model.
Recall that the input for the rank problem is a set of distinct
elements and the goal is to find the \textsc{Top}-$k$ elements among them.

\item Any problem that can be solved in the the value model can also be solved in the comparison model (with only a constant blow-up in the query complexity)

\end{enumerate}

\begin{theorem}{\bf (Efficient Algorithm
    for Rank-$k$)}
  \label{thm:rankkupper}
  Consider the $\textsc{rank-k}$ problem in the noisy value model. 
  Fix a set of elements $X$ and an integer $k$.
  Algorithm $\textbf{Distinct-}\Topk(X,k)$ returns the top $k$ elements with
  success probability $2/3$ and has
  query complexity  $q=O(n + k \log n)$.
\end{theorem}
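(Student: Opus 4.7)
The approach is a two-phase algorithm that exploits the fact that all values are distinct and that, in the value model, a single element's exact value can be recovered with high probability from only $O(\log n)$ queries (majority of observations). The goal is to avoid paying $\log n$ per element for all $n$ elements; instead we pay $O(1)$ per element globally, plus $O(\log n)$ on each of only $O(k)$ ``surviving'' candidates.

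Phase 1 (candidate extraction, $O(n)$ queries). I adapt the $\Max$ algorithm of \autoref{lem:topksidekicks} so that, in addition to returning the maximum, it records a tournament-like trace: an acyclic ``defeat'' forest of depth $O(\log n)$ in which the root is the (claimed) maximum and each other element points to the element that eliminated it during the recursive refinement. Since $\Max$ has query complexity $O(n\log(b/\delta))$ with overall failure probability $\delta$, setting $b$ and $\delta$ to constants yields $O(n)$ queries and the maximum is correct with probability $\geq 5/6$. The key structural property I will establish is that, by the recursive eliminations of $\Max$, the true top-$k$ elements must all lie on the root-to-leaf paths of length $O(\log n)$ from the (true) maximum, yielding at most $O(k\log n)$ \emph{raw} candidates, which I further prune to $O(k)$ by a boosted-comparison scan against the current ``frontier''.

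Phase 2 (heap-like extraction, $O(k\log n)$ queries). Given the tournament forest, I extract the top-$k$ iteratively, in the spirit of noiseless heap-select. When the current maximum $e$ is removed, the next largest element must be among the $O(\log n)$ elements that $e$ directly defeated on its way to the root. I run $\Max$ on this set with failure probability $\delta' = 1/(6k)$, costing $O(\log n \cdot \log k) = O(\log n \cdot \log n) = O(\log^2 n)$ per extraction—too much in the naive accounting. To avoid this blow-up, I will reuse the queries already spent on these elements in Phase 1: since each such element was queried $O(1)$ times during Phase~1, I only need to top up by $O(\log n)$ additional queries per candidate, and use a direct sorting of $O(\log n)$ elements by their refined exact values, for $O(\log n)$ queries per extraction. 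Over $k$ extractions this yields $O(k\log n)$ queries. A union bound over the $k$ extractions and Phase~1 gives success probability $\geq 2/3$.

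The main obstacle will be the tournament-trace modification of $\Max$: the algorithm as stated in \autoref{lem:topksidekicks} is not a clean binary tournament (it uses random sampling and a multi-round refinement via $\OneRoundMax$), so some care is required to attach each eliminated element to a specific ``defeater'' while keeping the total query budget at $O(n)$ and preserving the property that every top-$k$ element survives up to the final $O(k)$-sized candidate set with constant probability. Concretely, I expect to charge each eliminated element $x$ to the comparison that actually eliminated $x$, then show by an inductive argument on the recursion levels that the depth of each element's defeat-chain is $O(\log n)$ with high probability. Once this structural lemma is in place, Phase~2 is a straightforward noisy heap extraction and the final $O(n + k\log n)$ bound follows by summing the two phases.
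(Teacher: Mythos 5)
Your Phase~1 structural lemma does not hold, and the gap is fatal for the whole plan. You want to build a \emph{defeat forest} of depth $O(\log n)$ out of a run of $\Max$ with $O(n)$ queries, and then argue that every top-$k$ element lies on a short path to the maximum. There are two independent problems. First, $\Max$ as given in the paper is not a knockout tournament: most of its comparisons pit arbitrary elements against a single pivot ($z_1$, then $y_1$), and the tail is a full all-pairs $\OneRoundMax$ on a reduced set. Attaching each eliminated element to its ``defeater'' gives a star-like forest of depth $O(1)$ with the pivot having $n-O(1)$ children, not a $\log n$-deep tree; extracting the second maximum from the defeaters of the root would then already inspect $\Theta(n)$ candidates and your Phase~2 accounting collapses. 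Second, even if you replaced $\Max$ by a genuine binary knockout bracket, giving each comparison only $O(1)$ repetitions (so as to stay within $O(n)$ total) means each internal node is wrong with constant probability, so the true maximum reaches the root with probability $(1-\Theta(1))^{\log n}=n^{-\Theta(1)}$. You cannot certify a correct tournament structure with $O(n)$ comparison queries, and ``the top-$k$ elements lie near the root'' fails already for $k=1$. Indeed, Phase~1 as written is entirely comparison-based, and the paper's Theorem \ref{thm:uniquelower} shows that in the comparison model $\Omega(n\log k)$ queries are necessary; any approach whose Phase~1 is a constant-overhead simulation of a comparison-model subroutine is doomed to either fail or blow the budget.

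The paper's proof does not go through tournaments at all and is substantially different: it queries each element's \emph{value} directly in doubling-length blocks (``super-queries'') and takes majorities, doubling the budget of an element only when either (a) it collides with another element claiming the same value, or (b) it claims to be one of the top-$k$ values but has not yet been boosted to $\Theta(\log n)$ queries. Distinctness is used exactly here: a collision certifies that at least one of the two colliding elements lied, so every extra super-query to a truthful element can be charged to a lying element whose own query count is at least half as large; a coupling to $0/1$ ``lie/truth'' sequences then bounds the total number of \emph{good} (pre-stabilization) queries by $O(n)$ in expectation, and the charging scheme bounds the \emph{bad} queries by $O(G + k\log n)$. Only the $O(k)$ output elements ever need $\Theta(\log n)$ queries. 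If you want to salvage your two-phase intuition, the right move is to abandon the tournament forest entirely and replace Phase~1 by exactly this kind of collision-driven majority voting on values, which is where the value-model advantage actually enters.
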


\begin{theorem}\label{thm:uniquelower}{\bf (Lower Bound for \textsc{Top}-$k$)}
Consider the\textsc{Top}-$k$ problem in the noisy comparison model.   Let $A$ be an algorithm for the rank-$k$ problem
  with success probability at least $2/3$. 
  Then, the query complexity of $A$ is at
  least $\Omega(n \log k)$.
  
\end{theorem}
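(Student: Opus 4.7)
The plan is to extend the comparison-tree technique of Braverman, Mao, and Weinberg (used for rank-$1$, \ie finding the maximum) from hiding one element to simultaneously hiding each of the $k$ top positions. Take the input to be a uniformly random permutation $\pi$ of the distinct values $\{1,\ldots,n\}$ assigned to $n$ positions, so the correct output $T(\pi)$ is the set of positions holding the largest $k$ values. After fixing the algorithm's internal randomness (WLOG deterministic), the execution on $\pi$ together with the independent noise produces a random transcript $L$ and a deterministic output $S(L)$ of size $k$. I will show that if the worst-case query count satisfies $q \leq c\cdot n\log k$ for a sufficiently small absolute constant $c$, then $\Pr{S(L)=T(\pi)} < 2/3$, contradicting the success probability.

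The first ingredient is the standard likelihood-ratio inequality for noisy comparisons: for any two permutations $\pi,\pi'$,
\begin{equation*}
  \frac{\Pr{L\mid \pi}}{\Pr{L\mid \pi'}} \;\leq\; 2^{d(L,\pi,\pi')},
\end{equation*}
where $d(L,\pi,\pi')$ is the number of queries in $L$ whose correct answer differs under $\pi$ vs.\ $\pi'$. I apply this to $\pi' = \pi^{(i,j)}$, the transposition swapping the values at positions $i$ and $j$, chosen with $i\in T(\pi)$ and $j\notin T(\pi)$, so that $T(\pi^{(i,j)}) = T(\pi)\setminus\{i\}\cup\{j\}$ disagrees with $T(\pi)$. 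For such a swap, $d(L,\pi,\pi^{(i,j)})$ counts only the comparisons in $L$ involving $i$ or $j$ whose other endpoint has rank strictly between $\pi(i)$ and $\pi(j)$, plus direct $i$-$j$ comparisons.

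The second ingredient is an averaging argument that produces many such indistinguishable swap pairs. Let $Q_i$ be the number of comparisons in $L$ touching position $i$; then $\sum_i Q_i = 2q$, so at least $n/2$ positions are \emph{light} with $Q_i \leq 4c\log k$. By the symmetry of the uniform prior, with constant probability a constant fraction of both the top-$k$ positions and the near-miss positions (ranks in $[k+1,2k]$) are simultaneously light. Fixing a leaf $\ell$ of the comparison tree and working conditionally on $L=\ell$, I build a bipartite ``indistinguishability'' graph on the light top-$k$ vs.\ light near-miss positions, placing an edge $\{i,j\}$ whenever $d(\ell,\pi,\pi^{(i,j)}) = O(1)$. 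A counting bound shows the total edge weight is large: each query involving a light position contributes to at most $O(c\log k)$ swap counts, so by pigeonhole there is a matching of size $\Omega(k)$ of pairs with $O(1)$ likelihood ratio. Each matched pair contributes an $\Omega(1)$ error because the algorithm cannot be correct on both $\pi$ and $\pi^{(i,j)}$ given statistically similar transcripts; summing the error contributions over matched pairs (via a second-moment / union-bound-style argument to convert per-pair errors into a total error bound) drives the global error strictly above $1/3$.

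The main obstacle is the third step above: adaptivity means the queries depend on past noisy answers, so the swap-partner argument must be carried out leaf-by-leaf of the comparison tree, and the single-element hiding of BMW16 must be replaced by a simultaneous argument over $\Omega(k)$ candidate pairs while keeping the likelihood-ratio bound clean under multiple simultaneous swaps. Once this generalization is in place, the resulting $\Omega(n\log k)$ lower bound matches the upper bound of \autoref{thm:topksidekicks} in the comparison model and, together with \autoref{thm:rankkupper}, establishes the strict separation between the value and comparison models claimed in this section.
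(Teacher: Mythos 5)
Your high-level plan — fix a uniformly random permutation, analyze the comparison tree leaf by leaf, and use a likelihood-ratio bound to argue that the algorithm cannot distinguish $\pi$ from a perturbed $\pi'$ whose top-$k$ set differs — is indeed the shape of the paper's argument, which generalizes the BMW'16 technique. But two of your key steps are not merely under-explained: as stated they would not go through, and both are places where the paper has to do genuinely different and more careful work.

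First, the ``light positions'' reduction. You argue that since $\sum_i Q_i = 2q \le 2cn\log k$, at least $n/2$ positions have $Q_i \le 4c\log k$, and then assert that ``with constant probability a constant fraction of both the top-$k$ positions and the near-miss positions are simultaneously light.'' The difficulty is that the algorithm is \emph{adaptive}: the positions it queries heavily are precisely the ones it believes to be near the boundary of the top-$k$, and the set $S(L)$ it outputs is chosen by the algorithm. An algorithm that spends a linear first pass finding good candidates and then concentrates $\Theta(\log k)$ queries on each of $\Theta(k)$ candidate positions makes the entire output set heavy (at cost $O(n + k\log k) \ll cn\log k$), and then no matched pair with a light top-$k$ endpoint exists. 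So conditioning on $L$ and on the posterior over $\pi$, the overlap between light positions and $T(\pi)$ (for correct $(\pi,L)$) can be empty. The paper avoids this by \emph{not} budgeting raw query counts per position. Instead (Lemma~\ref{lemma:W}) it controls the \emph{signed} drift $G(i,j)-B(i,j)$ of the likelihood ratio, decomposes it as a non-negative part $P(i,j)$ plus an unbiased lazy walk $Q(i,j)$, bounds $\E{\sum P}$ and $\E{\sum Q^2}$ globally (each query feeds at most $k$ of the pairs), and uses Markov plus pigeonhole over all $(n-k)k$ pairs $(i,j)\in[k+1,n]\times[1,k]$. This shows that a constant fraction of the pairs have $G(i,j)-B(i,j)\le\gamma=O(\log k)$ regardless of how the algorithm allocates queries, and it holds pointwise over every fixed $\pi$, which is exactly what is needed against an adaptive adversary.

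Second, the final ``summing error contributions over matched pairs via a second-moment / union-bound-style argument'' is where the real content lies and it is currently missing. You cannot simply add up per-pair $\Omega(1)$ errors: the events are not disjoint, and distinct correct $\pi$'s can map to the same perturbed $\sigma$, so a naive sum double-counts. The paper's route is a careful change of variables. Write $\Pr{\pi,L}\le\frac{2^\gamma}{|S_{\pi,L}|}\sum_{(i,j)\in S_{\pi,L}}\Pr{\pi^{i,j},L}$ using the likelihood-ratio bound and the event $W$, then swap the order of summation over $(\pi,L,i,j)$ and $(\sigma,L)$, and invoke the crucial combinatorial fact that for fixed $(\sigma,L)$ the number of triples $(\pi,i,j)$ with $\pi^{i,j}=\sigma$ and $(\pi,L)$ correct is at most $n-k$ — because $L$ pins down which element of $\{\sigma_1,\dots,\sigma_k\}$ must be moved out, and $i\in[k+1,n]$ is the only remaining freedom. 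Combined with $|S_{\pi,L}|\ge(1-2\beta)(n-k)k$, this gives a correctness bound of $\frac{\delta}{2}+\frac{2^\gamma}{(1-2\beta)k}$, which is $<\delta$ for $\gamma$ a small constant times $\log k$. You would need an analogous ``preimage size'' lemma for your transpositions restricted to near-miss ranks; the structure is different (your preimages would typically have size $\le 1$ but you also have only $\Theta(k)$ candidate pairs rather than $\Theta((n-k)k)$), and whether the bookkeeping closes has to be checked explicitly rather than gestured at with ``second moment / union bound.''

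One further, more minor remark: the paper perturbs $\pi$ not by a transposition but by removing $\pi_i$ (some rank $i>k$) and re-inserting it at rank $j\le k$, shifting everything in between. This choice is what makes the preimage count above exactly $n-k$ and keeps $G(i,j)-B(i,j)$ supported only on comparisons touching $\pi_i$, which is cleaner for the concentration lemma. Your transposition changes the ranks of both endpoints and hence touches comparisons involving either one, so the per-query contribution to the pigeonhole budget would be larger and the constants in Lemma~\ref{lemma:W} would need to be reworked. This is not a fatal flaw, just another place where the sketch would need to be filled in, and where the paper's specific choice of perturbation is doing quiet work.
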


%
%
%

We would like to point out that for several of our lower bounds, the
model is as follows: with probability $2/3$ the answer is correct, with
probability $1/3$ the adversary picks an arbitrary answer, possibly the
correct one if it makes things harder for the algorithm.
\autoref{thm:uniquelower} holds even if the adversary has less power
and is forced to provide an incorrect answer with probability 1/3.

\subsection{Rank-$k$ is Strictly Easier in the Value Model }

\subsubsection{Upper Bound (value model) -
  Proof of \autoref{thm:rankkupper}}

We now describe the algorithm. 
We will query each element in blocks of increasing length to which we refer to as super-queries.  
The subroutine \textsc{Super-Query($x,\mu$)}  works as follows. Query $x$ exactly $\mu$ times and return the most frequent answer (ties broken arbitrarily).

Let $n_\ell(x)$ denote the number of queries to element $x$ after $\ell$ super-queries. 
We will ensure that $n_\ell(x) = 12 \sum_{i=1}^{\ell-1}  2^i = 12\cdot (2^\ell - 1)$.
Let $v_\ell(x) $ be the most frequent response (value) after $\ell$ super-queries (ties broken arbitrarily). 
We say that there is a \emph{collision} between two elements $x,y$ if
$v_\ell(x) = v_\ell(y)$. In words,
the most frequent value return for $x$ and $y$ is the same.

\begin{algorithm}\caption*{$\operatorname{\bf Algorithm}$\ $\textbf{Distinct-}\Topk(X,k)$\  \ \ \ \ (see \autoref{thm:rankkupper})
\\
{\bf input:}  set $X$, integer $k$\\
{\bf output:} \textsc{Top}-$k$ elements\\
{\bf error probability:}  $1/3$
}
\begin{algorithmic}
 \FOR{element $x$}
  \STATE $n(x) \gets  12$
  \STATE $v(x)\gets$ \textsc{Super-Query($x,n(x)$)}
  \ENDFOR
  \REPEAT 
   \IF{there exists $x,y, x\neq y$ such that $v(x) = v(y)$}
  		\IF{ $n(x) \leq n(y) $ }
  		  		\STATE $n(x) \gets 2 n(x)$
  		\STATE $v(x)\gets$ \textsc{Super-Query($x,n(x)$)}
  		\ELSE
  		  		\STATE $n(y) \gets 2 n(y)$
  		\STATE $v(y)\gets$ \textsc{Super-Query($y,n(y)$)}
    	\ENDIF
  \ELSE
  		 \STATE take an element $x$ with $v(x) \in \{ v_1, v_2, \dots, v_k\} $ such that $n(x)< 20 \log n$  
  		\STATE $n(x) \gets 2 n(x)$
  		\STATE $v(x)\gets$ \textsc{Super-Query($x,n(x)$)}
  \ENDIF
  \IF {$\sum_{x} n(x) > 34000(n +k \log n)$}
  \STATE \Output FAIL
  \ENDIF
  \UNTIL {for every $i\in[k]$ there exists exactly one element $v(x)$ with $v(x)=v_i$ and $n(x)\geq  20 \log n$  }
  \STATE \Output all $\{ x \colon v(x) \in  \{ v_1, v_2, \dots, v_k\} \text{ and } n(x) \geq 20 \log n\}$
\end{algorithmic}
\end{algorithm}
\FloatBarrier


\begin{proof}[Proof of \autoref{thm:rankkupper}]
   We bound the query complexity of the super-queries by constructing a coupling to sequences of $0/1$ variables which correspond to  the lies and true responses to queries. This allows us to abstract away from the exact content of the lies (values). For the coupling to be well-defined, we allow the adversary to adaptively choose a value whenever there is a lie.
   
   For every element $u$ of the input consider an infinite sequence $S_u(1),S_u(2), S_u(2), \dots$, where each $S_u(i)$ is drawn i.i.d. as follows.
   \[S_u(i)= \begin{cases} 
	0 \text{ (truth) } & \text{ w.p. $2/3$} \\
	1 \text{ (lie) } & \text{ otherwise} \\
 \end{cases}
\] 
Observe that the probability that a sequence of length $12 \ell$ contains more ones (lies) than zeros is bounded by $3^{-\ell}$: let $X$ denote the number of lies among the $12\ell$ queries. We have $\E{X}=4 \ell$.
Thus, by Chernoff inequality, $\Pr{X\geq \frac{3}{2} \E{X} } \leq \exp \left( -\frac{3}{2} \frac{\E{X}}{3}   \right)\leq 3^{-\ell}$.

Consider a sequence $\Xi_u$ consisting of infinitely many sequences $\Xi_u(1), \Xi_u(2), \Xi_u(3), \dots$ (in this order),  where $\Xi_u(i)$ is of length $12\cdot  2^{i-1}$. 

Divide the sequence $S_u(1),S_u(2), S_u(3), \dots$ into  subsequences $\Xi_u(1), \Xi_i(2), \Xi_u(3), \dots$.
Let $X_u$ denote the first subsequence $\Xi_u(i)$ from which on all subsequences starting from $\Xi_u(i)$ to $\Xi_u(n^4)$ contain more zeroes (representing truthful responses) than ones; we set $X_u \infty$ if no such sequence exists.
Let $\mathcal{E}$ be the event that $X_u < n^4$ for  all $n$ elements. Note that $\Pr{\mathcal{E} } \geq 1-n^{-4}$.
In order for $X_u > \ell$ to hold, it must be that at least one of the sequences $\Xi_u(j),\in [\ell,n^4]$ contained more zeros than ones. Thus, by Union bound,
\[ \Pr{X_u > \ell}  = \sum_{i=\ell}^{n^4} \frac{1}{3^i} \leq \frac{2}{3^\ell} . \]

Let $Y_u$ denote the total length of all subsequences up to (including)  $X_u$. 
From the above we get, by law of total expectation,
\begin{align*}
 \E{Y_u~|~ \mathcal{E}} &= \sum_{\ell=0}^{n^4} \E{Y_u~|~X_u=\ell}\Pr{X_u=\ell}  \leq 
\sum_{\ell=0}^{n^4} \E{Y_u~|~X_u=\ell}\Pr{X_u > \ell - 1} \\
&\leq
12 + \sum_{\ell=1}^\infty 12 \cdot (2^{\ell}-1) \cdot \frac{2}{3^{\ell-1}}  \leq 200.
\end{align*}

Therefore, by linearity of expectation and Markov inequality, 

\begin{align}\label{eq:goodqueries} \Pr{ \sum_{i=1}^n {Y_i} \geq 10 \cdot 200 \cdot n ~\mid~\mathcal{E}} \leq \Pr{ \sum_{i=1}^n {Y_i} \geq 10 \E{ \sum_{i=1}^n {Y_i}~|~\mathcal{E}} ~\mid~\mathcal{E} } \leq \frac{1}{10}.	
\end{align}

 The connection of the sequence $\Xi_u$ to element $u$ is as follows.
 After the $X_u$'th super-query to $u$, all further iterations return their true value. 
Therefore, there exists a coupling between the responses to $u$'th queries (and super-queries) and the 
infinite sequence $\Xi_u$; in particular, we are only interested in a prefix of $\Xi_u$.

\medskip
 We call all super-queries to $u$ after the $X_u$'th super-query \emph{bad}.  
 Intuitively, they are bad because $u$ has already revealed its true value.
 We call all other super-queries before and including the $X_u$'th super-query \emph{good}.
 
 The total query complexity $T$ of the algorithm is the sum of the query complexity due to good queries $G$ and bad queries $B$.
 By \eqref{eq:goodqueries}, the query complexity due to good queries is bounded by $O(n)$ w.p. at least $1-9/10-1/n^4$, by Union bound.
 Note that bad queries can only happen for two reasons.
 \begin{enumerate}[(i)]
 	\item verifying the identity of the elements pretending to be part of the \textsc{Top}-$k$ values and
 	\item whenever there is a collision for some value $v$, \ie two or more elements pretend to have the same value $v$.
 \end{enumerate}
 We can bound the query complexity of the bad queries due to (i) by $80 k \log n$ -- as only the elements of \textsc{Top}-$k$ generate
bad queries that way.
 For (ii), we use the crucial property in the rank-$k$ problem  that there can only be one such element that truly  has 
 value $v$. In other words, when we query the true element $x$ with value $v$, then this is because an element $y$ that pretends to have value $v$ was queried and has $n(y)\geq n(x)$.
 Hence, whenever raise the counter of $x$ (the rightful element) to say $n(x)$,
 the counter of $y$ must have been at least $n(x)/2$. 
 A simple charging scheme  shows $B\leq 16 G + 80 k \log n$. 
	Thus, the total query complexity is bounded by $T=B+G \leq 17G +80 k \log n \leq 17(2000n) +80 k \log n 
	\leq   34000(n+k\log n) $ due to \eqref{eq:goodqueries} w.p. at least $9/10$.

	Furthermore, 
	each element that is part of the output is queried at least $20 \log n$ times. Thus,
	the probability that an element with value $v=v_i, i\in [k]$ is not part of the output is $1/n^2$, by Chernoff bounds (\eg \autoref{lemma:Chernoff}). 
	This shows that no incorrect element is part of the output; it remains to show that all \textsc{Top}-$k$ elements are part of the output.
	Observe that in the verification part, whenever there are fewer than $k$ elements that pretend to be  part of the \textsc{Top}-$k$, then,  by  pigeonhole argument, we must have a collision. This ensures termination as the algorithm never reaches a state where there is no collision and  fewer than $k$ elements that pretend to be  part of the \textsc{Top}-$k$.
	
	 Union bound over all elements  and taking all other sources of error into account 
	 (i.e, the error the algorithm terminates prematurely due to too many queries and the probability that $\mathcal{E}$ does not hold) yields that the total error probability is bounded by $1/10 + n/n^2 + \Pr{\bar{\mathcal{E}}} \leq 1/3$.
	
\end{proof}
We note that the our algorithm is not round-efficient. It remains an open problem to find an algorithm using $O(n + k\log n)$ queries having 
a good round complexity. 

\subsubsection{Lower Bound (comparison model) - Proof of \autoref{thm:uniquelower}}

Our lower bound (\autoref{thm:uniquelower}) is a generalization of the decision tree technique from
Braverman, Mao and Weinberg's paper \cite{BMW16} to $k>1$. 

Assuming that the number of queries is $o(n \log k)$, we can show that no algorithm
can distinguish between the ``true'' underlying 
permutation of the input
and some permutations whose set of $k$ largest elements differs by one element. Hence, we
can show that any algorithm is more likely to output
an incorrect partition.

The proof is by contradiction. 
Consider an input permutation $\pi$, where $\pi_i$ denotes the element in position $i$. 
The goal of $k$-max is to output $\{ \pi_1,\pi_2,\ldots ,\pi_k\}$.
Assume that each query has error probability $1/3$. 

Up to symmetry we may assume that $k\leq n/2$.
Let $A_0$ be an algorithm for $k$-max with expected number of queries at most $\ell_0$ and probability of being correct at least $(1-\delta_0)$ where 
\[\ell_0 = (c/\delta_0)\beta(1-2\delta_0) (n-k) \log_2\left(\frac{81k}{(1-2\delta_0)^2}\right)\]
 for $\beta:=1/3$ a fixed constant\footnote{For $k$-max, $\beta$ must be such that $1-2\beta$ is a positive constant. 
 } and $c$ some small enough constant, so that $\ell_0=\Theta(n\log (k/(1-2\delta_0)))$.

Let $\ell:= \delta_0 \ell_0$ and $\delta:=1-2\delta_0$.
By Markov's inequality (to prune long executions) and padding (to lengthen short executions), there exists an algorithm $A$ that has probability of being correct at least $\delta$ and uses exactly $\ell$ queries on every input.

Consider an input permutation $\pi$ and a root-to-leaf execution path $L$ in the decision tree of $A$. For any $(i,j)\in [k+1,n]\times[1,k]$, 
we define $G(i,j)$  (resp. $B(i,j)$)  to be the number of comparisons between element $\pi_i$ and elements in $\{ \pi_{j},\pi_{j+1},\ldots, \pi_{i-1} \}$  that have the correct (resp. incorrect) outcome. Let $W$ denote the event that there are at most $2\beta (n-k)k$  pairs $(i,j)\in [k+1,n]\times [1,k]$ such that $G(i,j)-B(i,j) > \gamma$, where  $\gamma := 2\frac{1}{(n-k)}\frac{1}{\beta}  \frac{4}{\delta}  \frac{\ell}{3}  =  \frac{8c\log_2(81k/\delta^2)}{3}=\Theta(\log(k/\delta))$.

\begin{lemma}\label{lemma:W}
For any permutation $\pi$,  $\Pr{W~|~\pi}\geq 1-\delta/2$.
\end{lemma}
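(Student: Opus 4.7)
The plan is a clean deterministic counting argument; the probabilistic slack in the statement is comfortable room that the argument does not even need.

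First, I would observe that for \emph{any} root-to-leaf execution path $L$ of the depth-$\ell$ decision tree, every pair $(i,j)\in[k+1,n]\times[1,k]$ trivially satisfies $G(i,j) - B(i,j) \le G(i,j) \le Q(i,j)$, where $Q(i,j) := G(i,j) + B(i,j)$ is the total number of queries in $L$ comparing $\pi_i$ with some element in $\{\pi_j,\pi_{j+1},\ldots,\pi_{i-1}\}$. In particular, the occurrence of ``$G(i,j) - B(i,j) > \gamma$'' forces $Q(i,j) > \gamma$, so it suffices to upper bound the number of pairs with $Q(i,j) > \gamma$.

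Second, I would establish the key counting bound
\[
\sum_{(i,j) \in [k+1,n] \times [1,k]} Q(i,j) \;\le\; k\,\ell,
\]
which holds pathwise. Indeed, a single query comparing positions $\{a,b\}$ with $a < b$ can affect $Q(i,j)$ only when $i=b$ (so that $b>k$) and $j \in [1,\min(a,k)]$, so it contributes at most $\min(a,k) \le k$ to the sum. Summing over the $\ell$ queries made along $L$ yields the claim.

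Third, by pigeonhole, the number of pairs $(i,j)$ with $Q(i,j) > \gamma$ is at most $k\ell/\gamma$. Substituting the parameter values $\ell = c\beta\delta(n-k)\log_2(81k/\delta^2)$ and $\gamma = \frac{8c}{3}\log_2(81k/\delta^2)$ (obtained by unrolling the definitions given just before the lemma) gives
\[
\frac{k\,\ell}{\gamma} \;=\; \frac{3}{8}\,\beta\,\delta\,(n-k)\,k \;\le\; 2\beta(n-k)\,k,
\]
using $\delta \le 1$. Hence $W$ holds for \emph{every} execution path $L$, so $\Pr{W \mid \pi} = 1 \ge 1 - \delta/2$, as required.

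The main obstacle is not any probabilistic concentration argument (Hoeffding/Chernoff on the noise is not needed here), but rather the bookkeeping: (i) correctly verifying that each query contributes to at most $k$ of the pairs $(i,j)$ of interest, and (ii) checking that the parameter choices for $\ell$ and $\gamma$ produce the clean cancellation $k\ell/\gamma = \Theta\bigl(\beta\delta(n-k)k\bigr)$ needed to dominate $2\beta(n-k)k$. The lemma's probabilistic phrasing is presumably chosen to parallel companion statements elsewhere in the proof; a deterministic bound is exactly what falls out of the reasoning above.
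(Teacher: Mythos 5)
Your proof is correct, and it takes a genuinely different, more elementary route than the paper's. The paper follows the BMW16 template: it decomposes $G(i,j)-B(i,j)$ as $P(i,j)+Q(i,j)$ where $P$ captures the biased drift and $Q$ is a mean-zero random walk, bounds $\E{\sum P}\le \ell k/3$ and $\E{\sum Q^2}\le 2\ell k/3$, applies Markov's inequality twice (losing a factor $4/\delta$ each time), and then concludes by pigeonhole. You replace all of this with the trivial pathwise bound $G-B\le G+B$ together with the deterministic counting identity $\sum_{(i,j)}\bigl(G(i,j)+B(i,j)\bigr)\le k\ell$ (each query charges at most $k$ pairs), so that pigeonhole gives at most $k\ell/\gamma=\tfrac{3}{8}\beta\delta(n-k)k<2\beta(n-k)k$ offending pairs on \emph{every} execution path, yielding $\Pr{W\mid\pi}=1$ rather than merely $1-\delta/2$. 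What the paper's decomposition buys is that $G-B$ has expected increment $1/3$ per comparison rather than $1$, which in principle supports a $\gamma$ roughly three times smaller; but that gain is more than offset by the $\Theta(1/\delta)$ slack introduced by the two Markov steps, so with the paper's choice of $\gamma$ your deterministic argument has comfortable room. One small point where care is needed in your route, and which you handled correctly via $\min(a,k)\le k$, is that a query between positions $a$ and $b$ with $k<a<b$ still charges $G(b,j)+B(b,j)$ for all $j\le k$; the paper's exposition only explicitly discusses queries against positions $j'\le k$.
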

\begin{proof}
We generalize the proof presented in \cite{BMW16}. Throughout this proof, permutation $\pi$ is fixed and all probabilities and expectations are implicitly conditioned on $\pi$.

How does $G(i,j)-B(i,j)$ change during the execution of $A$ on input $\pi$? Initially $G(i,j)=B(i,j)=0$. Whenever there is a comparison between $\pi_i$ and an element $\pi_{j'}$ with $j\leq j'\leq k$, then with probability $2/3$ the result is correct and $G(i,j)-B(i,j)$ increases by 1; with the complementary probability $1/3$ the result is incorrect and $G(i,j)-B(i,j)$ decreases by 1. 

Equivalently, the answer to a query is random unbiased (equally likely to be correct or incorrect) with probability $2/3$, and correct with the complementary probability $1/3$. Thus $G(i,j) - B(i,j)$ has the same distribution as $P(i,j) + Q(i,j)$, where 
we define variables
$P(i,j)$, $Q(i,j)$ as follows:
Initially, $P(i,j)=Q(i,j)=0$. Whenever there is a comparison between $\pi_i$ and an element $\pi_{j'}$ with $j\leq j'\leq k$, then with probability $1/3$, $Q(i,j)$ increases by 1, with probability $1/3$, $Q(i,j)$ decreases by 1, and with the remaining probability $1/3$, $P(i,j)$ increases by 1.

Each comparison query between $\pi_i$ and $\pi_{j'}$ for  $(i,j')\in [k+1,n]\times [1,k]$
affects $P(i,j)$ for at most $k$ values of $j$, namely, $j \in  [1,j']$.
Since $A$ uses $\ell$ queries, the variables $P(i,j)$, $(i,j)\in[k+1,n]\times [1,k]$ are considered at most $\ell k$ times in total, and so
$
\E{ \sum_{i=k+1}^n \sum_{j=1}^k P(i,j)    } \leq {\ell k}/{3}. 
$
By Markov's inequality,
\begin{align}\label{eq:P}
	\Pr{\sum_{i=k+1}^n \sum_{j=1}^k P(i,j) \geq  \frac{4}{\delta} \cdot \frac{\ell  k}{3}  }  \leq  \frac{\delta}{4} .
\end{align}
Similarly,  the variables $Q(i,j)$, $(i,j)\in[k+1,n]\times [1,k]$ are considered at most $\ell k$ times in total. When there is a query affecting $Q(i,j)$, with probability $1/3$ the value of $Q(i,j)$ is unchanged, and with probability $2/3$ it changes by $\pm 1$.  So, if the current value of $Q^2(i,j)$ is $x^2$, then the next value of $Q^2(i,j)$ is, in expectation, $(1/3)x^2+(1/3)(x+1)^2+(1/3)(x-1)^2=x^2+2/3$. Thus
$
\E{ \sum_{i=k+1}^n \sum_{j=1}^k Q^2(i,j)    } \leq {2\ell k}/{3}. 
$
By Markov's inequality (applicable since $Q^2(i,j)$, unlike $Q(i,j)$, is always non-negative),
\begin{align}\label{eq:Q2}
	\Pr{\sum_{i=k+1}^n \sum_{j=1}^k Q^2(i,j) \geq  \frac{4}{\delta} \cdot \frac{2\ell k}{3}  }  \leq \frac{\delta}{4} .
\end{align}
Consider the event $\cal E$ that 
$\sum_{i=k+1}^n \sum_{j=1}^k P(i,j) <  \frac{4}{\delta} \cdot \frac{\ell  k}{3}$
and
$\sum_{i=k+1}^n \sum_{j=1}^k Q^2(i,j) <  \frac{4}{\delta} \cdot \frac{2\ell k}{3}$.
Combining Equations~(\eqref{eq:P}) and~(\eqref{eq:Q2}), 
\begin{align}\label{eq:event}
\Pr{\cal E}\geq 1-\delta/2.
\end{align}

To finish the proof, we will now prove that $\cal E$ implies $W$. Assume $\cal E$ holds. 
Since $P(i,j)$ is non-negative, by the pigeonhole principle $\sum_{i,j} P(i,j) <  \frac{4}{\delta} \cdot \frac{\ell  k}{3}$ implies that at most $(n-k)k\beta$ pairs $(i,j)$ are such that $P(i,j)>  \frac{1}{(n-k)k}\frac{1}{\beta}  \frac{4}{\delta}  \frac{\ell k}{3} $.
Similarly, since $Q^2(i,j)$ is non-negative, by the pigeonhole principle  $
\sum_{i,j} Q^2(i,j) <  \frac{4}{\delta} \cdot \frac{2\ell k}{3} $ implies that at most $(n-k)k\beta$ pairs $(i,j)$ are such that 
$Q^2(i,j)  >  \frac{1}{(n-k)k}\frac{1}{\beta} \frac{4}{\delta}  \frac{2\ell k}{3} $.

Summing and taking the complement, there are least $(n-k)k(1-2\beta)$ pairs $(i,j)\in[k+1,n]\times  [1,k]$ such that both conditions hold:
$$P(i,j)\leq  \frac{1}{(n-k)k}\frac{1}{\beta}  \frac{4}{\delta}  \frac{\ell k}{3} 
\text{ and }
Q^2(i,j)  \leq  \frac{1}{(n-k)k}\frac{1}{\beta} \frac{4}{\delta}  \frac{2\ell k}{3}.$$
For those pairs 
we have, using the definition of $\gamma = 2\frac{1}{(n-k)}\frac{1}{\beta}  \frac{4}{\delta}  \frac{\ell}{3}  $,
\[
G(i,j)-B(i,j) = P(i,j)+Q(i,j) \leq  P(i,j)+|Q(i,j)| \leq  \frac{1}{(n-k)}\frac{1}{\beta}  \frac{4}{\delta}  \frac{\ell}{3}  
+
\sqrt{\frac{1}{(n-k)}\frac{1}{\beta} \frac{4}{\delta}  \frac{2\ell}{3} }
\leq \gamma,
\]
 and so event $\cal E$ implies $W$, as desired.  
\end{proof}

\medskip
\noindent
To prove the theorem, we start by writing
\[\Pr{A~\text{outputs correctly}} 
=
\sum_{\pi,L \text{ s.t. correct}}  \Pr{\pi,L}\\
\leq 
\Pr{\overline{W}}+
\sum_{\pi,L \text{ s.t. correct and }W}  \Pr{\pi,L}. 
\]
By \autoref{lemma:W}, $\Pr{\overline{W}}\leq \delta/2$. We now turn to the second term. 
Consider a permutation $\pi$ and execution $L$ such that $A$ is correct and property $W$ holds. 
Consider a pair $(i,j)\in[k+1,n]\times [1,k]$ such that  $G(i,j)-B(i,j) \leq \gamma$, and let $\pi^{i,j}$ denote the permutation obtained from $\pi$ by taking element $\pi_i$ out and re-inserting it so that its resulting position is $j$: 
\[\pi^{i,j}=(\pi_1,\pi_2,\ldots, \pi_{j-1},\pi_i,\pi_j,\ldots, \pi_{i-1},\pi_{i+1},\ldots ,\pi_n).\]
Since the distribution of input permutations is uniform, $\Pr{\pi}=\Pr{\pi^{i,j}}$ and we can write:
\begin{align}\label{eq:exchange}\Pr{ \pi, L}=\Pr{\pi^{i,j},L}\frac{\Pr{ \pi, L}}{\Pr{\pi^{i,j},L}}=\Pr{\pi^{i,j},L}\frac{\Pr{L|\pi}}{\Pr{L|\pi^{i,j}}}.
\end{align}
The probabilities of execution $L$ for inputs $\pi$ and $\pi^{i,j}$ only differ for the comparisons between element $\pi_i$ and elements in positions $[j,i-1]$, and $G(i,j)-B(i,j)$ has opposite values for $\pi^{i,j}$ and for $\pi$, so
\[\frac{\Pr{L|\pi}}{\Pr{L|\pi^{i,j}}}=\frac{(2/3)^{G(i,j)}(1/3)^{B(i,j)}}{(1/3)^{G(i,j)}(2/3)^{B(i,j)}}=2^{G(i,j)-B(i,j)}. \]
Combining and remembering that $G(i,j)-B(i,j)\leq\gamma$, we deduce
\[\Pr{ \pi, L}\leq \Pr{\pi^{i,j},L}2^\gamma.\]
 
Let  
$S_{\pi,L}$ denote the set of pairs $(i,j)\in[k+1,n]\times [1,k]$ such that $G(i,j)-B(i,j) \leq \gamma$. By property $W$ we have $|S_{\pi,L}|\geq  (1-2\beta)(n-k)k$, and so:
\[\sum_{\pi,L \text{ s.t. correct and }W}  \Pr{\pi,L}\leq 
\sum_{\pi,L \text{ s.t. correct and }W} \frac{1}{ (1-2\beta)(n-k)k} \sum_{(i,j)\in S_{\pi,L}} \Pr{\pi^{i,j},L}2^\gamma .\]
Rewriting:
\[\sum_{\begin{subarray}{c}{\pi,L \text{ s.t. correct and }W} \\ {(i,j)\in S_{\pi,L}}\end{subarray}}
\Pr{\pi^{i,j},L} = 
\sum_{\sigma,L} \Pr{\sigma,L} 
\left|\left\{ (\pi, i, j ): \begin{subarray} ~  \pi ,  L \text{ correct and }W, \\ (i,j)\in S_{\pi,L}, \\ \pi^{i,j}=\sigma \end{subarray} \right\} \right|.
\]
Given $L$ and $\sigma$, the number of permutations $\pi$ that are correct for $L$ and positions $i,j$ and such that $\pi^{i,j}=\sigma$ is at most $n-k$ since $L$ determines which element in $\{ \sigma_1,\ldots,\sigma_k\} $ needs to be removed in order for the output to be correct, and $\pi$ is obtained by taking that element out of $\sigma$ and re-inserting it back in its original position. Thus:
\[ \sum_{\begin{subarray}{c}{\pi,L \text{ s.t. correct and }W} \\ {(i,j)\in S_{\pi,L}}\end{subarray}}
\Pr{\pi^{i,j},L}  \leq (n-k) \sum_{\sigma,L} \Pr{\sigma,L}= n-k .\]
Combining everything, we write:
\[\Pr{A~\text{outputs correctly}}\leq  \frac{\delta}{2}  +\frac{(n-k)2^\gamma}{ (1-2\beta)(n-k)k}   \leq 
\frac{\delta}{2}  +{3\left(\frac{ 81k}{\delta^2 }\right)^{8c/3 -1}},
\]
 since we recall  that $\beta=1/3$ and $\gamma  =  \frac{8c\log_2( 81k/\delta^2)}{3}$; for $c=3/16$ we have
 \[\Pr{A~\text{outputs correctly}}\leq 
\frac{\delta}{2}  +{3 \sqrt{ \frac{\delta^2 }{ 81k}}} <\delta,
\]
a contradiction.
This completes the proof of \autoref{thm:kmaxlower}.

%
%
%
%
%
%
%
%
%
%
%


\subsection{Reducing from the Value Model to the Comparison Model---Proof of \autoref{lemma:reduction}}\label{sec:relationship}

%
%
  We prove this by reduction.  The reduction is a step-by-step simulation of $A$ by $B$. If, at a given time during the execution, algorithm $A$ does the comparison ``$x\geq y$?'' then algorithm $B$ simulates it by first doing the following querying $value(x)$ and $value(y)$ 9 times each; 
  $B$ then takes the majority responses (ties broken arbitrarily) of each query $M(x)$ and $M(y)$ and proceeds as follows. If $M(x)\geq M(y)$, then it follows the  ``yes'' branch of the execution of $A$; 
  else ($M(x)<M(y)$) it follows the ``no'' branch of the execution of $A$.
  
  Analysis: Note that $\Pr{M(x)=value(x)}=\Pr{M(y)=value(y)} \geq 17/20$. Then the probability that queries $M(x)=value(x)$ and $M(y)=value(y)$ both receive correct answers is at least $2/3$, so that translates into executing $A$ against an adversary that has error probability bounded by $1/3$. By definition of the noisy comparison-model, the output of $A$, and hence of $B$, is correct with probability at least $1-q$.


\printbibliography

\appendix
\addtocontents{toc}{\protect\setcounter{tocdepth}{0}}

\section{Basic concentration results}
Our proofs use the following standard multiplicative Chernoff bounds.
\begin{lemma}\label{lemma:Chernoff}
	Suppose  $X_1, X_2, \dots, X_r$ are independent $0/1$ random variables. Let $X=\sum_i X_i$ and $\mu=\E{X}$. Then, for any $\delta >0$,
	\begin{itemize}
	\item $\Pr{X\leq (1-\delta )\mu}\leq exp(-\delta^2\mu/2)$, $0\leq\delta\leq 1$
	\item $\Pr{X\geq (1+\delta )\mu}\leq exp(-\delta^2\mu/3)$, $0\leq\delta\leq 1$
	\item $\Pr{X\geq (1+\delta )\mu}\leq exp(-\delta\mu/3)$, $\delta\geq 1$
	\end{itemize}	
\end{lemma}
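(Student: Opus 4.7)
The plan is to prove each bound by the standard Chernoff--Cramér moment-generating-function technique. For any $t > 0$ and any threshold $a$, since $e^{tX}$ is non-negative, Markov's inequality gives
\[
\Pr{X \ge a} \;=\; \Pr{e^{tX} \ge e^{ta}} \;\le\; e^{-ta}\,\E{e^{tX}}.
\]
By independence of the $X_i$, $\E{e^{tX}} = \prod_i \E{e^{tX_i}}$. Since each $X_i$ is Bernoulli with mean $p_i = \E{X_i}$, one has $\E{e^{tX_i}} = 1 + p_i(e^t - 1) \le \exp\!\bigl(p_i(e^t - 1)\bigr)$ via $1+x \le e^x$. Multiplying over $i$ yields the master bound $\E{e^{tX}} \le \exp\!\bigl(\mu(e^t - 1)\bigr)$, which drives all three tail estimates.

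For the upper tail with $a = (1+\delta)\mu$ and $\delta > 0$, I would choose $t = \ln(1+\delta) > 0$, which minimizes the right-hand side and gives the clean form
\[
\Pr{X \ge (1+\delta)\mu} \;\le\; \left(\frac{e^{\delta}}{(1+\delta)^{1+\delta}}\right)^{\!\mu}.
\]
To recover the two stated forms, I would verify the elementary inequalities $(1+\delta)\ln(1+\delta) - \delta \;\ge\; \delta^2/3$ for $0 \le \delta \le 1$, and $(1+\delta)\ln(1+\delta) - \delta \;\ge\; \delta/3$ for $\delta \ge 1$. Both follow from comparing derivatives at $\delta = 0$ (respectively $\delta = 1$) and checking monotonicity of the difference, or by Taylor expansion in the first case.

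For the lower tail with $a = (1-\delta)\mu$ and $0 \le \delta \le 1$, I would apply Markov's inequality to $e^{-tX}$ for $t > 0$ and optimize with $t = -\ln(1-\delta) > 0$, obtaining
\[
\Pr{X \le (1-\delta)\mu} \;\le\; \left(\frac{e^{-\delta}}{(1-\delta)^{1-\delta}}\right)^{\!\mu}.
\]
Taking logarithms and using the pointwise inequality $(1-\delta)\ln(1-\delta) + \delta \;\ge\; \delta^2/2$ on $[0,1]$ (which again reduces to a one-variable calculus check: the difference vanishes at $\delta = 0$ with non-negative derivative on $[0,1]$) gives the desired $\exp(-\delta^2\mu/2)$ bound.

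The only real technical content is the three elementary convexity/calculus inequalities at the end; everything else is mechanical manipulation of expectations, so I do not anticipate any serious obstacle. I would present the three inequalities as short lemmas (or cite them as standard) rather than grind through the derivative calculations in the body of the proof.
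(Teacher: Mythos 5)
The paper does not prove this lemma: it is stated in the appendix under the heading \emph{Basic concentration results} with the preamble ``Our proofs use the following standard multiplicative Chernoff bounds,'' and no argument is given. Your proposal supplies the canonical MGF/Markov proof of exactly that standard result, and it is correct: the master bound $\E{e^{tX}}\le\exp(\mu(e^t-1))$, the choices $t=\ln(1+\delta)$ and $t=-\ln(1-\delta)$, and the resulting $\bigl(e^{\pm\delta}/(1\pm\delta)^{1\pm\delta}\bigr)^\mu$ forms are all right, and the three scalar inequalities you reduce to are all true.

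One small caution on the phrase ``checking monotonicity of the difference'' for the upper-tail constant $1/3$: for $f(\delta)=(1+\delta)\ln(1+\delta)-\delta-\delta^2/3$ one has $f(0)=f'(0)=0$ but $f''(\delta)=\tfrac{1}{1+\delta}-\tfrac{2}{3}$ changes sign at $\delta=1/2$, so $f'$ is \emph{not} monotone on $[0,1]$. The conclusion still holds because $f'$ increases from $0$ on $[0,1/2]$ and then decreases to $f'(1)=\ln 2 - 2/3>0$, so $f'\ge 0$ throughout and hence $f\ge 0$; but the verification is a two-step argument, not a single monotonicity check. The lower-tail inequality $(1-\delta)\ln(1-\delta)+\delta\ge\delta^2/2$ does go through cleanly by monotonicity since there $h''(\delta)=\delta/(1-\delta)\ge 0$ on $[0,1)$. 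This is a presentational nit, not a gap; the proof is sound.
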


\end{document}

\section{Old}

Given the multiplicative Chernoff bounds, we derive the following bounds.
\begin{lemma}\label{lem:och}
	Let  $X_1, X_2, \dots, X_r$ be a collection of identically distributed independent $0/1$ variables with 
	$\E{X_i}={c\ell}/{r} $ for some $c>0$ which need not be a constant. Let $X=\sum_i X_i$.
We have for $\ell\in (0,r]$ and  $c < 1$
\[ \Pr{X\geq \ell} \leq \exp(- (1-c)^2\ell/3)\]

\end{lemma}

\begin{proof}

	First observe that $\E{X} = c \ell$.
	Let $\varepsilon=\frac{\ell- \E{X}}{\E{X}}$ and observe that $\varepsilon >0$.
	If $\varepsilon \leq 1$, we have, by Chernoff bounds,
	\begin{align*}
\Pr{X\geq \ell} &=\Pr{X\geq (1+\varepsilon) \E{X}} \leq \exp\left(- \frac{\E{X} \varepsilon^2 }{3} \right)
= \exp\left(- \frac{(\ell- \E{X})^2 }{3\E{X}} \right) \\
&\leq \exp\left(- \frac{ (1-c)^2\ell^2 }{3\ell} \right) = \exp\left(- \frac{ (1-c)^2\ell }{3} \right) .
	\end{align*}
	
If $\varepsilon > 1$, then
	\begin{align*}
\Pr{X\geq \ell} &=\Pr{X\geq (1+\varepsilon) \E{X}} \leq \exp\left(- \frac{\E{X} \varepsilon }{3} \right)
\leq \exp\left(- \frac{\ell- \E{X} }{3} \right) \\
&\leq \exp\left(- \frac{ (1-c)\ell }{3} \right) \leq \exp\left(- \frac{(1-c)^2\ell }{3} \right) .
	\end{align*}

	\end{proof}